\newif\ifdraft\draftfalse  
\newif\ifanon\anonfalse    
\newif\ifappendices\appendicestrue

\PassOptionsToPackage{usenames,dvipsnames,svgnames,table}{xcolor}
\documentclass[acmsmall,screen]{acmart}
%

\settopmatter{}

\setcopyright{rightsretained}
\acmPrice{}
\acmDOI{10.1145/3341699}
\acmYear{2019}
\copyrightyear{2019}
\acmJournal{PACMPL}
\acmVolume{3}
\acmNumber{ICFP}
\acmArticle{95}
\acmMonth{8}

\usepackage[usenames,dvipsnames,svgnames,table]{xcolor}
\usepackage[capitalise]{cleveref}
\usepackage{amsmath}
\usepackage{makecell}
\usepackage{nccmath}
\usepackage{mathtools}
\usepackage{bussproofs}
\usepackage{varwidth}
\usepackage{amsthm}
\usepackage{csvsimple}
\usepackage{thmtools,thm-restate}
\usepackage{changepage}
\usepackage{booktabs}
\usepackage{amssymb}
\usepackage{enumitem}
\usepackage{multirow,bigdelim}
\usepackage{multicol}
\usepackage{siunitx}
\usepackage{listings}
\usepackage{letltxmacro}
\usepackage{sansmath}
\usepackage{url}
\usepackage{flushend}
\usepackage{microtype}
\usepackage[utf8]{inputenc}
\usepackage{mathpartir}
\usepackage{empheq}
\usepackage{array}
\usepackage{pgfplots}
\usepackage{stmaryrd}
\usepackage{courier}
\usepackage{qtree}
\usepackage[normalem]{ulem}
\usepackage{relsize}
\usepackage{tikz}
\usepackage{algorithm}
\usepackage[noend]{algpseudocode}
\usepackage{graphicx}
\usepackage{subcaption}
\usepackage{textcomp}
\usepackage{tabularx}
\usepackage{stackengine}
\usepackage{caption}
\usepackage{wrapfig}
\usepackage{remreset}
\usepackage{tabulary}
\usepackage{xspace}
\usepackage{bbm}

\newtheorem*{theorem*}{Theorem}
\newenvironment{centermath}
 {\begin{center}$\displaystyle}
 {$\end{center}}
\setcellgapes{4pt}

\lstset{ language=Caml, basicstyle=\upshape\sffamily,
keywordstyle=\upshape\sffamily\color{dkpurple}, keepspaces=true,
framexleftmargin=1ex, framexrightmargin=1ex, showstringspaces=true,
commentstyle=\itshape\rmfamily,
emph={rep,iterate,synth,collapse,perm,squash,normalize,using,ins,del,lens,let,get,put,rquot,lquot,id,swap,concat,or,disconnect,merge_left,merge_right,const,skip,require},
emphstyle=\upshape\sffamily\color{dkpurple}, 
columns=fullflexible,
mathescape, 
xleftmargin=1.5em,
stringstyle=\sffamily\color{dkblue},
aboveskip=.3em,
belowskip=.3em,
}
\makeatletter
     \let\lst@oldvisiblespace\lst@visiblespace
     \def\lst@visiblespace{\,\lst@oldvisiblespace\,}
\makeatother

\setlength{\belowdisplayskip}{0pt} \setlength{\belowdisplayshortskip}{0pt}
\setlength{\abovedisplayskip}{0pt} \setlength{\abovedisplayshortskip}{0pt}

\usetikzlibrary{
  er,
  matrix,
  shapes,
  arrows,
  positioning,
  fit,
  calc,
  pgfplots.groupplots,
  arrows.meta
}
\tikzset{>={Latex}}


\definecolor{dkblue}{rgb}{0,0.1,0.5}
\definecolor{dkgreen}{rgb}{0,0.6,0}
\definecolor{dkred}{rgb}{0.6,0,0}
\definecolor{dkpurple}{rgb}{0.2,0,0.4}
\definecolor{olive}{rgb}{0.4, 0.4, 0.0}
\definecolor{teal}{rgb}{0.0,0.5,0.5}
\definecolor{orange}{rgb}{0.9,0.6,0.2}
\definecolor{lightyellow}{RGB}{255, 255, 240}
\definecolor{vlightyellow}{RGB}{255, 255, 245}
\definecolor{lightgreen}{RGB}{235, 255, 240}
\definecolor{vlightgreen}{RGB}{247, 255, 250}
\definecolor{lightblue}{RGB}{245, 255, 255}
\definecolor{vlightblue}{RGB}{250, 255, 255}
\definecolor{teal}{RGB}{141,211,199}
\definecolor{darkbrown}{RGB}{121,37,0}

\setlength{\multicolsep}{0pt}

\makeatletter
\def\renewtheorem#1{%
  \expandafter\let\csname#1\endcsname\relax
  \expandafter\let\csname c@#1\endcsname\relax
  \gdef\renewtheorem@envname{#1}
  \renewtheorem@secpar
}
\def\renewtheorem@secpar{\@ifnextchar[{\renewtheorem@numberedlike}{\renewtheorem@nonumberedlike}}
\def\renewtheorem@numberedlike[#1]#2{\newtheorem{\renewtheorem@envname}[#1]{#2}}
\def\renewtheorem@nonumberedlike#1{  
\def\renewtheorem@caption{#1}
\edef\renewtheorem@nowithin{\noexpand\newtheorem{\renewtheorem@envname}{\renewtheorem@caption}}
\renewtheorem@thirdpar
}
\def\renewtheorem@thirdpar{\@ifnextchar[{\renewtheorem@within}{\renewtheorem@nowithin}}
\def\renewtheorem@within[#1]{\renewtheorem@nowithin[#1]}
\makeatother

\pgfplotsset{
    every non boxed x axis/.style={} 
}

\newenvironment{mathprooftree}
  {\varwidth{.9\textwidth}\centering\leavevmode}
  {\DisplayProof\endvarwidth}

\newcommand{\IE}{\emph{i.e.}\xspace}
\newcommand{\EG}{\emph{e.g.}\xspace}

\newcommand{\ETAL}{\emph{et al.}\xspace}

\theoremstyle{definition}
\renewtheorem{theorem}{Theorem}
\newtheorem{mylemma}{Lemma}
\renewtheorem{corollary}{Corollary}
\renewtheorem{conjecture}{Conjecture}
\renewtheorem{definition}{Definition}
\newtheorem{property}{Property}
\theoremstyle{plain}
\theoremstyle{remark}

\theoremstyle{remark}
\newtheorem{case}{Case}
\makeatletter
\@addtoreset{subcase}{case}
\@addtoreset{case}{mylemma}
\@addtoreset{case}{theorem}
\@addtoreset{case}{corollary}
\@addtoreset{case}{definition}
\@addtoreset{case}{definition}
\@removefromreset{theorem}{section}
\makeatother

\algnewcommand\algorithmicswitch{\textbf{switch}}
\algnewcommand\algorithmicmatch{\textbf{match}}
\algnewcommand\algorithmiccase{\textbf{case}}
\algnewcommand\algorithmicwith{\textbf{with}}
\algnewcommand\algorithmicforeach{\textbf{foreach}}
\algnewcommand\Assert[1]{\State \algorithmicassert(#1)}%
\algdef{SE}[SWITCH]{Switch}{EndSwitch}[1]{\algorithmicmatch\ #1\ \algorithmicwith}{\algorithmicend\ \algorithmicswitch}%
\algdef{SE}[CASE]{Case}{EndCase}[1]{$|~$ #1 $\rightarrow$}{\algorithmicend\ \algorithmiccase}%
\algdef{SE}[FOREACH]{ForEach}{EndForEach}[2]{\algorithmicforeach\ #1 $\in$ #2}{\algorithmicend\ \algorithmicforeach}%
\algdef{SE}[CaseTwo]{CaseTwo}{EndCaseTwo}[2]{$|~$ #1 $\rightarrow$ #2}{\algorithmicend\ \algorithmiccase}%
\algtext*{EndSwitch}%
\algtext*{EndCase}%
\algtext*{EndCaseTwo}%
\algtext*{EndSecondCase}%
\algtext*{EndForEach}%

\newcommand{\CF}[1]{\ensuremath{\mathsf{#1}}}         
\newcommand{\get}{\ensuremath{\CF{get}}\xspace}
\newcommand{\pput}{\ensuremath{\CF{put}}\xspace}
\newcommand{\create}{\ensuremath{\CF{create}}\xspace}

\newcommand{\PCF}[1]{\textproc{#1}}

\newcommand{\KW}[1]{\CF{\textcolor{dkpurple}{#1}}}
\newcommand{\RegexNS}{\ensuremath{\mathit{S}}}
\newcommand{\Regex}{\RegexNS\xspace}         

\newcommand{\BooleanAnd}{\ensuremath{~\wedge~}}
\newcommand{\BooleanOr}{\ensuremath{\vee}}
\newcommand{\BooleanImplies}{\ensuremath{\Rightarrow}}
\newcommand{\Rewrite}{\ensuremath{\rightarrow}}

\newcommand{\ConcatDNF}{\ensuremath{\odot}}
\newcommand{\ConcatDNFOf}[2]{\ensuremath{#1\ConcatDNF#2}}

\newcommand{\ConcatSequence}{\ensuremath{\odot_{\Sequence}}}
\newcommand{\ConcatSequenceOf}[2]{\ensuremath{#1\ConcatSequence#2}}

\newcommand{\OrDNF}{\ensuremath{\oplus}}
\newcommand{\OrDNFOf}[3]{\ensuremath{#1\OrDNF_{#3}#2}}

\newcommand{\PutRSym}{\ensuremath{\mathit{putr}}}
\newcommand{\PutLSym}{\ensuremath{\mathit{putl}}}
\newcommand{\PutRSymOf}[1]{\ensuremath{\PutRSym \App #1}}
\newcommand{\PutLSymOf}[1]{\ensuremath{\PutLSym \App #1}}
\newcommand{\RegexAltNS}{\ensuremath{\mathit{T}}}         
\newcommand{\RegexAlt}{\RegexAltNS\xspace}         
\newcommand{\RegexAltAltNS}{\ensuremath{\mathit{U}}}         
\newcommand{\Or}{\ensuremath{~|~}}
\newcommand{\RegexOr}[2]{\ensuremath{#1\Or#2}}

\newcommand{\RegexConcat}[2]{\ensuremath{#1\cdot#2}}
\newcommand{\EmptyString}{\ensuremath{\lstinline{""}}}

\newcommand{\Concat}{\ensuremath{\cdot}}

\newcommand{\SetOf}[1]{\ensuremath{\{#1\}}}


\newcommand{\Atom}{\ensuremath{\mathit{A}}}          
\newcommand{\AtomAlt}{\ensuremath{\mathit{B}}}
\newcommand{\AtomType}{\ensuremath{\mathit{Atom}}}
\newcommand{\App}{\ensuremath{\,}}
\newcommand{\Sequence}{\ensuremath{\mathit{SQ}}}
\newcommand{\SequenceType}{\ensuremath{\mathit{Sequence}}}
\newcommand{\LetIn}[2]{\ensuremath{\text{let } #1 = #2\text{ in }}}

\newcommand{\SequenceAlt}{\ensuremath{\mathit{TQ}}}
\newcommand{\DNFRegex}{\ensuremath{\mathit{DS}}}         
\newcommand{\DNFRegexAlt}{\ensuremath{\mathit{DT}}}    
\newcommand{\DNFRegexType}{\ensuremath{\mathit{DNF}}}


\newcommand{\String}{\ensuremath{\mathit{s}}\xspace}        
\newcommand{\StringAlt}{\ensuremath{\mathit{t}}}        
\newcommand{\StringAltAlt}{\ensuremath{\mathit{u}}}        

\newcommand{\LanguageOf}[1]{\ensuremath{\mathcal{L}(#1)}}

\newcommand{\RangeIncInc}[2]{\ensuremath{[#1,#2]}}

\newcommand{\Lens}{\ensuremath{\mathit{\ell}}\xspace}
\newcommand{\AtomLens}{\ensuremath{\mathit{al}}}

\newcommand{\SequenceLens}{\ensuremath{\mathit{sql}}}

\newcommand{\Star}{\ensuremath{^*}}
\newcommand{\StarOf}[1]{\ensuremath{{#1}\Star}}

\newcommand{\ConcatLens}{\ensuremath{\KW{concat}}\xspace}
\newcommand{\ConcatLensOf}[2]{\ensuremath{\ConcatLens(#1,#2)}}

\newcommand{\SwapLens}{\ensuremath{\KW{swap}}\xspace}
\newcommand{\SwapLensOf}[2]{\ensuremath{\SwapLens(#1,#2)}}

\newcommand{\OrLens}{\ensuremath{\KW{or}}\xspace}
\newcommand{\OrLensOf}[2]{\ensuremath{\OrLens(#1,#2)}}
\newcommand{\Skip}{\ensuremath{\KW{skip}}\xspace}
\newcommand{\SkipOf}[1]{\ensuremath{\Skip(#1)}\xspace}
\newcommand{\SRequire}{\ensuremath{\KW{require}}\xspace}
\newcommand{\SRequireOf}[1]{\ensuremath{\SRequire(#1)}\xspace}
\newcommand{\IdentityLens}{\ensuremath{\KW{id}}}
\newcommand{\IdentityLensOf}[1]{\ensuremath{\IdentityLens(#1)}}

\newcommand{\IterateLens}{\ensuremath{\KW{iterate}}\xspace}
\newcommand{\IterateLensOf}[1]{\ensuremath{\mathit{\IterateLens(#1)}}}

\newcommand{\ComposeLensOf}[2]{\ensuremath{#1\mathrel{;}#2}}
\newcommand{\Disconnect}{\ensuremath{\KW{disconnect}}\xspace}
\newcommand{\DisconnectOf}[4]{\ensuremath{\Disconnect(#1,#2,#3,#4)}}
\newcommand{\MergeL}{\ensuremath{\KW{merge\_left}}\xspace}
\newcommand{\MergeLOf}[2]{\ensuremath{\MergeL(#1,#2)}}
\newcommand{\MergeR}{\ensuremath{\KW{merge\_right}}\xspace}
\newcommand{\MergeROf}[2]{\ensuremath{\MergeR(#1,#2)}}
\newcommand{\Invert}{\ensuremath{\KW{invert}}}
\newcommand{\InvertOf}[1]{\ensuremath{\Invert(#1)}}


\newcommand{\GEq}{\ensuremath{::=~}}

\newcommand{\Nats}{\ensuremath{\mathbb{N}}}

\newcommand{\OfType}{\ensuremath{:}}

\newcommand{\ArrowTypeOf}[2]{\ensuremath{#1 \rightarrow #2}}

\newcommand{\ToDNFRegex}{\ensuremath{\Downarrow}\xspace}
\newcommand{\ToDNFRegexOf}[1]{\ensuremath{\ToDNFRegex\mkern-4mu #1}}

\newcommand{\SuchThat}{\ensuremath{~|~}}

\newcommand{\Given}{\ensuremath{~|~}}

\newcommand{\SequenceLeft}{\ensuremath{[}}
\newcommand{\SequenceRight}{\ensuremath{]}}
\newcommand{\SequenceOf}[1]{\ensuremath{\SequenceLeft#1\SequenceRight}}
\newcommand{\SeqSep}{\ensuremath{\mkern-1mu\Concat\mkern-1mu}}
\newcommand{\DNFLeft}{\ensuremath{\langle}}
\newcommand{\DNFRight}{\ensuremath{\rangle}}
\newcommand{\DNFOf}[1]{\ensuremath{\DNFLeft#1\DNFRight}}
\newcommand{\DNFSep}{\ensuremath{\Or}}
\newcommand{\SequenceLensLeft}{\ensuremath{[}}
\newcommand{\SequenceLensRight}{\ensuremath{]}}

\newcommand{\SeqLSep}{\ensuremath{\mkern-1mu\Concat\mkern-1mu}}
\newcommand{\DNFLensLeft}{\ensuremath{\langle}}
\newcommand{\DNFLensRight}{\ensuremath{\rangle}}

\newcommand{\DNFLSep}{\ensuremath{\Or}}

\newcommand{\LensBuilder}{\ensuremath{\mathit{lb}}\xspace}
\newcommand{\SeqLensBuilder}{\ensuremath{\mathit{slb}}\xspace}

\newcommand{\SubLeft}{\textsubscript{L}}
\newcommand{\SubRight}{\textsubscript{R}}

\newcommand{\OrIdentityRule}{\textit{\Or{} Ident}}
\newcommand{\EmptyProjectionRightRule}{0-Proj\SubRight{}}
\newcommand{\EmptyProjectionLeftRule}{0-Proj\SubLeft{}}
\newcommand{\ConcatAssocRule}{\textit{\Concat{} Assoc}}
\newcommand{\OrAssociativityRule}{\textit{\Or{} Assoc}}
\newcommand{\OrCommutativityRule}{\textit{\Or{} Comm}}
\newcommand{\DistributivityLeftRule}{\textit{Dist\SubRight{}}}
\newcommand{\DistributivityRightRule}{\textit{Dist\SubLeft{}}}
\newcommand{\ConcatIdentityLeftRule}{\textit{\Concat{} Ident\SubLeft{}}}
\newcommand{\ConcatIdentityRightRule}{\textit{\Concat{} Ident\SubRight{}}}

\newcommand{\UnrollstarLeftRule}{\textit{Unroll\SubLeft{}}}
\newcommand{\UnrollstarRightRule}{\textit{Unroll\SubRight{}}}

\newcolumntype{q}{>{$}l<{$}}
\newcolumntype{v}{>{$}r<{$}}

\renewcommand{\subsubsection}[1]{\paragraph{{#1}}}

\newcommand{\Examples}{\ensuremath{\mathit{exs}}}

\newcommand{\ListLeft}{\ensuremath{[}}
\newcommand{\ListRight}{\ensuremath{]}}
\newcommand{\ListOf}[1]{\ensuremath{\ListLeft #1 \ListRight}}

\newcommand{\NormalizedStarOf}[1]{\ensuremath{\NormalizedStarOf{#1}_n}}

\newcommand{\SSREquiv}{\mathrel{\equiv}}

\newcommand{\AtomToDNF}{\ensuremath{\mathcal{D}}}
\newcommand{\AtomToDNFOf}[1]{\ensuremath{\AtomToDNF(#1)}}

\newcommand{\SynthSymLens}{\PCF{SynthSymLens}\xspace}
\newcommand{\RXSearchState}{\ensuremath{\mathit{pq}}}
\newcommand{\ToStochastic}{\PCF{ToStochastic}\xspace}
\newcommand{\Best}{\ensuremath{\mathit{best}}}

\newcommand{\Continue}{\PCF{Continue}\xspace}
\newcommand{\PQ}{\PCF{PQ}\xspace}

\newcommand{\GreedySynth}{\PCF{GreedySynth}\xspace}

\newcommand{\Expand}{\PCF{Expand}\xspace}

\newcommand{\ReturnVal}[1]{\ensuremath{\Return\,#1}}

\newcommand{\SSOpt}{\ensuremath{\mathbf{SS}}}
\newcommand{\SSNCOpt}{\ensuremath{\mathbf{SSNC}}}
\newcommand{\BSOpt}{\ensuremath{\mathbf{BS}}}

\newcommand{\AnyOpt}{\ensuremath{\mathbf{Any}}}
\newcommand{\FLOpt}{\ensuremath{\mathbf{FL}}}
\newcommand{\CCOpt}{\ensuremath{\mathbf{DC}}}
\newcommand{\NSOpt}{\textbf{NS}\xspace}
\newcommand{\NROpt}{\textbf{NR}\xspace}

\newcommand{\None}{\ensuremath{\mathit{None}}\xspace}

\newcommand{\Some}{\ensuremath{\mathit{Some}}}
\newcommand{\SomeOf}[1]{\ensuremath{\Some\,#1}}
\newcommand{\Option}{\ensuremath{\mathit{Option}}}
\newcommand{\OptionOf}[1]{\ensuremath{#1 \App \Option}}


\newcommand{\CreateR}{\KW{createR}\xspace}
\newcommand{\CreateL}{\KW{createL}\xspace}
\newcommand{\PutR}{\KW{putR}\xspace}
\newcommand{\PutL}{\KW{putL}\xspace}
\newcommand{\CreateROf}[1]{\CreateR \App #1}
\newcommand{\CreateLOf}[1]{\CreateL \App #1}
\newcommand{\PutROf}[2]{\PutR \App #1 \App #2}
\newcommand{\PutLOf}[2]{\PutL \App #1 \App #2}
\newcommand{\PutRL}{\PCF{PutRL}\xspace}
\newcommand{\PutLR}{\PCF{PutLR}\xspace}
\newcommand{\CreatePutRL}{\PCF{CreatePutRL}}
\newcommand{\CreatePutLR}{\PCF{CreatePutLR}}
\newcommand{\ForgetfulLR}{\PCF{ForgetfulLR}\xspace}
\newcommand{\ForgetfulRL}{\PCF{ForgetfulRL}\xspace}

\newcommand{\SDNFREabbrev}{SDNF RE\xspace}
\newcommand{\SDNFREabbrevs}{SDNF REs\xspace}

\newcommand{\SDNFLens}{\ensuremath{\mathit{sdl}}}
\newcommand{\SDNFLensOf}[1]{\ensuremath{\DNFLensLeft#1\DNFLensRight}}
\newcommand{\SSQLensOf}[1]{\ensuremath{\SequenceLensLeft#1\SequenceLensRight}}
\newcommand{\SSQLens}{\ensuremath{\mathit{ssql}}}
\newcommand{\SAtomLens}{\ensuremath{\mathit{sal}}}

\newcommand{\Entropy}{\ensuremath{\mathbb{H}}}
\newcommand{\EntropyOf}[1]{\ensuremath{\Entropy(#1)}}
\newcommand{\LEntropyOf}[1]{\ensuremath{\Entropy^\leftarrow(#1)}}
\newcommand{\REntropyOf}[1]{\ensuremath{\Entropy^\rightarrow(#1)}}

\newcommand{\Max}{\ensuremath{\text{max}}}
\newcommand{\MaxOf}[1]{\ensuremath{\Max(#1)}}

\newcommand{\PRegexOr}[3]{\ensuremath{#1~|_{#3}~#2}}
\newcommand{\PRegexConcat}[2]{{\ensuremath{\RegexConcat{#1}{#2}}}}
\newcommand{\PRegexStar}[2]{\ensuremath{#1^{*_{#2}}}}
\newcommand{\Probability}{\ensuremath{p}\xspace}
\newcommand{\ProbabilityAlt}{\ensuremath{q}}
\newcommand{\ProbabilityDistOf}[1]{P_{#1}}
\newcommand{\ProbabilityOf}[2]{P_{#1}(#2)}

\newcommand{\Fst}{\ensuremath{\mathit{fst}}}
\newcommand{\Snd}{\ensuremath{\mathit{snd}}}

\newcommand{\InL}{\ensuremath{\mathit{inl}}}
\newcommand{\InLOf}[1]{\ensuremath{\InL \App #1}}
\newcommand{\InR}{\ensuremath{\mathit{inr}}}
\newcommand{\InROf}[1]{\ensuremath{\InR \App #1}}

\newcommand{\SingleApp}{\ensuremath{\mathit{singleapp}}}
\newcommand{\Log}{\ensuremath{\mathit{log}}}

\newcommand{\sep}{\ensuremath{\; | \;}}

\newcommand{\Reals}{\ensuremath{\mathbb{R}}}
\newcommand{\SLS}{\ensuremath{\mathit{sls}}}
\newcommand{\ALS}{\ensuremath{\mathit{als}}}
\newcommand{\CartesianMap}{\PCF{CartesianMap}\xspace}
\newcommand{\GreedySeqSynth}{\PCF{GreedySeqSynth}\xspace}
\newcommand{\AtomSynth}{\PCF{GreedyAtomSynth}\xspace}
\newcommand{\GreedyState}{\ensuremath{\mathit{pq}}}


\newcommand{\myoverline}[1]{
  \overline{\raisebox{0pt}[1.3ex][0pt]{\ensuremath{\mkern-1.5mu#1}}}
}
\newcommand{\BRegex}{\ensuremath{\myoverline{\RegexNS}}\xspace}
\newcommand{\BRegexAlt}{\ensuremath{\myoverline{\RegexAltNS}}\xspace}
\newcommand{\BRegexAltAlt}{\ensuremath{\myoverline{\RegexAltAltNS}}\xspace}

\newcommand*{\SavedLstInline}{}
\LetLtxMacro\SavedLstInline\lstinline
\DeclareRobustCommand*{\lstinline}{%
  \ifmmode
    \let\SavedBGroup\bgroup
    \def\bgroup{%
      \let\bgroup\SavedBGroup
      \hbox\bgroup
    }%
  \fi
  \SavedLstInline
}

\hyphenation{Greedy-Synth}


\clubpenalty = 10000
\widowpenalty = 10000
\displaywidowpenalty = 10000






\bibliographystyle{ACM-Reference-Format}
\citestyle{acmauthoryear}

  \begin{document}








\title{Synthesizing Symmetric Lenses}

\begin{abstract}

{\em Lenses} are programs that can be run both ``front to back'' and ``back
to front,'' allowing updates to either their source or their target data to be
transferred in both directions. 
Since their introduction by Foster \ETAL, lenses have been extensively
studied, extended, and applied.  Recent work has also demonstrated how
techniques from {\em type-directed program synthesis} can be used to
efficiently synthesize a simple class of lenses---so-called {\em
  bijective lenses} over string data---given a pair of types (regular
expressions) and a small number of examples.

\hspace*{1em}We extend this synthesis algorithm to a much broader class of
lenses, called {\em simple symmetric lenses}, including all
bijective lenses, all of the popular category of ``asymmetric''
lenses, and a rich subset of the more powerful
``symmetric lenses'' proposed by Hofmann \ETAL  
Intuitively, simple symmetric lenses allow some information to be present
on one side but not the other and vice versa.
They are of independent theoretical interest, being
the largest class of symmetric lenses that do not rely on persistent
internal state.  

Synthesizing simple symmetric lenses is substantially more challenging than
synthesizing bijective lenses: Since some of the information on each side
can be ``disconnected'' from the other side, there will, in general, be {\em
  many} lenses that agree with a given example. To guide the search process,
we use {\em stochastic regular expressions} and ideas from information
theory to estimate the amount of information propagated by a candidate
lens, generally preferring lenses that propagate more information, 
as well as user annotations marking parts of the source and
target data structures as either {\em irrelevant} or {\em essential}.

We describe an implementation of simple symmetric lenses and our synthesis
procedure as extensions to the Boomerang language. We evaluate its performance
on 48 benchmark examples drawn from Flash Fill, Augeas, the bidirectional
programming literature, and electronic file format synchronization tasks. Our
implementation can synthesize each of these lenses in under 30 seconds.
\end{abstract}

%

%


\begin{CCSXML}
<ccs2012>
<concept>
<concept_id>10011007.10011006.10011050.10011017</concept_id>
<concept_desc>Software and its engineering~Domain specific languages</concept_desc>
<concept_significance>500</concept_significance>
</concept>
<concept>
<concept_id>10011007.10011006.10011066.10011070</concept_id>
<concept_desc>Software and its engineering~Application specific development environments</concept_desc>
<concept_significance>300</concept_significance>
</concept>
</ccs2012>
\end{CCSXML}

\ccsdesc[500]{Software and its engineering~Domain specific languages}
\ccsdesc[300]{Software and its engineering~Application specific development environments}

\keywords{Bidirectional Programming, Program Synthesis, Type-Directed Synthesis,
Type Systems, Information Theory}

\author{Anders Miltner}
\affiliation{
  \institution{Princeton University}
  \country{USA}
}
\email{amiltner@cs.princeton.edu}

\author{Solomon Maina}
\affiliation{
  \institution{University of Pennsylvania}
  \country{USA}
}
\email{smaina@seas.upenn.edu}

\author{Kathleen Fisher}
\affiliation{
  \institution{Tufts University}
  \country{USA}
}
\email{kfisher@eecs.tufts.edu}

\author{Benjamin C. Pierce}
\affiliation{
  \institution{University of Pennsylvania}
  \country{USA}
}
\email{bcpierce@cis.upenn.edu}

\author{David Walker}
\affiliation{
  \institution{Princeton University}
  \country{USA}
}
\email{dpw@cs.princeton.edu}

\author{Steve Zdancewic}
\affiliation{
  \institution{University of Pennsylvania}
  \country{USA}
}
\email{stevez@cis.upenn.edu}


\maketitle
\renewcommand{\shortauthors}{A. Miltner, S. Maina, K. Fisher, B. C. Pierce, D. Walker, S. Zdancewic}

\ifanon\else
\fi

\section{Introduction}

\noindent
In today's data-rich world, similar information is often stored in different
places and in different formats. For instance, electronic calendars come in
iCalendar, vCalendar, and h-event formats~\cite{calendar-formats}; U.S. taxation
data can be transmitted via Tax XML (used by the U.S. government for e-filing) or
TXF (used by TurboTax)~\cite{tax-formats}; configuration files can be stored in
ad-hoc or structured file formats~\cite{augeas} (for example, scheduled commands
can be stored in a crontab or as Windows's Scheduled Tasks xml).

One convenient way to maintain consistency between such varied data formats
is to use a \emph{lens}~\cite{Focal2005-long}---a bidirectional program that
can transform updates to data represented in a format $S$ to another format
$T$ and vice versa, providing round-tripping guarantees that help ensure
information is not lost or corrupted as it is transformed back and forth
between different representations.
Domain-specific languages for writing lenses can facilitate generation of
principled data transformations.  For example, Boomerang~\cite{boomerang} is a
language for expressing bidirectional string transformations.

\textit{Synthesizing} lenses can make the task even easier. Indeed, lens
languages like Boomerang are excellent targets for search-based synthesis,
since their 
specialized, sub-Turing-complete syntax and their expressive type systems
drastically limit the space of possible programs.
The recent Optician synthesizer for Boomerang~\cite{optician} has
demonstrated the feasibility of synthesizing quite complex examples, albeit
only for a restricted class of lenses---so-called bijective lenses.
Given source and target formats plus a small number of examples of
corresponding data, Optician's synthesis algorithm is guaranteed to find a
bijective lens matching the examples, if one exists, often requiring no
examples at all. While specifying regular expressions can be cumbersome, doing so
is much easier than programming lenses -- regular expressions are widely
understood whereas few programmers are fluent in lenses, and writing lenses
requires users to reason about how to align two formats at once.
One reason that bijective lens synthesis can be so effective, even relative to
successful synthesis tools in other domains, is that there are typically not
very many bijections between a given pair of data formats. 
If the synthesis algorithm finds any bijection, it is fairly likely to be
the intended one. 

However, the set of real-world use cases for bijective lenses, where two
data formats contain different arrangements of precisely the same
information, is  
limited.  Oftentimes, two data
formats share just {\em some} of their information content.  For instance,
one ad-hoc system-configuration
file might include some format-specific metadata, such as a date or a
reference number, while the same configuration file on another operating
system does not.  Indeed among the benchmark problems described in
Section \ref{sec:evaluation}, all of the benchmarks taken from Flash
Fill~\cite{flashfill} and many of the benchmarks that synchronize between
two ad hoc file formats have this characteristic.

Can Optician's basic synthesis procedure be extended to a richer class of
lenses? Could we even imagine synthesizing all \emph{symmetric
  lenses}~\cite{symmetric-lenses}---a much larger class that includes
bijective lenses, ``asymmetric'' lenses (where the transformation from a
source structure to a target can throw away information that must
then be restored when transferring from target to source), and even more flexible
transformations that allow each side to throw away information?


%

One might first hope that extending the algorithms from \citet{optician} to
synthesize symmetric rather than bijective lenses would be relatively
straightforward: Simply replace the bijective combinators with symmetric
ones and search using similar heuristics.  However, this na\"ive approach
encounters two difficulties.

The first of these is pragmatic.  Symmetric lenses as presented
by~\citet{symmetric-lenses} operate
over three structures: a ``left'' structure $X$, a ``right'' structure
$Y$ and a ``complement'' 
$C$ that contains the information not
present in either $X$ or $Y$. These complements must be stored
and managed somehow.  More fundamentally, complements complicate
synthesis {\em specifications}---instead of just giving single examples of
source and target 
pairings, users would have to give longer ``interaction sequences'' to 
show how a lens should behave. To avoid these complexities, we define a
restricted variant 
of symmetric lenses,  called \emph{simple symmetric lenses}. 
%
Intuitively, simple symmetric lenses are symmetric
lenses that do not require external ``memory'' to recover data from past
instances of $X$ or $Y$ when making a round trip. They only need the most
recent instance.

Formally, we characterize the expressiveness of simple symmetric lenses by
proving that they are exactly the symmetric lenses
that satisfy an intuitive property called \emph{forgetfulness}.  We also show
they are expressive enough for many practical uses by adding simple
symmetric lenses to 
the Boomerang language~\cite{boomerang} and applying them to a range of
real-world applications.  This exercise also demonstrates that simple
symmetric lenses can coexist with, and be extended by, other advanced lens features
provided by Boomerang, including {\em quotient
  lenses}~\cite{quotientlenses,maina+:quotient-synthesis} and {\em matching
  lenses}~\cite{matchinglenses}.

This leaves us with the second difficulty in synthesizing symmetric
lenses: Whereas the number of bijective lenses between two given formats is
typically tiny, the number of simple symmetric lenses is typically enormous.
If a na\"ive search algorithm just selects the first simple symmetric lens
it finds, the returned lens will generally \emph{not} be the one the user wanted.  We
need a new principle for identifying ``more likely'' lenses and a more
sophisticated synthesis algorithm that uses this principle to search the
space more intelligently. 

For these, we turn to information theory. We consider ``likely'' lenses to
be ones that propagate ``a lot'' of information 
from the left data format to the right and vice versa. Conversely, ``unlikely''
lenses are ones that require a large amount of additional information to recover
one of the formats given the other. By default, our synthesis algorithm prefers
the lenses that propagate more information.
This preference is formalized using \emph{stochastic regular expressions}
(SREs)~\cite{stoch-rnn,stoch-def}, which simultaneously define a set of
strings and a probability distribution over those strings.  Using this
probability distribution, we can calculate the likelihood of a given lens.

\begin{figure}
  \includegraphics[width=.63\textwidth]{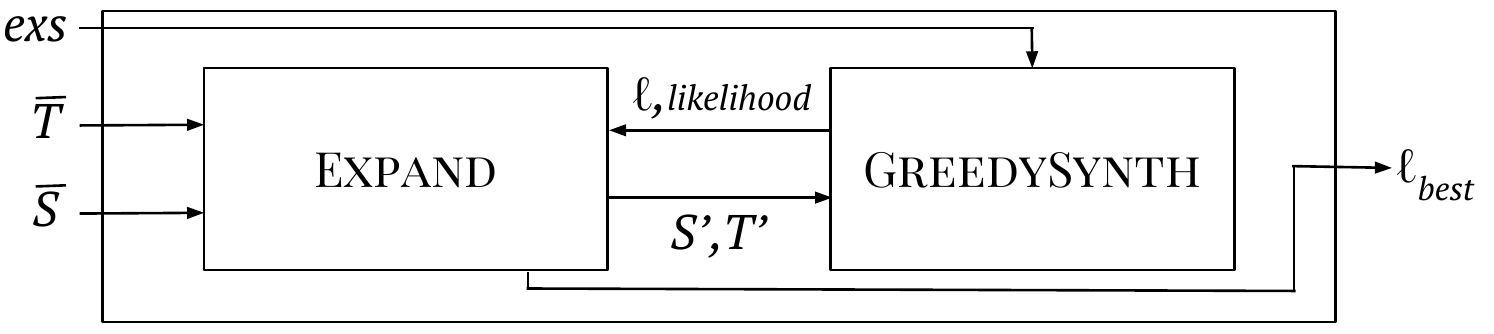}
  \vspace{-2ex}
  \caption{Schematic diagram for the simple symmetric lens synthesis algorithm.
    The user provides regular expressions \BRegex and \BRegexAlt and a set of
    examples \Examples{} as input. \Expand first converts \BRegex and \BRegexAlt
    to stochastic regular expressions \Regex and \RegexAlt with default
    probabilities. It then finds pairs of stochastic regular expressions
    equivalent to \Regex and \RegexAlt and iteratively proposes them to
    \GreedySynth. \GreedySynth finds a lens typed between the supplied SREs.
    When the algorithm finds a likely lens, it returns it.}
  \label{fig:high-level-algorithm}
\end{figure}

With simple symmetric lenses and this SRE-based likelihood measure in hand, we
propose a new algorithm for synthesizing likely lenses. At its core, the
algorithm performs a type-directed search between descriptions of the data
formats, measuring success using the likelihood measure.

Interesting complications arise from the need to deal with regular-expression
equivalences. There are infinitely many regular expressions equivalent to a
given one, and the lens returned by a type-directed search will in general
depend on which of the possible representations are chosen for its source and
target formats. Moreover, certain lenses may not be well-typed unless the
format representations are replaced by equivalent ones:
in general, the synthesis algorithm has to search through
equivalent regular expression types to find the most likely lens. To tame
this complexity, we divide the synthesis algorithm into two communicating search
procedures (Figure~\ref{fig:high-level-algorithm}), following~\citet{optician}.
The first, \Expand, uses rewriting rules to propose new pairs of
stochastic regular expressions equivalent to the original pair. The second,
\GreedySynth, uses a greedy, type-directed
algorithm to find a simple symmetric lens between input SRE pairs, returning the lens and
its likelihood score to \Expand. The whole synthesis algorithm heuristically
terminates when a sufficiently likely lens is found.



We added this synthesis procedure to the Boomerang system and explored its
effectiveness on a range of applications. Users set up a Boomerang synthesis
task by providing two regular expressions and optionally supplying input-output
examples. Users can also override the default mechanism for
calculating the information content of a SRE by
asserting that certain strings are \emph{essential} or \emph{irrelevant},
forcing certain data to either be retained or discarded during the
transformations.

We evaluate our implementation, the effects of optimizations, and our inclusion
of relevance annotations on a set of 48 lens synthesis benchmarks drawn from
data cleaning, view maintenance, and file synchronization tasks. We find the
system can synthesize simple symmetric lenses for all of the benchmarks in under
30 seconds (\S\ref{sec:evaluation}).

In summary, our contributions are as follows:
\begin{itemize}
\item We define {\em simple symmetric lenses}, a subclass of symmetric
lenses with no hidden state,
we propose a syntax for them---a collection
of combinators for building simple symmetric lenses---and we
describe how to calculate their likelihoods
(\S\ref{sec:overview}--\S\ref{sec:ssl}).  
\item We give an algorithm for synthesizing lenses built from these combinators,
  using a novel application of stochastic regular expressions to guide the
  search process (\S\ref{sec:synthesis}) and prove that our metrics correspond
  to the information loss of a given lens.
\item We extend the Boomerang implementation with simple symmetric lenses
and lens synthesis specifications, 
  including types, examples, and relevance annotations. We evaluate our
  implementation, the effects of
  optimizations, and our inclusion of relevance annotations on a set of
  48 lens synthesis benchmarks drawn from data cleaning, view maintenance, and
  file synchronization tasks. The system can synthesize simple symmetric lenses
  for all of the benchmarks in under 30 seconds (\S\ref{sec:evaluation}).
\item We prove that simple symmetric lenses fall between asymmetric
lenses~\cite{Focal2005-long} and full symmetric
lenses~\cite{symmetric-lenses} in expressiveness.  Indeed, the
class of simple lenses can be characterized semantically as the subset of
full symmetric lenses for which a simple ``forgetfulness'' property holds
(\S\ref{sec:related}).
\end{itemize}
Along the
way, we show that {\em star-semiring equivalences}~\cite{starsemiring} on regular
expressions 
remain valid when extended to stochastic regular expressions, preserving
probability distributions as well as languages (\S\ref{sec:sre}).
We close with related work (\S\ref{sec:related}) and
concluding thoughts (\S\ref{sec:conc}).
Elided proofs can be found
\ifappendices in the appendices\else in the full version of the paper~\cite{sslfull}\fi.


\noindent


\section{Overview}
\label{sec:overview}

\begin{figure}
  \setbox0=\hbox{%
    \begin{minipage}{3.5in}
\begin{lstlisting}[
stringstyle=\upshape\sffamily
]
Jane Doe: 38000
John Public: 37500
\end{lstlisting}
    \end{minipage}
  }
  \savestack{\listingA}{\box0}

  \setbox0=\hbox{%
    \begin{minipage}{3.5in}
\begin{lstlisting}[
stringstyle=\upshape\sffamily
]
FirstLast,Company
Jane Doe,Healthcare Inc.
John Public,Insurance Co.
\end{lstlisting}
    \end{minipage}
  }
  \savestack{\listingB}{\box0}

  \setbox0=\hbox{%
    \begin{minipage}{3.5in}
\begin{lstlisting}
let salary = number | "unk"
let emp_salary = name . " " . name . ": " salary
let emp_salaries = "" | emp_salary . ("\n" emp_salary)$^*$
\end{lstlisting}
    \end{minipage}
  }
  \savestack{\listingC}{\box0}

  \setbox0=\hbox{%
    \begin{minipage}{3.5in}
\begin{lstlisting}
let company = (co_name . ("Co." | "Inc." | "Ltd.")) | "UNK"
let emp_ins = name . " " . name "," company
let header = "FirstLast,Company"
let emp_insurance = header . ("\n" . emp_ins)$^*$
\end{lstlisting}
    \end{minipage}
  }
  \savestack{\listingD}{\box0}
  
  \centering
  \resizebox{\columnwidth}{!}{%
    \begin{tabular}{|>{\columncolor{vlightyellow}}c|>{\columncolor{vlightblue}}c|}
      \hline
      Management's data & HR's data \\
      \hline
      \listingA & \listingB \\
      \hline
      \cellcolor{lightyellow}Management's type & 
      \cellcolor{lightblue} HR's type\\
      \hline
      \cellcolor{lightyellow}\listingC & \cellcolor{lightblue}\listingD \\
      \hline
    \end{tabular}
  }
  \caption{Hypothetical example data files and corresponding regular expressions
  used by management and HR at a company to represent employee salaries and health
  insurance providers, respectively.
  }
  \label{fig:minimized-representations}
\end{figure}

We begin with an overview of simple symmetric lenses and our
synthesis algorithm, using a simplified example drawn from a hypothetical
company. In this company, management and human resources (HR) store
information about employees in separate text files: management stores the
names of 
employees and their salaries while HR stores the names of employees and their
health insurance providers. Figure~\ref{fig:minimized-representations} gives
examples of the two file formats and regular expressions describing them. 

The company uses a simple symmetric lens to keep these files
synchronized. When management adds a new employee, say ``Chris Roe: 32500'',
this lens 
adds the corresponding entry ``Chris Roe, UNK'' to HR's file. The default value
``UNK'' represents the fact that the employee's insurance company is currently
unknown. A similar update happens if HR adds a new employee before
management knows about them,
in which case the sentinel value ``unk'' reflects unknown salary information
on management's side.
Furthermore, if HR corrects an error in an employee's name, say changing ``John
Public'' to ``Jon Public'', the lens mirrors this change into management's
file. Not all the data is mirrored, however: management's file is not updated
in response to insurance changes, and HR's is oblivious to salary
changes. Simple symmetric lenses are appropriate for keeping these two files
synchronized because they contain a mix of shared and unshared information.

%

\subsection{Simple Symmetric Lenses}
Semantically, a simple symmetric lens (we will just say ``lens'' when clear
from context) between 
sets $X$ and $Y$ comprises four functions subject to four round-tripping
laws.
\vspace*{.7ex}
\begin{trivlist}
\item 
\begin{center}
\begin{minipage}[c]{.3\textwidth}
\begin{gather}
 \mbox{$\CreateR \OfType X \rightarrow Y$} \nonumber \\
 \mbox{$\CreateL \OfType Y \rightarrow X$} \nonumber \\
 \mbox{$\PutR \OfType X \rightarrow Y \rightarrow Y$} \nonumber \\
 \mbox{$\PutL \OfType Y \rightarrow X \rightarrow X$} \nonumber
\end{gather} 
\end{minipage} 
\hspace{.4in}
\begin{minipage}[c]{.5\textwidth}
\begin{align}
  \tag{\CreatePutRL}
  \PutLOf{(\CreateROf{x})}{x} = x\\
  \tag{\CreatePutLR}
  \PutROf{(\CreateLOf{y})}{y} = y\\
  \tag{\PutRL}
  \PutLOf{(\PutROf{x}{y})}{x} = x\\
  \tag{\PutLR}
  \PutROf{(\PutLOf{y}{x})}{y} = y
\end{align}
\end{minipage} 
\end{center}
\end{trivlist}
The two \KW{create} functions are used to fill 
in default values when introducing new data (\EG, on \KW{create} the ``unk''
salary entry is added alongside the name to the management file when HR inserts
a new employee). The two \KW{put} functions propagate edits from one format to
the other by combining a new value from one with an old value from the other.
The record projection notation $\Lens.\PutR$ extracts the \PutR function
from the 
lens $\Lens$. These four functions can be used to keep the
common information between two file types in sync. For example, if a new file
of the left-hand format is created, the \CreateR function will build a new
file in the 
right-hand format. If the right-hand file is then edited, the \PutL function can
update the left-hand file with the changed information.

The round-tripping laws guarantee that pushing an ``unedited'' value from one
side (the result of a put or create) back through a lens in the other direction
will get back to exactly where we began. For example, consider a lens
synchronizing the employee data formats in
Figure~\ref{fig:minimized-representations}. Applying the \CreateR function to a
salary file creates an insurance file with the same employee names in the file,
with
\lstinline{"UNK"} for each insurance company. The round-tripping
laws guarantees that applying the \PutL function to the generated insurance file
and the original salary file will return the original salary file, unmodified.

Simple symmetric lenses differ from ''classical'' symmetric lenses in that
they do not involve a \emph{complement}. We give a detailed comparison in
\S\ref{sec:relationship}.


\subsection{Simple Symmetric Lens Combinators}

In principle, programmers can define lenses just by writing
four arbitrary functions by hand and showing they satisfy the round-tripping
laws, but this can be tedious and error prone.  A better idea is to provide
a set of combinators---a domain-specific language---for building complex,
law-abiding 
lenses from simpler ones. Such bidirectional languages have been given for
many different kinds of data, including database
relations~\cite{BohannonPierceVaughan} algebraic data
types~\cite{symmetric-lenses}, graphs~\cite{tgg}, and more.
We focus here on combinators for defining lenses between string data formats
described by regular expressions.

If $\BRegex$ and $\BRegexAlt$ are regular expressions, then
\lstinline{$\Lens$ : $\BRegex \Leftrightarrow \BRegexAlt$} indicates that
$\Lens$ is a simple symmetric lens between $\LanguageOf{\BRegex}$ and
$\LanguageOf{\BRegexAlt}$. (We will use undecorated variables later for
stochastic regular expressions, so we mark plain REs with overbars.)
We illustrate some of these
combinators by defining lenses on subcomponents of the employee data formats.

The simplest combinator is the identity lens \IdentityLens, which takes as an
argument a regular expression $\BRegex$ and propagates data unchanged
in both directions.
\begin{lstlisting}
id(name)  :  name $\Leftrightarrow$ name
\end{lstlisting}
The identity lens moves data back and forth from source to target without
changing it. Both the \CreateR and \CreateL functions are the identity function
($\CreateR \App s = s$), and the put functions merely return the first argument
($\PutROf{\String}{\StringAlt} = \String$).

In contrast, $\DisconnectOf{\Regex}{\RegexAlt}{\String}{\StringAlt}$ does not propagate any data at all
from one format to the other.  The \Disconnect lens takes four arguments: two
regular expressions ($\Regex, \RegexAlt$) and two strings ($\String, \StringAlt$). The regular expressions specify the formats
on the two sides, while the strings provide default values. 
\begin{lstlisting}
disconnect(salary, "", "unk", "")  :  salary $\Leftrightarrow$ ""
\end{lstlisting}
On creates, the input values are thrown away, and default values are
returned ($\CreateLOf{\StringAlt} =
\lstinline{"unk"}$), and on puts, the second argument is used and the first is
thrown away ($\PutROf{\String}{\StringAlt} = \StringAlt$). For 
example, the \lstinline{salary} field is only present in management files,
so the disconnect lens can ensure salary edits do not cause updates to the
HR file.
With the above lens, $\PutL \App \lstinline{""} \App 60000$ will return 60000,
and \PutR will always return \lstinline{""}.

The insert lens \lstinline{ins} and the delete lens \lstinline{del} are
syntactic sugar for uses of the disconnect lens in which a string
constant is omitted entirely from the source or target format.
\begin{lstlisting}
 ins($t$) = disconnect("", $t$, "", $t$)
 del($s$) = disconnect($s$, "", $s$, "")
\end{lstlisting}
The \lstinline{ins} lens inserts a constant string when going from
left to right, while \lstinline{del} inserts a string when going from right
to left.

Finally, there are a number of combinators that construct
more complex lenses from simpler ones. These include concatenation
($\ConcatLensOf{\Lens_1}{\Lens_2}$, often written $\Lens_1 . \Lens_2$), variation
($\OrLensOf{\Lens_1}{\Lens_2}$), and iteration ($\IterateLensOf{\Lens}$).
For example, we could use the lens
\begin{lstlisting}
id(name) . id(" ") . id(name) . del(": ") . ins(",")
\end{lstlisting}
to transform ``\lstinline{Jane Doe: }'' to ``\lstinline{Jane Doe,}'' in
the left-to-right direction.

%
%
The \IterateLens lens is useful for synchronizing a series of items or rows in a
table. For example given a
lens \lstinline{employee_lens} that synchronizes data for a single employee, the
lens
\begin{lstlisting}
iterate(id("\n") . employee_lens)  :  ("\n" . emp_salary)$^*$ $\Leftrightarrow$ ("\n" . emp_ins)$^*$
\end{lstlisting}
transforms a list of employees in employees in the Management format to a list
of employees in the HR format and vice versa. These combinators
are combined in figure~\ref{fig:example_lens} to construct a complete lens
between the employee formats.

\begin{figure}
\begin{lstlisting}
let name_lens = id(name) . id(" ") . id(name) . del(": ") . ins(",")
let employee_lens = name_lens . disconnect(salary,"","unk","") 
                                                    . disconnect("",company,"","UNK")
let employees_lens = ins("\n") . employee_lens . iterate(id("\n") . employee_lens)
let full_lens  :  emp_salaries $\Leftrightarrow$ emp_insurance = ins(header) . employees_lens
\end{lstlisting}
  \caption{A lens that synchronizes management and HR employee files}
  \label{fig:example_lens}
\end{figure}


Lenses are typed by pairs of regular expressions and only a handful of
rules result in a given type.
Many of these rules provide tight relationships between terms and types, for example, the type of a
\ConcatLens lens is the concatenation of the regular expressions of the sub-lenses.
\[
  \inferrule*
  {
    \Lens_1 \OfType \BRegex_1 \Leftrightarrow \BRegexAlt_1\\
    \Lens_2 \OfType \BRegex_2 \Leftrightarrow \BRegexAlt_2
  }
  {
    \ConcatLensOf{\Lens_1}{\Lens_2} \OfType \BRegex_1 \Concat \BRegex_2
    \Leftrightarrow
    \BRegexAlt_1 \Concat \BRegexAlt_2
  }
\]
The typing  rules maintain structural similarity
between regular expression types and lens terms, with one exception: The
type equivalence rule permits a well-typed lens to be well-typed among any
equivalent regular expression pair.
\[
  \centering
  \inferrule*
  {
    \Lens \OfType \BRegex \Leftrightarrow \BRegexAlt\\
    \BRegex \SSREquiv \BRegex'\\
    \BRegexAlt \SSREquiv \BRegexAlt'
  }
  {
    \Lens \OfType \BRegex' \Leftrightarrow \BRegexAlt'
  }
\]
This rule is needed for lenses like the final one in
Figure~\ref{fig:example_lens} to be well-typed.


\subsection{Ranking Simple Symmetric Lenses}

For larger and more complex formats, it can become quite difficult to
write lenses by hand, as we did above.  Instead, we would like to synthesize
them from types and examples.  
That is, given a pair of regular expressions and a set of input-output
examples, we want 
to find a simple symmetric lens that is typed by the regular expression pair and
satisfies all the input/output examples. We call this a \emph{satisfying
  lens}. For our running example, suppose we wish to synthesize a lens between
\lstinline{emp_salaries} and \lstinline{emp_insurance}, using as sole
input-output example the data in Figure~\ref{fig:minimized-representations}.

One challenge is that the simple symmetric lens combinators permit many
well-typed lenses
between a given pair of regular expressions. For example, 
Figure~\ref{fig:example_lens} gives one possible lens between regular
expressions  \lstinline{emp_salaries} and \lstinline{emp_insurance},
but 
\begin{lstlisting}
disconnect(emp_salaries, emp_insurance, $s$, $t$)
\end{lstlisting}
is another well-typed lens that satisfies the example in
Figure~\ref{fig:minimized-representations} (where $s$ and $t$ represent
the provided example strings). In general, {\em many} examples
may be 
required to rule out all possible occurrences of \Disconnect lenses,
particularly in complex formats.
Instead of merely finding
\emph{any} satisfying lens, we wish to synthesize a satisfying lens that is
likely to please the user.

How can we identify such a ``likely'' lens? We propose the following heuristic:
A satisfying lens is ``more likely'' if it uses more data from one format to
construct the other.  For example, the
identity lens (which uses all the data) is more likely than the disconnect
lens (which uses none). Formally, we define the {\em likelihood} of a
satisfying lens as the 
expected number of bits required to recover data in one format from 
data in the other; higher likelihoods correspond to fewer bits. Two
strings $s$ and $t$ are \emph{synchronized according to lens
  $\Lens$} if $\Lens.\PutR\
s\ t = t$ and $\Lens.\PutL\ t\ s = s$. We can \emph{recover} $s$ from 
$t$ using bits $b$ and lens $\Lens$ if we can reconstruct $s$ from $t$, $b$, and
$\Lens$. For example, given the \IdentityLens{} lens, we can recover $s$ from $t$
using no bits because, in this case, $s$ is just $t$. In contrast, given the
\Disconnect{} lens, we need enough bits to fully encode $s$ in order to recover
it from $t$ because all the information in $t$ gets thrown away. Thus, if both
\IdentityLens{} and \Disconnect{} are satisfying lenses for a particular pair of
formats, \IdentityLens{} will be more likely.

The expected number of bits required to recover a piece of data corresponds to
the well-known information-theoretic concept of
\emph{entropy}~\cite{Shannon1948}. Calculating entropy requires a probability
distribution over the space of possible values for the data. Specifically, given
a set $S$ and a probability distribution $P \OfType S \rightarrow \Reals$ over
$S$, the entropy $\EntropyOf{S,P}$ is $-\Sigma_{s\in S}P(s)\Log_2P(s)$. In
information theory, the \emph{information content} of each element of $s \in
S$ is 
the number of bits required to specify $s$ in a perfect encoding scheme
($-\Log_2P(s)$). The entropy of $S$ is the expected number of bits
required to specify an element drawn from $S$---the
probability of each element times its information content. Entropy captures
the intuition that, if a data source contains many possible elements and
none have 
significantly higher probability than others, it will have high entropy. Data
sources with just a few high probability elements have low
entropy. When $P$ is clear from context, we will
often use the shorthand of $\EntropyOf{S}$ for $\EntropyOf{S,P}$.

In the present setting, we already have a way of expressing sets of data:
regular 
expressions. To calculate entropy, what we need is a way to express probability
distributions over those sets. To that end, we adopt \emph{stochastic regular
  expressions}~\cite{stoch-def,stoch-rnn} (SREs), which are regular expressions
in which each operator is annotated with a probability (see \S\ref{sec:sre}). A
stochastic regular expression thus specifies both a set of strings at the
same time as a probability distribution over that set.

We use entropy to gauge the relative likelihood of lenses in our synthesis
algorithm (see \S\ref{subsec:lenscosts}). For any lens $\Lens$, we can calculate
the expected number of bits required to recover a string $t$ in
$\LanguageOf{\RegexAlt}$ from a synchronized string $s$ in
$\LanguageOf{\Regex}$. This expectation is the \emph{conditional entropy of
  $\Regex$ given $\RegexAlt$ and $\Lens$}, formally $\sum_{s \in
  \Regex}\ProbabilityOf{\Regex}{s}\cdot\EntropyOf{\SetOf{t \SuchThat
    \Lens.\PutROf{s}{t} = t}}$. The likelihood we assign to $\Lens$ is the sum
of the conditional entropy of $\Regex$ given $\RegexAlt$ and $\Lens$ and the
conditional entropy of $\RegexAlt$ given $\Regex$ and $\Lens$. This metric
assigns higher entropy (or lower likelihood) to lenses where knowing the string on one side 
provides little information about the string on the other side. It assigns zero
entropy to bijections because given a string $s \in \Regex$, the bijection
exactly determines the corresponding string in $\RegexAlt$.


To obtain SREs from the plain regular expressions that users write, 
we use a default heuristic that attempts to assign probability annotations
giving
each string in the language equal probability (not prioritizing one piece of
information over another).

Sometimes users already know that certain data should or should not be
used to construct the other format. We introduce relevance annotations to SREs
that enable users to specify whether a piece of data should be used to construct
the other format (with \SRequire) or not (with \Skip). In our example,
\lstinline{salary} and \lstinline{company} could safely be skipped (as they are
disconnected), where \lstinline{name} in \lstinline{emp_salary} and
\lstinline{emp_ins} could be annotated as required (as they are converted with the
identity lens). Doing so both constrains the problem (as \lstinline{name}s must
be synchronized) and makes it easier (as the algorithm does not waste time
looking for \lstinline{salary} information in the insurance format). In this
way, users can tweak the
lens likelihoods with their external knowledge.

\subsection{Searching for Likely Lenses}
Now that we have a way to calculate likelihood, we need a way to search for
likely lenses. Previous work~\cite{augustsson-2004, osera+:pldi15,
  feser-pldi-2015, frankle+:popl16} has shown that types can be used to
dramatically restrict the search space and improve the effectiveness of
synthesis. In our setting, types are pairs of stochastic regular
expressions, and the fact that each regular expression is semantically
equivalent to infinitely many others complicates the problem.

Following existing work on bidirectional program
synthesis~\cite{maina+:quotient-synthesis}, we 
split our algorithm into two communicating
search procedures. The first,
\Expand, navigates the space of semantically equivalent regular expressions
by applying rewrite rules that preserve both semantics and
probability distributions. This
algorithm ranks pairs of stochastic regular expressions by the number of
rewrite rule applications required to obtain each pair from the one given as
input. It passes the pairs off to the second 
search procedure, \GreedySynth, in rank order with the smallest first.

\GreedySynth looks for highest-likelihood lenses between a given
pair of stochastic regular expressions $\Regex$ and $\RegexAlt$ by performing a
type-directed search. It first converts the stochastic
regular expressions provided by \Expand into \emph{stochastic DNF regular
  expressions}---a constrained representation of stochastic regular expressions
with disjunctions distributed over concatenations and with concatenations and
disjunctions normalized to operating over lists.
Then it uses the syntax of these n-ary DNF regular expressions to find
normalized lenses in a form we call \emph{simple symmetric n-ary DNF lenses}.
These involve neither a composition operator nor a type equivalence rule.
These restrictions mean that there are comparatively few simple symmetric n-ary DNF
lenses that are well typed between a given pair of stochastic n-ary DNF regular
expressions, so \GreedySynth's search space for a given pair of regular
expressions is finite. Finally, \GreedySynth yields a simple symmetric lens by
converting the n-ary syntax back to the binary forms provided in the surface
language.

This architecture of communicating synthesizers gives us a way to enumerate
pairs of stochastic regular expressions of increasing rank and to efficiently
search through them, but it poses a problem: when should \Expand stop proposing
new SRE pairs? We might have found a promising lens between a pair of stochastic
regular expressions, but a different pair we haven't yet discovered may give
rise to an even better lens. The search algorithm must resolve a tension between
the quality of the inferred lens and the amount of time it takes to return a
result. For example, if the algorithm has already found the lens in
Figure~\ref{fig:example_lens}, we don't want to spend a lot of time searching
for an even better lens. To resolve this tension, the algorithm uses heuristics
to judge whether to return the current best satisfying lens to the user or to
pass the next set of equivalent SREs to \GreedySynth. The heuristics favor stopping
if the current best satisfying lens is very likely, indicating the lens is very
promising (for example, if the satisfying lens loses no information, the
algorithm should terminate for no other lens will be more likely). The
heuristics also favor stopping if \Expand has delivered to \GreedySynth all
pairs of stochastic regular expressions at a given rank and there is a large
number of pairs at the next rank, because searching through all such pairs will
take a long time (\IE, if a satisfying lens loses only a little
information
and searching for a better one will take a long time, the discovered lens is
returned).
With this approach, the algorithm can quickly return a
satisfying lens with relatively high likelihood. If the user is unhappy with the
result, they can either refine the search by supplying additional examples,
which serve both to rule out previously proposed lenses and to reduce the
size of the search space by cutting down on the number of satisfying lenses, or
they can supply annotations on the source and target SREs indicating that
certain information either must be retained or must be discarded by the lens
(see \S\ref{subsec:relevanceannotations}).

\section{Stochastic Regular Expressions}
\label{sec:sre}
To characterize likely lenses, we must compute the expected number of bits
needed to recover a string in one data source from a synchronized string in the
other data source. To do this, we first develop a probabilistic 
model for our language using \emph{stochastic regular
  expressions} (SREs)---regular expressions annotated with probability
information~\cite{stoch-rnn,stoch-def}---that
jointly express a language and a probability distribution over that language. 
$$\Regex{},\RegexAlt{} \quad\GEq{}\quad  \String  \quad|\quad \emptyset \quad|\quad \PRegexStar{\Regex}{p} \quad|\quad \RegexConcat{\Regex_1}{\Regex_2} \quad|\quad \PRegexOr{\Regex_1}{\Regex_2}{p}$$
(Lowercase $\String$ ranges over constant strings and $\Probability$ ranges
over real numbers between 0 and 1, exclusive.)
The semantics of an SRE $S$ is a probability distribution $P_S$ defined as
follows.
\begin{center}
  \begin{tabular}{rcl}
    $\ProbabilityOf{\String}{\String''}$
    & =
    & $\begin{cases*}1 & if $\String = \String''$\\ 0 & otherwise\end{cases*}$ \\
     
    $\ProbabilityOf{\emptyset}{\String}$
    & =
    & $0$ \\
    
    $\ProbabilityOf{\RegexConcat{\Regex_1}{\Regex_2}}{s}$

    & =
    & $\Sigma_{\String = \String_1\String_2}\ProbabilityOf{\Regex_1}{\String_1}\ProbabilityOf{\Regex_2}{\String_2}$ \\
    
    $\ProbabilityOf{\PRegexOr{\Regex_1}{\Regex_2}{\Probability}}{\String}$
    & =
    & $\Probability\ProbabilityOf{\Regex_1}{\String} +
      (1-\Probability)\ProbabilityOf{\Regex_2}{\String}$\\
    
    $\ProbabilityOf{\PRegexStar{\Regex}{\Probability}}{\String}$
    & =
    & $\Sigma_n \Sigma_{\String = \String_1 \ldots \String_n}\Probability^n(1-\Probability)\Pi_{i=1}^n\ProbabilityOf{\Regex}{\String_i}$\\
  \end{tabular}
\end{center}
One may think of SREs as string generators. Under this interpretation, the
constant SRE $\String$ always generates the string $\String$ and never any other
string. The SRE $\PRegexOr{\Regex_1}{\Regex_2}{\Probability}$ generates a string
from $\Regex_1$ with probability $\Probability$ and generates a string from
$\Regex_2$ with probability $1-\Probability$. The SRE
$\PRegexStar{\Regex}{\Probability}$ generates strings in $\Regex^n$ with
probability $\Probability^n(1-\Probability)$. For example,
$\PRegexStar{\Regex}{\Probability}$ will generate a string in $\Regex$ with
probability $\Probability(1-\Probability)$ and a string in
$\RegexConcat{\Regex}{\Regex}$ with probability $\Probability^2(1-\Probability)$.

\begin{figure}
  \begin{tabular}{@{}r@{\hspace{1em}}c@{\hspace{1em}}l@{}r@{\hspace*{3em}}r@{\hspace{1em}}c@{\hspace{1em}}l@{}r}
    $\RegexConcat{\emptyset}{\Regex}$ & $\SSREquiv$ & $\emptyset$ & \EmptyProjectionLeftRule{} &
    $\RegexConcat{\Regex}{\emptyset}$ & $\SSREquiv$ & $\emptyset$ & \EmptyProjectionRightRule{} \\
    \RegexConcat{\EmptyString{}}{\Regex{}} & $\SSREquiv$ & \Regex{} & \ConcatIdentityLeftRule{} &
    \RegexConcat{\Regex{}}{\EmptyString{}} & $\SSREquiv$ & \Regex{} & \ConcatIdentityRightRule{} \\
    \PRegexOr{\Regex}{\emptyset}{1} & $\SSREquiv$ & \Regex{} & \OrIdentityRule{} &
    \PRegexOr{\Regex{}}{\RegexAlt{}}{\Probability} & $\SSREquiv$ & \PRegexOr{\RegexAlt{}}{\Regex{}}{1-\Probability} & \OrCommutativityRule{}\\
    \RegexConcat{\Regex{}}{(\PRegexOr{\Regex{}'}{\Regex{}''}{\Probability})} & $\SSREquiv$ & \PRegexOr{(\RegexConcat{\Regex{}}{\Regex{}'})}{(\RegexConcat{\Regex{}}{\Regex{}''})}{\Probability} & \DistributivityLeftRule{} &
    \RegexConcat{(\PRegexOr{\Regex{}'}{\Regex{}''}{\Probability})}{\Regex{}} & $\SSREquiv$ & \PRegexOr{(\RegexConcat{\Regex{}'}{\Regex{}})}{(\RegexConcat{\Regex{}''}{\Regex{}})}{\Probability} & \DistributivityRightRule{} \\
    \PRegexStar{\Regex{}}{\Probability} & $\SSREquiv$ & \PRegexOr{\EmptyString{}}{(\RegexConcat{\Regex{}}{\PRegexStar{\Regex{}}{\Probability}})}{1-\Probability} & \UnrollstarLeftRule{} &
    \PRegexStar{\Regex{}}{\Probability} & $\SSREquiv$ & \PRegexOr{\EmptyString{}}{(\RegexConcat{\PRegexStar{{\Regex{}}}{\Probability}}{\Regex{}})}{1-\Probability} & \UnrollstarRightRule{} 
  \end{tabular}
  \begin{tabular}{@{}r@{\hspace{1em}}c@{\hspace{1em}}l@{\hspace{1em}}r@{}}
    \RegexConcat{(\RegexConcat{\Regex{}}{\Regex'})}{\Regex''} & $\SSREquiv$ & \RegexConcat{\Regex{}}{(\RegexConcat{\Regex'}{\Regex''})} & \ConcatAssocRule{}  \\
    \PRegexOr{(\PRegexOr{\Regex}{\Regex'}{\Probability_1})}{\Regex''}{\Probability_2} & $\SSREquiv$ & \PRegexOr{\Regex}{(\PRegexOr{\Regex'}{\Regex''}{\frac{(1-\Probability_1)\Probability_2}{1 - \Probability_1\Probability_2}})}{\Probability_1\Probability_2} & \OrAssociativityRule{}  \\
  \end{tabular}
  \caption{Stochastic Regular Expression Star-Semiring Equivalence}
  \label{fig:ssr-sre}
\end{figure}

\subsection{Stochastic Regular Expression Equivalences}
The $\Expand$ algorithm enumerates SREs that are
``equivalent'' to a given one. However, existing work does not define any notion
of stochastic regular expression equivalence. Figure~\ref{fig:ssr-sre} shows how
we extend the \emph{star-semiring}~\cite{starsemiring} equivalences (a
\emph{finer} notion of equivalence than semantic equivalence) to
SREs.  

\begin{theorem}
  If $\Regex \SSREquiv \RegexAlt$, then
  $\ProbabilityDistOf{\Regex} = \ProbabilityDistOf{\RegexAlt}$.
\end{theorem}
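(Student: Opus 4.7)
The plan is to proceed by induction on the derivation of $\Regex \SSREquiv \RegexAlt$. Since $\SSREquiv$ is, presumably, the equivalence/congruence closure of the axiom schemes in Figure~\ref{fig:ssr-sre}, the induction splits into three groups of cases: (i) reflexivity, symmetry, and transitivity; (ii) congruence closure under $\Concat$, $\Or_{\Probability}$, and $\Star_{\Probability}$; and (iii) each individual axiom from Figure~\ref{fig:ssr-sre}. Groups (i) and (ii) are essentially free: reflexivity is trivial, symmetry and transitivity follow because equality of distributions is itself an equivalence, and congruence follows directly from the compositional definition of $\ProbabilityDistOf{-}$, where each clause depends only on the distributions of the immediate subexpressions. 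So the substance of the proof is checking that each base axiom preserves $\ProbabilityDistOf{-}$ pointwise on every string $\String$.

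For most axioms this is a short calculation directly from the defining clauses. \EmptyProjectionLeftRule{} and \EmptyProjectionRightRule{} follow because every term in $\Sigma_{\String = \String_1 \String_2} \ProbabilityOf{\emptyset}{\String_1}\ProbabilityOf{\Regex}{\String_2}$ has a zero factor. \ConcatIdentityLeftRule{} and \ConcatIdentityRightRule{} follow because $\ProbabilityOf{\EmptyString}{\String_1}$ is a Kronecker delta at $\String_1 = \EmptyString$, collapsing the convolution. \OrIdentityRule{} and \OrCommutativityRule{} follow by rewriting the convex combination. \ConcatAssocRule{} and the two distributivity rules are simple reindexings of the convolution sums using associativity and bilinearity of multiplication over addition.

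The two genuinely non-trivial cases are \OrAssociativityRule{} and the two unroll rules. For \OrAssociativityRule{}, I would expand both sides, observing that the LHS yields the mixture
\[
  \Probability_1\Probability_2\ProbabilityOf{\Regex}{\String} + (1-\Probability_1)\Probability_2 \ProbabilityOf{\Regex'}{\String} + (1-\Probability_2)\ProbabilityOf{\Regex''}{\String},
\]
and verifying by arithmetic that the RHS, with its carefully chosen coefficients $\Probability_1\Probability_2$ and $\frac{(1-\Probability_1)\Probability_2}{1-\Probability_1\Probability_2}$, produces exactly the same weights. For \UnrollstarLeftRule{} (and symmetrically \UnrollstarRightRule{}), I would split the series defining $\ProbabilityOf{\PRegexStar{\Regex}{\Probability}}{\String}$ into its $n=0$ term, which gives $(1-\Probability) \cdot \ProbabilityOf{\EmptyString}{\String}$, and its $n \geq 1$ tail, which after reindexing $n \mapsto n+1$ and factoring one copy of $\Regex$ to the left matches $\Probability \cdot \ProbabilityOf{\RegexConcat{\Regex}{\PRegexStar{\Regex}{\Probability}}}{\String}$; summing the two pieces recovers the RHS of the unroll axiom.

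The main obstacle is the \UnrollstarLeftRule{}/\UnrollstarRightRule{} case, which involves an infinite series. To justify the splitting and reindexing rigorously I would appeal to absolute convergence: all terms are nonnegative, all partial sums are bounded by $1$, so Fubini/Tonelli permits the necessary rearrangement. Once that bookkeeping is done, all remaining cases reduce to finite algebraic identities, and the induction closes.
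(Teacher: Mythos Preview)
Your proposal is correct and takes essentially the same approach as the paper: a case analysis over the axioms of Figure~\ref{fig:ssr-sre}, with the routine cases dispatched by unfolding the definition of $\ProbabilityDistOf{-}$ and the real work concentrated in \OrAssociativityRule{} (the arithmetic of the mixture weights) and the two \textit{Unroll} rules (splitting off the $n=0$ term and reindexing the tail). If anything you are slightly more careful than the paper, which leaves the equivalence/congruence closure cases and the series-rearrangement justification implicit.
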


\ifappendices(For the proof, see the appendix 
Theorem~\ref{thm:semantics-correctness}.) \else\fi  This theorem will come in handy as we traverse
(with \Expand) or normalize across (as part of \GreedySynth) star-semiring
equivalences.

\subsection{Stochastic Regular Expression Entropy}
The entropy of a data source $S$ is the expected number of bits required to
describe an element drawn from $S$ (formally $-\Sigma_{s\in S}\,P(s)\cdot
\Log_2P(s)$). The entropy of a SRE can be computed directly from its syntax,
when each string is uniquely parsable (\IE{} the SRE is unambiguous) and contains
no empty subcomponents.
\begin{center}
  \begin{tabular}{rcl}
    $\EntropyOf{s}$
    & =
    & 0\\
    
    $\EntropyOf{\PRegexStar{\Regex}{\Probability}}$
    & =
    & $\frac{\Probability}{1-\Probability}(\EntropyOf{\Regex} - \Log_2\Probability)
      - \Log_2(1-\Probability)$\\
    
    $\EntropyOf{\PRegexConcat{\Regex}{\RegexAlt}}$
    & =
    & \EntropyOf{\Regex} + \EntropyOf{\RegexAlt}\\
    
    $\EntropyOf{\PRegexOr{\Regex}{\RegexAlt}{\Probability}}$
    & =
    & $\Probability(\EntropyOf{\Regex} - \Log_2(\Probability)) + (1-\Probability)(\EntropyOf{\RegexAlt} - \Log_2(1-\Probability))$\\
  \end{tabular}
\end{center}

For example, the entropy of $\Regex =
\PRegexOr{\lstinline{"a"}}{\lstinline{"b"}}{.5}$ is 1. The best encoding of a
stream of elements from $\Regex$ will use, on average, 1 bit per element to
determine whether that element is \lstinline{"a"} or \lstinline{"b"}. As an
additional example, a fixed string has no information content, and so has no
entropy.

\begin{theorem}
  \label{thm:correct_entropy}
  If $\Regex$ is unambiguous and does not contain $\emptyset$ as a subterm,
  $\EntropyOf{\Regex}$ is the entropy of $\Regex$.
\end{theorem}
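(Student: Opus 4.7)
The plan is to proceed by structural induction on the SRE $S$, where the hypothesis of unambiguity together with absence of $\emptyset$ subterms is strong enough to make each case match the claimed closed form. The invariant carried through the induction is that (a) $P_S$ is a genuine probability distribution (i.e.\ $\sum_s P_S(s) = 1$), and (b) every string $s$ with $P_S(s) > 0$ has a unique derivation/decomposition under $S$, so that $\log_2 P_S(s)$ can be split compositionally. The absence of $\emptyset$ is what guarantees (a) in the presence of concatenation; unambiguity is what guarantees (b). Both properties are preserved by all syntactic subterms of an unambiguous, $\emptyset$-free $S$, so the induction hypothesis applies uniformly.

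The base case $S = \String$ is immediate since $P_S$ is a point mass, giving $-1\cdot\log_2 1 = 0$. For concatenation $S = \RegexConcat{\Regex_1}{\Regex_2}$, unambiguity gives a unique splitting $s = s_1 s_2$ for each $s \in \LanguageOf{S}$, so $P_S(s) = P_{\Regex_1}(s_1)\,P_{\Regex_2}(s_2)$; expanding $\log_2$ of the product and using $\sum_{s_i} P_{\Regex_i}(s_i) = 1$ (invariant (a)) collapses the cross terms and yields $\EntropyOf{\Regex_1} + \EntropyOf{\Regex_2}$. For alternation $S = \PRegexOr{\Regex_1}{\Regex_2}{\Probability}$, unambiguity forces $\LanguageOf{\Regex_1}\cap\LanguageOf{\Regex_2}=\emptyset$, so the entropy sum splits cleanly over the two sides; factoring $p$ and $(1-p)$ out of each half produces the two $-p\log_2 p$ and $-(1-p)\log_2(1-p)$ terms and leaves $p\EntropyOf{\Regex_1} + (1-p)\EntropyOf{\Regex_2}$, matching the formula.

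The main obstacle is the star case $S = \PRegexStar{\Regex}{\Probability}$. Here unambiguity gives each $s\in\LanguageOf{S}$ a unique decomposition $s = s_1\cdots s_n$, so $P_S(s) = p^n(1-p)\prod_i P_{\Regex}(s_i)$. Expanding $\log_2 P_S(s) = n\log_2 p + \log_2(1-p) + \sum_i \log_2 P_{\Regex}(s_i)$, I split $-\sum_s P_S(s)\log_2 P_S(s)$ into three pieces. The middle piece gives $-\log_2(1-p)$ using $\sum_n p^n(1-p) = 1$. The first piece evaluates to $-(\log_2 p)\sum_n n\,p^n(1-p) = -(\log_2 p)\cdot p/(1-p)$, using the mean of the geometric distribution. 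The third piece, by symmetry in the $n$ positions, contributes $n\cdot(-\EntropyOf{\Regex})$ inside the sum, which by the induction hypothesis becomes $\EntropyOf{\Regex}\cdot p/(1-p)$. Adding the three pieces gives exactly $\frac{p}{1-p}(\EntropyOf{\Regex}-\log_2 p) - \log_2(1-p)$.

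The only subtlety worth flagging is that the rearrangements in the star case reorder an infinite sum over all strings; absolute convergence is what justifies this, and it follows from $p<1$ together with $\EntropyOf{\Regex} < \infty$ (immediate from the IH applied to the finite syntax of $\Regex$) and the fact that $\sum_n n\,p^n$ converges. Everything else is routine algebra once the unambiguity-based decomposition has been installed.
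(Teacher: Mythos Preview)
Your proposal is correct and follows essentially the same approach as the paper's proof: structural induction on $S$, with the same case analysis (point mass for the base case, product splitting for concatenation, disjoint-language splitting for alternation, and the three-term geometric-series computation for the star case). Your added remark on absolute convergence justifying the rearrangement in the star case is a welcome bit of extra rigor that the paper leaves implicit.
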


To understand the difficulties caused by ambiguity, consider the SRE
$\PRegexOr{\lstinline{"a"}}{\lstinline{"a"}}{.5}$. The formula above defines the
entropy to be 1, but the true entropy is 0 (\IE, no bits are needed to know the
generated string will be \lstinline{"a"}). Similar issues occur when trying to
find the entropy when $\emptyset$ is a subterm (what should the entropy of
$\PRegexStar{\emptyset}{.5}$ be?), so we do not define entropy on~$\emptyset$.

Fortunately, we already require unambiguous regular expressions as input to our
synthesis procedure to guarantee the synthesized lenses are well-typed, and we
can easily preprocess empty subexpressions out of SREs that are themselves
nonempty using the star-semiring equivalences (\EG,
$\PRegexOr{\emptyset}{\lstinline{"s"}}{.5} \SSREquiv \lstinline{"s"}$).



\section{Simple Symmetric String Lenses}
\label{sec:ssl}
Simple symmetric string lenses are bidirectional programs that convert string
data between formats. These lenses operate over regular expressions, not SREs.
If $\Regex$ is a stochastic regular expression, $\BRegex$ is the regular
expression obtained by removing probability annotations.

\begin{align*}
\ell ::= &\IdentityLensOf{\BRegex} \; | \; \DisconnectOf{\BRegex}{\BRegexAlt}{\String}{\StringAlt} \; | \; \IterateLensOf{\Lens} \; | \; \ConcatLensOf{\Lens_1}{\Lens_2} \; | \; \SwapLensOf{\Lens_1}{\Lens_2} \; | \; \OrLensOf{\Lens_1}{\Lens_2} \; | \; \ComposeLensOf{\Lens_1}{\Lens_2} \; |\\
&\MergeROf{\Lens_1}{\Lens_2} \; | \; \MergeLOf{\Lens_1}{\Lens_2} \; | \; \InvertOf{\Lens}
\end{align*}

We have already described a number of these combinators in \S\ref{sec:overview}.
Briefly (full definitions are below), the lens $\Lens_1 ; \Lens_2$
composes two lenses sequentially, it 
transforms from one data format to the other by first transforming to an
intermediate format, shared as the target of $\Lens_1$ and the source of
$\Lens_2$. The \MergeR lens combines two sublenses that operate on disjoint
source formats and share the same target.
The lens that gets used depends on what data is in the left-hand format, if the data is
in the domain of $\Lens_1$ then $\Lens_1$ gets used, and similarly for
$\Lens_2$.  The \MergeL lens is symmetric to \MergeR, the two sublenses must
share the same source. Finally, $\InvertOf{\Lens}$ inverts a lens, \CreateR
becomes \CreateL and \PutR becomes \PutL, and vice-versa.

We now give typing rules and semantics for (some of) these combinators. We write
$\Lens \OfType \BRegex \Leftrightarrow \BRegexAlt$ to indicate that $\Lens$ is a
well typed lens between the language of \BRegex and the language of \BRegexAlt.
We present just \CreateR and \PutR in cases where \CreateL and \PutL are
symmetric; full details can be found in\ifappendices{} appendix
\S\ref{sec:appendixlenses} \else{} the full version of the paper\fi. For
simplicity of presentation, we use unambiguous regular expressions everywhere,
even though the identity and disconnect lenses can actually be defined over
ambiguous regular expressions; relaxing this restriction is not problematic.

\begin{centermath}
  \begin{tabular}{m{1cm}@{\hspace*{6em}}m{1cm}@{\hspace*{3em}}}
    $
  \inferrule*
  {
  }
  {
    \IdentityLensOf{\BRegex} \OfType \BRegex \Leftrightarrow \BRegex
  }$
&
$\begin{array}[b]{r@{\ }c@{\ }l}
    \CreateR{} \App s & = & s\\
    \PutR{} \App s \App t & = & s
  \end{array}$
\end{tabular}
\end{centermath}
Note that the identity lens ignores the second argument in the put functions.
Because the two formats are fully synchronized, no knowledge of the prior data
is needed.

\begin{centermath}
  \begin{tabular}{m{5cm}m{2cm}}
    $
    \inferrule*
    {
    \String \in \LanguageOf{\BRegex}\\
    \StringAlt \in \LanguageOf{\BRegexAlt}
    }
    {
    \DisconnectOf{\BRegex}{\BRegexAlt}{\String}{\StringAlt}
    \OfType \BRegex \Leftrightarrow \BRegexAlt
    }$
    &
      $
  \begin{array}{@{}r@{\ }c@{\ }l@{}}
    $\CreateR{} \App \String'$ & = & $\StringAlt$\\
    $\PutR{} \App \String' \App \StringAlt'$ & = & $\StringAlt'$
  \end{array}$
\end{tabular}
\end{centermath}

Just as the identity lens ignores the second argument in puts, disconnect lenses
ignore the first argument in both puts and creates.  The data is unsynchronized in these
two formats, information from one format doesn't impact the other.

\begin{centermath}
\begin{tabular}[b]{l@{\qquad \; \qquad}l}
$
  \inferrule*
  {
    \Lens_1 \OfType \BRegex_1 \Leftrightarrow \BRegexAlt_1\\
    \Lens_2 \OfType \BRegex_2 \Leftrightarrow \BRegexAlt_2
  }
  {
    \ConcatLensOf{\Lens_1}{\Lens_2} \OfType \BRegex_1 \Concat \BRegex_2
    \Leftrightarrow
    \BRegexAlt_1 \Concat \BRegexAlt_2
  }
$
&
$
  \inferrule*
  {
    \Lens_1 \OfType \BRegex_1 \Leftrightarrow \BRegexAlt_1\\
    \Lens_2 \OfType \BRegex_2 \Leftrightarrow \BRegexAlt_2
  }
  {
    \SwapLensOf{\Lens_1}{\Lens_2} \OfType \BRegex_1 \Concat \BRegex_2
    \Leftrightarrow
    \BRegexAlt_2 \Concat \BRegexAlt_1
  }
$
\end{tabular}
\end{centermath}

Concat is similar to concatenation in existing string lens languages like
Boomerang.  For such terms, we do not provide the semantics, and merely refer readers to existing work. The swap combinator is similar to concat, though the second regular expression
is swapped.
\begin{centermath}
  \begin{tabular}{m{5cm}m{7cm}}
$
  \inferrule*
  {
    \Lens_1 \OfType \BRegex_1 \Leftrightarrow \BRegexAlt_1\\
    \Lens_2 \OfType \BRegex_2 \Leftrightarrow \BRegexAlt_2
  }
  {
    \OrLensOf{\Lens_1}{\Lens_2} \OfType
    \RegexOr{\BRegex_1}{\BRegex_2}
    \Leftrightarrow
    \RegexOr{\BRegexAlt_1}{\BRegexAlt_2}
  }
  $
&
  \begin{tabular}{@{}r@{\ }c@{\ }l@{}}
    $\CreateR{} \App \String$
    & =
    & $\begin{cases*}
      \Lens_1.\CreateROf{\String} & if $\String\in\LanguageOf{\BRegex_1}$\\
      \Lens_2.\CreateROf{\String} & if $\String\in\LanguageOf{\BRegex_2}$
      \end{cases*}$\\
    
    $\PutR{} \App \String \App \StringAlt$
    & =
    & $\begin{cases*}
        \Lens_1.\PutROf{\String}{\StringAlt} & if $\String\in\LanguageOf{\BRegex_1} \BooleanAnd \StringAlt\in\LanguageOf{\BRegexAlt_1}$\\
        \Lens_2.\PutROf{\String}{\StringAlt} & if $\String\in\LanguageOf{\BRegex_2} \BooleanAnd \StringAlt\in\LanguageOf{\BRegexAlt_2}$\\
        \Lens_1.\CreateROf{\String} & if $\String\in\LanguageOf{\BRegex_1} \BooleanAnd \StringAlt\in\LanguageOf{\BRegexAlt_2}$\\
        \Lens_2.\CreateROf{\String} & if $\String\in\LanguageOf{\BRegex_2} \BooleanAnd \StringAlt\in\LanguageOf{\BRegexAlt_1}$
      \end{cases*}$\\
  \end{tabular}
\end{tabular}
\end{centermath}
The \OrLens lens deals with data that can come in one form or another. If the
data gets changed from one format to the other, information in the old format is
lost. This differs from the \OrLens lens of classical symmetric
lenses, as those can retain such information in the complement (see
\S\ref{sec:related}).
%
%
\begin{centermath}
  \begin{tabular}{m{5cm}m{6cm}}
$
  \centering
  \inferrule*
  {
    \Lens_1 \OfType \BRegex_1 \Leftrightarrow \BRegexAlt\\
    \Lens_2 \OfType \BRegex_2 \Leftrightarrow \BRegexAlt
  }
  {
    \MergeROf{\Lens_1}{\Lens_2} \OfType
    \RegexOr{\BRegex_1}{\BRegex_2}
    \Leftrightarrow
    \BRegexAlt
  }
$
&
  \begin{tabular}{@{}r@{\ }c@{\ }l@{}}
    $\CreateR{} \App \String$
    & =
    & $\begin{cases*}
      \Lens_1.\CreateROf{\String} & if $\String\in\LanguageOf{\BRegex_1}$\\
      \Lens_2.\CreateROf{\String} & if $\String\in\LanguageOf{\BRegex_2}$
      \end{cases*}$\\
    
    $\CreateL{} \App \StringAlt$
    & =
    & $\Lens_1.\CreateLOf{\StringAlt}$\\
    
    $\PutR{} \App \String \App \StringAlt$
    & =
    & $\begin{cases*}
      \Lens_1.\PutROf{\String}{\StringAlt} & if $\String\in\LanguageOf{\BRegex_1}$\\
      \Lens_2.\PutROf{\String}{\StringAlt} & if $\String\in\LanguageOf{\BRegex_2}$
    \end{cases*}$\\
    
    $\PutL{} \App \StringAlt \App \String$
    & =
    & $\begin{cases*}
        \Lens_1.\PutLOf{\StringAlt}{\String} & if $\String\in\LanguageOf{\BRegex_1}$\\
        \Lens_2.\PutLOf{\StringAlt}{\String} & if $\String\in\LanguageOf{\BRegex_2}$
      \end{cases*}$\\
  \end{tabular}
\end{tabular}
\end{centermath}

The \MergeR lens is interesting because it merges data where one piece of data can be in
two formats, and one data has only one format. In previous
work~\cite{boomerang}, this was combined into \OrLens{}, where
\OrLens{} could have ambiguous types, but we find it more clear to have explicit
merge operators: it is easier to see what lens the synthesis algorithm is
creating.

\begin{centermath}
  \inferrule*
  {
    \Lens_1 \OfType \BRegex \Leftrightarrow \BRegexAlt_1\\
    \Lens_2 \OfType \BRegex \Leftrightarrow \BRegexAlt_2
  }
  {
    \MergeLOf{\Lens_1}{\Lens_2} \OfType
    \BRegex
    \Leftrightarrow
    \RegexOr{\BRegexAlt_1}{\BRegexAlt_2}
  }
\end{centermath}
The \MergeL lens is symmetric to \MergeR.

\begin{centermath}
  \begin{tabular}{m{5cm}m{6cm}}
  $
  \inferrule*
  {
    \Lens_1 \OfType \BRegex \Leftrightarrow \BRegexAlt\\
    \Lens_2 \OfType \BRegexAlt \Leftrightarrow \BRegexAltAlt\\
  }
  {
    \ComposeLensOf{\Lens_1}{\Lens_2} \OfType
    \BRegex \Leftrightarrow \BRegexAltAlt
  }
$
&
  \begin{tabular}{@{}r@{\ }c@{\ }l@{}}
    $\CreateR{} \App \String$ & = & $\Lens_2.\CreateROf{(\Lens_1.\CreateROf{\String})}$\\
    $\CreateL{} \App \StringAlt$ & = & $\Lens_1.\CreateLOf{(\Lens_2.\CreateLOf{\StringAlt})}$\\
    $\PutR{} \App \String \App \StringAltAlt$ & = & $\Lens_2.\PutROf{(\Lens_1.\PutROf{\String}{(\Lens_2.\CreateLOf{\StringAltAlt})})}{\StringAltAlt}$\\
    $\PutL{} \App \StringAltAlt \App \String$ & = & $\Lens_1.\PutLOf{(\Lens_2.\PutLOf{\StringAltAlt}{(\Lens_2.\CreateROf{\String})})}{\String}$
  \end{tabular}
\end{tabular}
\end{centermath}
Composing is interesting in the put functions. Because puts require intermediary
data, we recreate that intermediary data with creates.

\begin{centermath}
\begin{tabular}[b]{l@{\qquad}l}
$
\inferrule*
  {
    \Lens \OfType \BRegex \Leftrightarrow \BRegexAlt
  }
  {
    \IterateLensOf{\Lens} \OfType
    \StarOf{\BRegex}
    \Leftrightarrow
    \StarOf{\BRegexAlt}
  }
  $
  &
  $
  \inferrule*
  {
    \Lens \OfType \BRegex \Leftrightarrow \BRegexAlt
  }
  {
    \InvertOf{\Lens} \OfType \BRegexAlt \Leftrightarrow \BRegex
  }
  $
\end{tabular}
\end{centermath}
The \IterateLens lens deals with iterated data, while inverting reverses the direction of a lens: creating on the right becomes creating on the left and vice versa, and putting on the right becomes putting on the left and vice versa. The invert combinator is particularly useful when chaining many compositions together, as it can be used to align the central types.
\[
  \centering
  \inferrule*
  {
    \Lens \OfType \BRegex \Leftrightarrow \BRegexAlt\\
    \BRegex \SSREquiv \BRegex'\\
    \BRegexAlt \SSREquiv \BRegexAlt'
  }
  {
    \Lens \OfType \BRegex' \Leftrightarrow \BRegexAlt'
  }
\]

Type equivalence enables a lens of type $S \Leftrightarrow T$ to be used as a
lens of type $S' \Leftrightarrow T'$ if $S$ is equivalent to $S'$ and $T$ is
equivalent to $T'$. Type equivalence is useful both for addressing type
annotations, and for making well-typed compositions.

\subsection{Lens Likelihoods}
\label{subsec:lenscosts}
Our likelihood metric is based on the expected amount of information required to
recover a string in one data format from the other. We use the function
$\REntropyOf{\RegexAlt \Given \Lens, \Regex}$ to calculate bounds on the
expected amount of information required to recover a string in $\RegexAlt$ from
a string in $\Regex$, synchronized by $\Lens$. Similarly, we use the function
$\LEntropyOf{\Regex \Given \Lens, \RegexAlt}$ to calculate bounds on the
expected amount of information required to recover a string in $\Regex$ from a
string in $\RegexAlt$, synchronized by $\Lens$. We write $a[b,c]$ to
mean $[ab,ac]$, and $[a,b]+[c,d]$ to mean $[a+c,b+d]$.
\begin{center}
  \begin{tabular}{rcl}
    $\REntropyOf{\RegexAlt \Given \IdentityLensOf{\RegexAlt}, \RegexAlt}$
    & =
    & [0,0]\\
    
    $\REntropyOf{\RegexAlt \Given \DisconnectOf{\Regex}{\RegexAlt}{\String}{\StringAlt}, \Regex}$
    & =
    & [\EntropyOf{\RegexAlt},\EntropyOf{\RegexAlt}]\\

    $\REntropyOf{\PRegexStar{\RegexAlt}{\ProbabilityAlt} \Given \IterateLensOf{\Lens}, \PRegexStar{\Regex}{\Probability}}$
    & =
    & $\frac{\Probability}{1-\Probability}\REntropyOf{\RegexAlt \Given \Lens, \Regex}$\\
    
    $\REntropyOf{\RegexAlt_1 \Concat \RegexAlt_2 \Given \ConcatLensOf{\Lens_1}{\Lens_2}, \Regex_1 \Concat \Regex_2}$
    & =
    & $\REntropyOf{\RegexAlt_1 \Given \Lens_1, \Regex_1} + \REntropyOf{\RegexAlt_2 \Given \Lens_2, \Regex_2}$\\
    
    $\REntropyOf{\RegexAlt_2 \Concat \RegexAlt_1 \Given \SwapLensOf{\Lens_1}{\Lens_2}, \Regex_1 \Concat \Regex_2}$
    & =
    & $\REntropyOf{\RegexAlt_2 \Given \Lens_1, \Regex_2} + \REntropyOf{\RegexAlt_1 \Given \Lens_1, \Regex_1}$\\
    
    $\REntropyOf{\PRegexOr{\RegexAlt_1}{\RegexAlt_2}{\ProbabilityAlt} \Given \OrLensOf{\Lens_1}{\Lens_2}, \PRegexOr{\Regex_1}{\Regex_2}{\Probability}}$
    & =
    & $\Probability\REntropyOf{\Regex_1 \Given \Lens_1, \RegexAlt_1} + (1-\Probability)\REntropyOf{\Regex_2 \Given \Lens_2, \RegexAlt_2}$\\
    
    $\REntropyOf{\RegexAlt \Given \MergeROf{\Lens_1}{\Lens_2}, \PRegexOr{\Regex_1}{\Regex_2}{\Probability}}$
    & =
    & $\Probability\REntropyOf{\RegexAlt \Given \Lens_1, \Regex_1} + (1-\Probability)\REntropyOf{\RegexAlt \Given \Lens_2, \Regex_2}$\\
    
    $\REntropyOf{\PRegexOr{\RegexAlt_1}{\RegexAlt_2}{\ProbabilityAlt} \Given \MergeLOf{\Lens_1}{\Lens_2}, \Regex}$
    & =
    & $[0,\REntropyOf{\RegexAlt_1 \Given \Lens_1, \Regex}+\REntropyOf{\RegexAlt_2 \Given \Lens_2, \Regex}+1]$\\
    
    $\REntropyOf{\Regex \Given \InvertOf{\Lens}, \RegexAlt}$
    & =
    & $\LEntropyOf{\Regex \Given \Lens, \RegexAlt}$\\
  \end{tabular}
\end{center}
$\LEntropyOf{\Regex \Given \Lens, \RegexAlt}$ is defined symmetrically. These
functions bound the expected number of bits to recover one data format from a
synchronized string in the other format. Note that we would be able to exactly
calculate the conditional entropy, were it not for \MergeL and \MergeR. If
$\MergeLOf{\Lens_1}{\Lens_2} \OfType \Regex \Leftrightarrow
\PRegexOr{\RegexAlt_1}{\RegexAlt_2}{\ProbabilityAlt}$, given a string in $s$, we
need to determine if the synchronized string is in $\RegexAlt_1$ or $\RegexAlt_2$.
However, this information content is dependent on how likely the
synchronized string is to be in $\RegexAlt_1$ or $\RegexAlt_2$. Nevertheless, we
typically calculate the conditional entropy exactly, as merges are relatively
uncommon in practice; only 2 of the lenses synthesized in our benchmarks
include merges.

The likelihood of a lens is the negative of its \emph{cost}. The cost of a lens
between two SREs $cost(\Lens, \Regex, \RegexAlt) = \MaxOf{\LEntropyOf{\Regex
    \Given \Lens, \RegexAlt}} + \MaxOf{\REntropyOf{\RegexAlt \Given \Lens,
    \Regex}}$ is the sum of the maximum expected number of bits to recover the
left format from the right, and the right from the left. We have proven theorems
demonstrating the calculated entropy corresponds to the actual conditional
entropy to recover the data.

\begin{theorem}
  Let $\Lens \OfType \Regex \Leftrightarrow \RegexAlt$, where $\Lens$ does not
  include composition, $\Regex$ and $\RegexAlt$ are unambiguous, and neither
  $\Regex$ nor $\RegexAlt$ contain any empty subcomponents.
  \begin{enumerate}
  \item $\REntropyOf{\RegexAlt \Given \Lens, \Regex}$ bounds the entropy of
    $\SetOf{t \SuchThat t \in \LanguageOf{\RegexAlt}}$, given $\SetOf{s
      \SuchThat s \in \LanguageOf{\Regex} \BooleanAnd \Lens.\PutROf{s}{t} = t}$
  \item $\LEntropyOf{\Regex \Given \Lens, \RegexAlt}$ bounds the entropy of
    $\SetOf{s \SuchThat s \in \LanguageOf{\Regex}}$, given $\SetOf{t \SuchThat t
      \in \LanguageOf{\RegexAlt} \BooleanAnd \Lens.\PutLOf{t}{s} = s}$
  \end{enumerate}
\end{theorem}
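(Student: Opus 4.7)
The plan is to prove both clauses simultaneously by structural induction on the lens $\Lens$, using the fact that composition has been excluded so we never have to reason about an intermediate SRE we do not control. By symmetry, it suffices to describe the argument for clause (1); clause (2) is dual, using the analogous equations defining $\LEntropyOf{-}$ together with the round-tripping laws \PutRL{} and \PutLR{}.

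For each lens form, the first step is to characterize, given a fixed $s \in \LanguageOf{\Regex}$, the set $T_s = \SetOf{t \in \LanguageOf{\RegexAlt} \SuchThat \Lens.\PutROf{s}{t} = t}$ of strings synchronized with $s$, together with the induced conditional distribution on $T_s$; the conditional entropy we must bound is then $\sum_s \ProbabilityOf{\Regex}{s} \cdot \EntropyOf{T_s}$. In each base case this is direct: for $\IdentityLensOf{\Regex}$, $T_s = \SetOf{s}$ so the entropy is $0$; for $\DisconnectOf{\Regex}{\RegexAlt}{\String}{\StringAlt}$, the semantics gives $T_s = \LanguageOf{\RegexAlt}$ with the natural distribution, so the entropy is $\EntropyOf{\RegexAlt}$, which is legitimate by Theorem~\ref{thm:correct_entropy} since $\RegexAlt$ is unambiguous and empty-free. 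For $\InvertOf{\Lens}$, the semantics swaps the two directions, matching the defining equation exactly.

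The inductive cases rest on unambiguity: because $\Regex_1 \Concat \Regex_2$ and $\PRegexOr{\Regex_1}{\Regex_2}{\Probability}$ are unambiguous, every $s$ has a unique decomposition, so $T_s$ factors (or is selected) cleanly. For \ConcatLens{} and \SwapLens{} the factorization $T_s = T^1_{s_1} \times T^2_{s_2}$ makes the conditional entropy additive over the two sub-lenses, matching the defining sum by the induction hypothesis. For \OrLens{} and \MergeR{}, the branch taken is determined by $s$ alone, so the conditional entropy decomposes as the weighted average $\Probability \cdot (\ldots) + (1-\Probability) \cdot (\ldots)$ with weights equal to the probabilities of the two branches of $\PRegexOr{\Regex_1}{\Regex_2}{\Probability}$. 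For \IterateLens{}, unambiguity of \PRegexStar{\Regex}{\Probability} gives a unique length-$n$ decomposition; conditioning on the (geometrically distributed) length $n$ yields $\sum_n \Probability^n(1-\Probability) \cdot n \cdot \REntropyOf{\RegexAlt \Given \Lens, \Regex}$, and summing the geometric series $\sum_n n\,\Probability^n(1-\Probability) = \Probability/(1-\Probability)$ reproduces the stated coefficient.

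The main obstacle is the \MergeL{} case, where the defining equation gives an interval rather than a single value. Here, given $s$, the synchronized $t$ can lie in either $\LanguageOf{\RegexAlt_1}$ or $\LanguageOf{\RegexAlt_2}$, and the conditional distribution depends on details of the lens and of $s$ that the static cost function does not track. I would prove the claimed upper bound by decomposing $T_s$ as a disjoint union $T^1_s \sqcup T^2_s$ (using unambiguity of $\PRegexOr{\RegexAlt_1}{\RegexAlt_2}{\ProbabilityAlt}$) and applying the grouping inequality $\EntropyOf{T_s} \le \EntropyOf{\text{branch indicator}} + \EntropyOf{T^1_s} + \EntropyOf{T^2_s}$, where the indicator entropy is bounded by $1$ bit; the two inner entropies are bounded by the induction hypothesis. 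The lower bound of $0$ is trivial. Throughout, we must verify that unambiguity and empty-freeness are inherited by every sub-SRE encountered, which is immediate from the grammar of SREs and the typing rules of the lens combinators.
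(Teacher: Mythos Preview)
Your proposal is correct and follows essentially the same approach as the paper: structural induction on $\Lens$, characterizing the synchronized set $T_s$ (the paper writes $V^{\rightarrow}_{\Lens}(s)$) for each combinator, and using the grouping/partition entropy formula in the \MergeL{} case to obtain the $+1$ and the two sub-lens bounds. The case analyses and the geometric-series calculation for \IterateLens{} match the paper's proof.
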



Note that our definition of $\Entropy^\rightarrow$ contains no case for sequential
composition $\Lens_1 ; \Lens_2$ and our theorem excludes lenses that contain
such compositions.  Defining the entropy of lenses involving
composition is challenging because $\Lens_1$ might, for instance, add some
information that is subsequently projected away in $\Lens_2$. Such operations
can cancel, leaving a zero-entropy bijection composed from two non-zero entropy
transformations. However, detecting such cancellations directly is complicated and
this property is difficult to determine merely from syntax. Fortunately, we are able to
sidestep such considerations by synthesizing \emph{DNF lenses}---simple symmetric
lenses that inhabit a disjunctive normal form that does not include
composition.


\section{Synthesis}
\label{sec:synthesis}
Algorithm~\ref{alg:synth-sym-lens} presents our synthesis algorithm at a high
level of abstraction. This algorithm searches for likely lenses in priority
order one ``class'' at a time using a \GreedySynth subroutine. Each class is the
set of lenses that can by typed by a pair of regular expressions, modulo a set
of simple axioms such as associativity, commutativity, and distributivity. The
\Expand subroutine generates new classes using the star-unrolling axioms
(\UnrollstarLeftRule{} and \UnrollstarRightRule).

To summarize, the input regular
expressions are first converted into stochastic regular expressions with
\ToStochastic. This pair of SREs is used to initialize a priority queue ($pq$).
The priority of a SRE pair is the number of rewrites needed to derive the pair
from the originals. Next, $\PCF{SynthSymLens}$ enters a loop that searches for
likely lenses. The loop terminates when the algorithm believes it is unlikely to
find a better lens than the best one it has found so far (a termination
condition defined by $\PCF{\Continue}$). Within each iteration of the loop, it:
\begin{itemize}
\item pops the next class ($S$, $T$) of lenses to
search off of the priority queue ($\PCF{\PQ.Pop}$),
\item executes $\GreedySynth$ to find a best lens in that class if
one exists ($\Lens$),
using the examples $\Examples$ to filter out potential lenses that do not satisfy
the specification,
\item replaces $best$ with $\Lens$, if $\Lens$ is more likely according
to our information-theoretic metric, and
\item adds the SREs derived from rewriting $S$ and $T$ ($\Expand(S,T)$)
  to the priority queue.
\end{itemize}
When the loop terminates, the search returns the globally best lens found
($best$).  Each subroutine of this algorithm will be explained in
further depth in the following subsections.

\begin{algorithm}[t]
  \caption{\SynthSymLens}
  \label{alg:synth-sym-lens}
  \begin{algorithmic}[1]
    \Function{SynthSymLens}{$\BRegex,\BRegexAlt,\Examples$}
    \State $\Regex \gets \Call{\ToStochastic}{\BRegex}$
    \State $\RegexAlt \gets \Call{\ToStochastic}{\BRegexAlt}$
    \State $\RXSearchState \gets \Call{\PQ.Create}{\Regex,\RegexAlt}$
    \State $\Best \gets \None$
    \While{$\Call{\Continue}{\RXSearchState,\Best}$}
    \State $(\Regex,\RegexAlt) \gets \Call{\PQ.Pop}{\RXSearchState}$
    \State $\Lens \gets \Call{\GreedySynth}{\Examples,\Regex,\RegexAlt}$
    \If{$\Call{Cost}{\Lens} < \Call{Cost}{\Best}$}
    \State $\Best \gets \Lens$
    \EndIf
    \State $\Call{\PQ.Push}{\RXSearchState,\Expand(\Regex,\RegexAlt)}$
    \EndWhile
    \State \ReturnVal{best}
    \EndFunction
  \end{algorithmic}
\end{algorithm}

\subsection{Searching for $(\Regex,\RegexAlt)$ Candidate Classes}
The first phase of the synthesis algorithm looks for pairs of SREs
$(\Regex,\RegexAlt)$ to drive the \GreedySynth algorithm.  These pairs are
generated using the star unrolling axioms:
\begin{center}
  \begin{tabular}{rcl}
    $\PRegexStar{\Regex}{\Probability}$
    & \Rewrite
    & \PRegexOr{\EmptyString{}}{(\RegexConcat{\Regex{}}{\PRegexStar{\Regex{}}{\Probability}})}{1-\Probability}\\

    $\PRegexStar{\Regex}{\Probability}$
    & \Rewrite
    & \PRegexOr{\EmptyString{}}{(\RegexConcat{\PRegexStar{\Regex{}}{\Probability}}{\Regex{}})}{1-\Probability}
  \end{tabular}
\end{center}
as well as the congruence rules that allow these rewrites to be applied on
subexpressions.
The priority queue yields stochastic regular expressions generated using fewer
rewrites first. Only when there are no more proposed regular expressions derived
from $n$ rewrites will \PCF{\PQ.Pop} propose regular expressions derived from
$n+1$ rewrites.

The procedure \PCF{\Continue} terminates the loop based on the how long the
search has been going, and how hard it expects the next class of problems to be.
In particular, if $\PCF{PQ.Peek}(pq) = (\Regex,\RegexAlt)$, that RE pair is at
distance $d$, the number of pairs in $pq$ at distance $d$ is $n$, and the
current best lens has cost $c$, then $\Call{Continue}{pq}$ continues the loop
while $c < d + \Log_2(n)$. This termination condition is based around two
primary principles: do not search overly deep ($d$), and do not tackle a
frontier that would take too long to process ($\Log_2(n)$). The log of the
frontier is included because the frontier grows much faster than lens costs
grow. Lens cost grows with the log of the expected number of choices in a given
lens (due to its information theoretic basis), so the frontier calculation does
too.

When $\Call{Continue}{pq}$ terminates the loop, the algorithm stops proposing
regular expression pairs, and instead returns to the user the best lens found
thus far. If the algorithm finds a bijective lens, which
has zero cost, it will immediately return.
\subsection{Stochastic DNF Regular Expressions}
\label{subsec:sdnfre}
The \GreedySynth subroutine converts input SREs into a temporary pseudo-normal
form, \emph{Stochastic DNF regular expressions} (\SDNFREabbrev{}s).
\SDNFREabbrev{}s normalize across many of the star-semiring equivalences -- if
two regular expressions are equivalent modulo differences in associativity,
commutativity, or distributivity, their corresponding SREs are syntactically
equal.

Syntactically, stochastic DNF regular expressions (\DNFRegex, \DNFRegexAlt) are
lists of stochastic sequences. Stochastic sequences (\Sequence, \SequenceAlt)
themselves are lists of interleaved strings and stochastic atoms. Stochastic
atoms (\Atom,\AtomAlt) are iterated stochastic DNF regular expressions.

\begin{center}
  \begin{tabular}{l@{\ }c@{\ }l@{\ }>{\itshape\/}r}
    \Atom{},\AtomAlt{} & \GEq{} & \PRegexStar{\DNFRegex{}}{\Probability}
\\
    \Sequence{},\SequenceAlt{} & \GEq{} &
                                                       $\SequenceOf{\String_0\SeqSep\Atom_1\SeqSep\ldots\SeqSep\Atom_n\SeqSep\String_n}$ 
\\
    \DNFRegex{},\DNFRegexAlt{} & \GEq{} & $\DNFOf{(\Sequence_1,\Probability_1)\DNFSep\ldots\DNFSep(\Sequence_n,\Probability_n)}$ 
  \end{tabular}
\end{center}

Intuitively, stochastic DNF regular expressions are stochastic
regular expressions with all concatenations fully distributed over all
disjunctions. As such, the language of a stochastic DNF regular expression is a
union of its subcomponents, the language of a stochastic sequence is the
concatenation of its subcomponents, and the language of a stochastic atom is the
iteration of its subcomponent. For
$\DNFOf{(\Sequence_1,\Probability_1)\DNFSep\ldots\DNFSep(\Sequence_n,\Probability_n)}$
to be a valid stochastic DNF regular expression, the probabilities must sum to
one ($\sum_{i=0}^n\Probability_i = 1$).

\begin{trivlist}
  \centering
\item 
  \begin{tabular}{@{\ }v@{\ }q}
    \LanguageOf{\PRegexStar{\DNFRegex{}}{\Probability}} \ =\  &
                                            \{\String_1\Concat\ldots\Concat\String_n
                                            \SuchThat \forall i,  \String_i\in\LanguageOf{\DNFRegex}\}\\
    \LanguageOf{\SequenceOf{\String_0\SeqSep\Atom_1\SeqSep\ldots\SeqSep\Atom_n\SeqSep\String_n}}\ =\  & 
    \{\String_0\Concat\StringAlt_1\cdots\StringAlt_n\Concat\String_n \SuchThat \StringAlt_i\in\LanguageOf{\Atom_i}\}
    \\
    \LanguageOf{\DNFOf{(\Sequence_1,\Probability_1)\DNFSep\ldots\DNFSep(\Sequence_n,\Probability_n)}}\ =\  &
    \{\String \SuchThat \String \in \LanguageOf{\Sequence_i} \text{\ and $i\in\RangeIncInc{1}{n}$}\}
  \end{tabular}
\end{trivlist}

As these DNF regular expressions are \emph{stochastic}, they are annotated with
probabilities to express a probability distribution, in addition to a
language.
\begin{center}
  \begin{tabular}{rcl}
    $\ProbabilityOf{\PRegexStar{\DNFRegex}{\Probability}}{\String}$
    & =
    & $\Sigma_n \Sigma_{\String = \String_1 \ldots \String_n}\Probability^n(1-\Probability)\Pi_{i=1}^n\ProbabilityOf{\DNFRegex}{\String_i}$\\
    
    $\ProbabilityOf{\SequenceOf{\String_0 \SeqSep \Atom_1 \SeqSep \ldots \SeqSep \Atom_n \SeqSep \String_n}}{\String'}$
    & =
    & $\Sigma_{\String' = \String_0\String_1'\ldots\String_n'\String_n}\Pi_{i = 1}^n\ProbabilityOf{\Atom_i}{\String_i'}$ \\
    
    $\ProbabilityOf{\DNFOf{(\Sequence_1,\Probability_1) \DNFSep \ldots \DNFSep (\Sequence_n,\Probability_n)}}{\String}$
    & =
    & $\Sigma_{i=1}^n\Probability_i\ProbabilityOf{\Sequence_i}{\String}$\\
  \end{tabular}
\end{center}

\begin{figure}
  \raggedright
  $\ConcatSequence{} \OfType{} \ArrowTypeOf{\SequenceType{}}{\ArrowTypeOf{\SequenceType{}}{\SequenceType{}}}$\\
  $\ConcatSequenceOf{[\String_0\SeqSep\Atom_1\SeqSep\ldots\SeqSep\Atom_n\SeqSep\String_n]}{[\StringAlt_0\SeqSep\AtomAlt_1\SeqSep\ldots\SeqSep\AtomAlt_m\SeqSep\StringAlt_m]}=
  [\String_0\SeqSep\Atom_1\SeqSep\ldots\SeqSep\Atom_n\SeqSep\String_n\Concat\StringAlt_0\SeqSep\AtomAlt_1\SeqSep\ldots\SeqSep\AtomAlt_m\SeqSep\StringAlt_m]$\\

  \medskip
  
  $\ConcatDNF{} \OfType{} \ArrowTypeOf{\DNFRegexType{}}{\ArrowTypeOf{\DNFRegexType{}}{\DNFRegexType{}}}$\\
  $\ConcatDNFOf{\DNFOf{(\Sequence_1,\Probability_1)\DNFSep\ldots\DNFSep(\Sequence_n,\Probability_n)}}{\DNFOf{(\SequenceAlt_1,\ProbabilityAlt_1)\DNFSep\ldots\DNFSep(\SequenceAlt_m,\ProbabilityAlt_m)}}=$\\
      $\DNFLeft (\ConcatSequenceOf{\Sequence_1}{\SequenceAlt_1},\Probability_1\ProbabilityAlt_1)\DNFSep \ldots
      \DNFSep
      (\ConcatSequenceOf{\Sequence_1}{\SequenceAlt_m},\Probability_1\ProbabilityAlt_m)\DNFSep
      \ldots$\\
      $\DNFSep
      (\ConcatSequenceOf{\Sequence_n}{\SequenceAlt_1},\Probability_n\ProbabilityAlt_1)\DNFSep
      \ldots \DNFSep
      (\ConcatSequenceOf{\Sequence_n}{\SequenceAlt_m},\Probability_n\ProbabilityAlt_m) \DNFRight$
  
  \medskip
  
  $\OrDNF{}_{\Probability} \OfType{}
  \ArrowTypeOf{\DNFRegexType{}}{\ArrowTypeOf{\DNFRegexType{}}{\DNFRegexType{}}
  }$ \\
  $\OrDNFOf{\DNFOf{(\Sequence_1,\Probability_1)\DNFSep\ldots\DNFSep(\Sequence_n,\Probability_n)}}{\DNFOf{(\SequenceAlt_1,\ProbabilityAlt_1)\DNFSep\ldots\DNFSep(\SequenceAlt_m,\ProbabilityAlt_m)}}{\Probability} =$\\
  $\DNFOf{(\Sequence_1,\Probability_1\Probability)\DNFSep\ldots\DNFSep(\Sequence_n,\Probability_n\Probability)\DNFSep(\SequenceAlt_1,\ProbabilityAlt_1(1-\Probability))\DNFSep\ldots\DNFSep(\SequenceAlt_m,\ProbabilityAlt_m(1-\Probability))}$
  
  \medskip
  
  \AtomToDNF{} \OfType
  \ArrowTypeOf{\AtomType{}}{\DNFRegexType{}}\\
  $\AtomToDNFOf{\Atom} = \DNFOf{(\SequenceOf{\EmptyString \SeqSep \Atom \SeqSep
      \EmptyString},1)}$
  \caption{Stochastic DNF Regular Expression Functions}
  \label{fig:dnf-regex-functions}
\end{figure}

The algorithm for converting a stochastic regular expressions $\Regex$ into its
corresponding \SDNFREabbrev form, written
$\ToDNFRegexOf{\Regex}$, is defined below. This conversion relies on operators
defined in Figure~\ref{fig:dnf-regex-functions}.
\[
  \begin{array}{rcl@{\hspace*{3em}}rcl}
    \ToDNFRegexOf{\String} 
    & = 
    & \DNFOf{(\SequenceOf{\String},1)}

    & \ToDNFRegexOf{(\RegexConcat{\Regex_1}{\Regex_2})} 
    & = 
    & \ToDNFRegexOf{\Regex_1} \ConcatDNF \ToDNFRegexOf{\Regex_2} \\

    \ToDNFRegexOf{\emptyset} 
    & = 
    & \DNFOf{}

    & \ToDNFRegexOf{(\PRegexOr{\Regex_1}{\Regex_2}{\Probability})} 
    & = 
    & \ToDNFRegexOf{\Regex_1} \OrDNF_{\Probability} \ToDNFRegexOf{\Regex_2} \\

    \ToDNFRegexOf{(\PRegexStar{\Regex}{\Probability})} & = & \AtomToDNFOf{\PRegexStar{(\ToDNFRegexOf{\Regex})}{\Probability}}\\
    
  \end{array}
\]
After this syntactic conversion has taken place, the sequences are ordered
(normalizing commutativity differences). This conversion respects languages and
probability distributions.

\begin{theorem}
  $\ProbabilityOf{\Regex}{\String} = \ProbabilityOf{\ToDNFRegex
    \Regex}{\String}$ and $\LanguageOf{\Regex} =
  \LanguageOf{\ToDNFRegexOf{\Regex}}$.
\end{theorem}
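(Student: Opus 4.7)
\noindent
The plan is to prove both equalities simultaneously by structural induction on the stochastic regular expression $\Regex$. The base cases (constant strings $\String$ and $\emptyset$) are immediate from the definitions: $\LanguageOf{\DNFOf{(\SequenceOf{\String},1)}} = \{\String\}$ with point mass on $\String$, and $\LanguageOf{\DNFOf{}} = \emptyset$ with the zero distribution. The inductive cases for concatenation, disjunction, and iteration all reduce to showing that the three operators $\ConcatDNF$, $\OrDNF_\Probability$, and $\AtomToDNF$ (together with the underlying $\ConcatSequence$) correctly lift their SRE counterparts to DNF form, both semantically and probabilistically.

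To that end, I would first establish three auxiliary lemmas corresponding to the three nontrivial inductive cases. For $\ConcatDNF$, the claim is that $\LanguageOf{\ConcatDNFOf{\DNFRegex_1}{\DNFRegex_2}} = \{\String_1 \Concat \String_2 \SuchThat \String_1 \in \LanguageOf{\DNFRegex_1},\ \String_2 \in \LanguageOf{\DNFRegex_2}\}$ and $\ProbabilityOf{\ConcatDNFOf{\DNFRegex_1}{\DNFRegex_2}}{\String} = \Sigma_{\String = \String_1 \String_2} \ProbabilityOf{\DNFRegex_1}{\String_1} \ProbabilityOf{\DNFRegex_2}{\String_2}$; this in turn rests on a sub-lemma for $\ConcatSequence$ saying that concatenating two sequences concatenates their languages and convolves their probability functions (which follows by inspection of the definition, since concatenating two sequences of atoms just juxtaposes the atom list with the boundary strings glued together). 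For $\OrDNF_\Probability$, the language is the union and the probability is the $\Probability$-weighted mixture, immediate from the definition that scales each clause probability by $\Probability$ or $1-\Probability$ and then concatenates the clause lists. For $\AtomToDNF$, the DNF $\DNFOf{(\SequenceOf{\EmptyString \SeqSep \Atom \SeqSep \EmptyString},1)}$ has a single clause of probability $1$ containing just $\Atom$, so both its language and probability function agree with $\Atom$'s.

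With these three lemmas, the inductive step is mechanical. For concatenation, $\LanguageOf{\ToDNFRegexOf{\RegexConcat{\Regex_1}{\Regex_2}}} = \LanguageOf{\ConcatDNFOf{\ToDNFRegexOf{\Regex_1}}{\ToDNFRegexOf{\Regex_2}}}$ which, by Lemma~1 and the induction hypothesis, equals $\{\String_1 \Concat \String_2 \SuchThat \String_1 \in \LanguageOf{\Regex_1},\ \String_2 \in \LanguageOf{\Regex_2}\} = \LanguageOf{\RegexConcat{\Regex_1}{\Regex_2}}$, and the probability identity follows by the same chain together with the SRE semantics of concatenation as a convolution. The or-case is analogous using Lemma~2. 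The star case $\PRegexStar{\Regex}{\Probability}$ applies Lemma~3 to reduce to showing that $\PRegexStar{(\ToDNFRegexOf{\Regex})}{\Probability}$, viewed as an atom, has the same language and distribution as $\PRegexStar{\Regex}{\Probability}$; this is where the induction hypothesis on $\Regex$ is used to unfold the $n$-fold iterated product in the definition of the star's probability.

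The main obstacle I anticipate is Lemma~1, specifically the probability half. The definition of $\ConcatDNF$ forms the full Cartesian product of clauses with probabilities $\Probability_i \ProbabilityAlt_j$, and one must verify that summing $\Probability_i \ProbabilityAlt_j \ProbabilityOf{\ConcatSequenceOf{\Sequence_i}{\SequenceAlt_j}}{\String}$ over all $(i,j)$ and all splittings $\String = \String_1 \String_2$ rearranges exactly into $\Sigma_{\String = \String_1 \String_2}(\Sigma_i \Probability_i \ProbabilityOf{\Sequence_i}{\String_1})(\Sigma_j \ProbabilityAlt_j \ProbabilityOf{\SequenceAlt_j}{\String_2})$. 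This is essentially a distributivity-of-sums-over-products argument, justified by Fubini on the finite index sets, but it does require the sequence-level sub-lemma that $\ProbabilityOf{\ConcatSequenceOf{\Sequence}{\SequenceAlt}}{\String} = \Sigma_{\String = \String_1 \String_2} \ProbabilityOf{\Sequence}{\String_1} \ProbabilityOf{\SequenceAlt}{\String_2}$, which itself needs an auxiliary induction on the atom lists inside the two sequences. Once that bookkeeping is in place, the rest of the argument is routine.
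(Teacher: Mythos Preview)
Your proposal is correct and follows essentially the same approach as the paper: structural induction on $\Regex$, with auxiliary lemmas showing that $\ConcatSequence$, $\ConcatDNF$, $\OrDNF_\Probability$, and $\AtomToDNF$ preserve both language and probability semantics, including exactly the sequence-level convolution sub-lemma $\ProbabilityOf{\ConcatSequenceOf{\Sequence}{\SequenceAlt}}{\String} = \Sigma_{\String = \String_1 \String_2} \ProbabilityOf{\Sequence}{\String_1} \ProbabilityOf{\SequenceAlt}{\String_2}$ and the distributivity-of-sums rearrangement you flag as the main obstacle. The only organizational difference is that the paper proves the language equality separately first and then the probability equality, whereas you propose doing both at once; this is inessential.
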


\paragraph*{Entropy}
We have developed a syntactic means for finding the entropy of a stochastic DNF
regular expression, like we have for stochastic regular expressions. This
enables us to efficiently find the entropy without first converting a \SDNFREabbrev to
a stochastic regular expression.

\begin{center}
  \begin{tabular}{rcl}
    $\EntropyOf{\PRegexStar{\DNFRegex}{\Probability}}$
    & =
    & $\frac{\Probability}{1-\Probability}(\EntropyOf{\DNFRegex} - \Log_2\Probability)
      - \Log_2(1-\Probability)
      $\\
    
    $\EntropyOf{\SequenceOf{\String_0 \SeqSep \Atom_1 \SeqSep \ldots \SeqSep \Atom_n \SeqSep \String_n}}$
    & =
    & $\Sigma_{i = 1}^n\EntropyOf{\Atom_i}$ \\
    
    $\EntropyOf{\DNFOf{(\Sequence_1,\Probability_1) \DNFSep \ldots \DNFSep (\Sequence_n,\Probability_n)}}$
    & =
    & $\Sigma_{i=1}^n\Probability_i(\EntropyOf{\Sequence_i}+\Log_2\Probability_i)$\\
  \end{tabular}
\end{center}

\begin{theorem}
  $\EntropyOf{\DNFRegex}$ is the entropy of $P_{\DNFRegex}$.
\end{theorem}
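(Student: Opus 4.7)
The plan is to prove the theorem by structural induction on the stochastic DNF regular expression $\DNFRegex$, mirroring the structure of Theorem~\ref{thm:correct_entropy} for ordinary SREs. Under the hood, the three syntactic layers (atoms, sequences, DNF) each correspond to a standard entropy decomposition: iteration $\leadsto$ geometric mixture, concatenation $\leadsto$ additivity under independence, disjunction $\leadsto$ chain rule with a mixture indicator. Since the theorem statement does not repeat the unambiguity/nonemptiness side conditions, I would first state explicitly the same preconditions used for Theorem~\ref{thm:correct_entropy} and note that the \ToDNFRegex translation preserves both (unambiguity is preserved by distributivity, and the definitions of $\ConcatDNF$, $\OrDNF$, and $\AtomToDNF$ introduce no empty subcomponents when the input has none).

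For the \textbf{atom case} $\Atom = \PRegexStar{\DNFRegex}{\Probability}$, I would apply the chain rule of entropy to the joint variable $(n, s_1, \ldots, s_n)$, where $n$ is the number of iterations and each $s_i$ is drawn independently from $\DNFRegex$. Unambiguity guarantees that this decomposition is a bijection onto generated strings, so no information is lost when switching to the joint representation. The number of iterations is geometric with parameter $\Probability$, giving $E[n] = \tfrac{\Probability}{1-\Probability}$ and $H(n) = -\tfrac{\Probability}{1-\Probability}\log_2\Probability - \log_2(1-\Probability)$. Combining this with the inductive hypothesis on $\DNFRegex$ (so $H$ of each draw equals $\EntropyOf{\DNFRegex}$) and using linearity of expectation over $n$ yields exactly the displayed formula.

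For the \textbf{sequence case}, the fixed strings $\String_0, \ldots, \String_n$ are deterministic and contribute zero entropy, while unambiguity makes the contents of the atoms independent. Additivity of entropy for independent random variables, together with the inductive hypothesis applied to each atom, yields $\sum_{i=1}^n \EntropyOf{\Atom_i}$. For the \textbf{DNF case}, disjointness of the clause languages (another consequence of unambiguity) lets me condition on the clause index $i$: with probability $\Probability_i$ the generator picks clause $i$ and then emits a string from $\Sequence_i$. The chain rule gives $H = H(i) + \sum_i \Probability_i \EntropyOf{\Sequence_i}$ with $H(i) = -\sum_i \Probability_i \log_2 \Probability_i$; this matches the stated formula (reading the summand as $\Probability_i(\EntropyOf{\Sequence_i} - \log_2 \Probability_i)$, as done in the analogous SRE case).

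The main obstacle will be the atom case: justifying cleanly that the chain-rule decomposition into ``length first, then contents'' is lossless requires that the concatenation $s_1\cdots s_n$ has a \emph{unique} parse into $n$ pieces with each $s_i \in \LanguageOf{\DNFRegex}$, i.e.\ that $\PRegexStar{\DNFRegex}{\Probability}$ is unambiguous in the strong sense. I would formalize this by invoking the input precondition and the preservation lemma for \ToDNFRegex, and then carry out the geometric-series manipulation $\sum_n n\, \Probability^n (1-\Probability) = \tfrac{\Probability}{1-\Probability}$ to land on the stated closed form. The sequence and DNF cases are then routine applications of standard entropy identities plus the inductive hypothesis.
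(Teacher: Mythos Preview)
Your proposal is correct and follows essentially the same approach as the paper: mutual induction over the three syntactic layers (atoms, sequences, DNF), with the star case handled via the geometric-length decomposition, the sequence case via additivity under unambiguous concatenation, and the DNF case via the mixture/chain-rule identity $H = H(p_1,\ldots,p_n) + \sum_i p_i H(\Sequence_i)$. The paper's proof is terser---it largely defers the computations to the analogous SRE theorem (Theorem~\ref{thm:correct_entropy})---but the structure and the key identities are the same as what you outline.
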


\paragraph*{\ToStochastic} With stochastic DNF regular expressions and
\ToDNFRegex defined, it is easier to explain \ToStochastic, the function that
converts regular expressions into stochastic regular expressions. If $\Regex$ is
a stochastic regular expression generated by \ToStochastic, then when put into
\SDNFREabbrev form, $\ToDNFRegexOf{\Regex} = \DNFOf{(\Sequence_1,\frac{1}{n})
  \DNFSep \ldots \DNFSep (\Sequence_n,\frac{1}{n})}$ for some sequences
$\Sequence_1\ldots\Sequence_n$, and every stochastic atom generated by
\ToStochastic is $\PRegexStar{\DNFRegex}{.8}$. In particular, $\ToDNFRegex$
generates regular expressions whose DNF form gives equal probability to all
sequence subcomponents of the \SDNFREabbrevs, and gives a .8 chance for
stars to continue iterating. In our experience generating random strings from regular
expressions, these probabilities provide good distributions of strings---stars
are iterated 4 times on average, and no individual choice in a series of
disjunctions is chosen disproportionately often.

\subsection{Relevance Annotations}
\label{subsec:relevanceannotations}

Even though our generated probability distribution works well in most
situations, it is not perfect. Consider synthesizing a lens between the formats
shown in Figure~\ref{fig:minimized-representations}. Because salary information is
present in \lstinline{emp_salaries}, but not in \lstinline{emp_insurance}, the
algorithm might spend a long time (fruitlessly) trying to construct lenses that
transform the salary to information present in \lstinline{emp_insurance} even
though that is impossible. In a similar but more elaborate example, such wasted
processing effort may cause synthesis to fail to terminate in any reasonable
amount of time.

One solution would be to cut off the search early. However, then one runs into
the opposite problem: In other scenarios, salary information may be present in
the other format, but it may take quite a bit of work to find a
transformation that connects the salary in \lstinline{emp_salaries} to the
salary in the second format. Hence, early termination
may cut off synthesis before the right lens is found.

We can solve both problems by allowing users to augment the specifications with
\emph{relevance annotations}. Sometimes, users have external knowledge that certain
information appears exclusively in one format or the other, or they may know the
information is present both formats. By communicating this knowledge to the synthesis algorithm
through relevance annotations, users can force the synthesis of lenses that
discard or retain certain information.
The first annotation, $\SkipOf{\Regex}$, says the information of
$\Regex$ appears only in $\Regex$, and can safely be
projected. The second annotation, $\SRequireOf{\Regex}$, says the information of
$\Regex$ appears in the other format and cannot be discarded.

For example, users can easily recognize that salary information is not present
in \lstinline{emp_insurance}. By annotating the \lstinline{salary} field as
\lstinline{skip(salary)}, users can add this knowledge to the specification to
optimize the search. In practice, we define the information content of
$\Skip(\Regex)$ to be zero.
\begin{gather*}
  \EntropyOf{\Skip(\Regex)} = 0
\end{gather*}

Similarly, users can recognize that employee names are present in both files. By
annotating instances of \lstinline{name} as \lstinline{require(name)}, users can
add this knowledge to the specification to optimize the search and force the
generated lens to retain \lstinline{name} information. In practice, we
make any lens that loses ``required'' information infinitely unlikely.
\begin{center}
  \begin{tabular}{rcl}
    $\REntropyOf{\SRequireOf{\RegexAlt} \Given \Lens, \Regex}$
    & =
    & $\begin{cases}
      \infty & \text{if }\REntropyOf{\SRequireOf{\RegexAlt} \Given \Lens, \Regex}
      \neq 0\\
    0 & \text{otherwise}
  \end{cases}$\\
    
    $\REntropyOf{\RegexAlt \Given \DisconnectOf{\Regex}{\RegexAlt}{\String}{\StringAlt}, \Regex}$
    & =
    & $\begin{cases}
      \infty & \vspace*{-.3em}\text{if $\RegexAlt$ contains $\SRequireOf{\RegexAlt'}$ as a}\\
      & \text{subexpression, where $\EntropyOf{\RegexAlt'} \neq 0$}\\
    [\EntropyOf{\RegexAlt},\EntropyOf{\RegexAlt}] & \text{otherwise}
  \end{cases}$\\
  \end{tabular}
\end{center}

\subsection{Symmetric DNF Lenses}
Symmetric DNF lenses are an intermediate synthesis target for \GreedySynth.
There are many fewer symmetric DNF lenses than symmetric regular lenses. In
fact, if one does not use the star-unrolling axioms, there are only finitely
many DNF lenses of a given type (though there are still many more symmetric DNF
lenses than DNF bijective lenses).

The structure of symmetric DNF lenses mirrors that of SDNF REs. A symmetric DNF
lens ($sdl$) is a union of symmetric sequence lenses, a symmetric sequence
lens ($ssql$) is a concatenation of symmetric atom lenses, and a
symmetric atom lens ($sal$) is an iteration of a symmetric DNF lens.

Just as we analyzed the information content of ordinary regular expressions, we
can analyze the information content of DNF regular expressions. As before, we
use $\REntropyOf{\DNFRegexAlt \Given \SDNFLens, \DNFRegex}$ to calculate bounds
on the expected amount of information required to recover a string in
$\DNFRegexAlt$ from a string in $\DNFRegex$, synchronized by $\SDNFLens$. We use
the function $\LEntropyOf{\DNFRegex \Given \SDNFLens, \DNFRegexAlt}$ to
calculate bounds on the expected amount of information required to recover a
string in $\DNFRegex$ from a string in $\DNFRegexAlt$, synchronized by
$\SDNFLens$.

The details of these definitions are syntactically tedious, but not
intellectually difficult. We elide them here but include them in
\ifappendices appendix \S\ref{sec:appendixdnf}.  \else{}the full version of the
paper.  \fi In the original Optician paper~\cite{optician}, DNF lenses were
proven equivalent in expressiveness to standard lenses. While we conjecture
symmetric DNF lenses are equivalent in expressivity to our standard symmetric
lenses, we have not proven this equivalence.

\subsection{\GreedySynth}
\label{subsec:greedy-synth}
The synthesis procedure comprises three algorithms: one that greedily finds
symmetric DNF lenses (\GreedySynth), one that greedily finds symmetric sequence
lenses (\GreedySeqSynth), and one that finds symmetric atom lenses
(\AtomSynth{}). These three algorithms are hierarchically structured:
\GreedySynth relies on \GreedySeqSynth, \GreedySeqSynth relies on \AtomSynth,
and \AtomSynth relies on \GreedySynth. The structure of the algorithms mirrors
the structure of symmetric DNF lenses and \SDNFREabbrevs.

\begin{algorithm}[t]
  \caption{\GreedySynth}
  \label{alg:greedysynth}
  \begin{algorithmic}[1]
    \Function{GreedySynth}{$\Examples,\DNFOf{(\Sequence_1,\Probability_1)\DNFSep\ldots\DNFSep(\Sequence_n,\Probability_n)},\DNFOf{(\SequenceAlt_1,\ProbabilityAlt_1)\DNFSep\ldots\DNFSep(\SequenceAlt_m,\ProbabilityAlt_m)}$}
    \If{$\Call{CannotMap}{\Examples,\DNFOf{(\Sequence_1,\Probability_1)\DNFSep\ldots\DNFSep(\Sequence_n,\Probability_n)},\DNFOf{(\SequenceAlt_1,\ProbabilityAlt_1)\DNFSep\ldots\DNFSep(\SequenceAlt_m,\ProbabilityAlt_m)}}$}
    \State $\ReturnVal{\None}$
    \EndIf
    \State $\SLS \gets
    \Call{\CartesianMap}{\GreedySeqSynth(\Examples),\ListOf{\Sequence_1;\ldots;\Sequence_n},\ListOf{\SequenceAlt_1;\ldots;\SequenceAlt_m}}$
    \State $\GreedyState \gets \Call{PQ.Create}{\SLS}$
    \State $\LensBuilder \gets \Call{LensBuilder.Empty}{}$
    \While{$\Call{PQ.IsNonempty}{\GreedyState}$}
    \State $\SSQLens \gets
    \Call{PQ.Pop}{\GreedyState}$
    \If{$\Call{LensBuilder.UsefulAdd}{\LensBuilder,\SSQLens,\Examples}$}
    \State $\LensBuilder \gets \Call{LensBuilder.AddSeq}{\LensBuilder,\SSQLens}$
    \EndIf
    \EndWhile
    \State \ReturnVal{\Call{LensBuilder.ToDNFLens}{\GreedyState}}
    \EndFunction
  \end{algorithmic}
\end{algorithm}

\paragraph*{Symmetric DNF Lenses} Algorithm~\ref{alg:greedysynth} presents
\GreedySynth, which synthesizes symmetric DNF lenses. Its inputs 
are a suite of input-output examples and a pair of stochastic DNF regular expressions.
First, \PCF{CannotMap} determines whether there is no lens satisfying the
examples, and if so, \GreedySynth returns \None immediately. Otherwise, \GreedySynth finds
the best lenses (given the examples that match them) between all sequence pairs
$(SQ_i, TQ_j)$ drawn from the left and right DNF regular expressions. (The
function \CartesianMap maps its argument across the cross product of the input
lists). A priority queue containing these sequence lenses, ordered by likelihood, is
then initialized with $\PCF{PQ.Create}$. The symmetric lens is then built up
iteratively from these sequence lenses, where the state of the partially
constructed lens is tracked in the lens builder, \LensBuilder.

\GreedySynth loops until there are no more sequence lenses in the priority
queue. Within this loop, a sequence lens is popped from the queue and, if it is
``useful,'' is included in the final DNF lens. The lens is considered to be
useful when its source (or target) is \textit{not} already the source (or
target) of an included sequence lens. If examples require that two
sequences have a lens between them, such lenses are considered useful. The
priorities of the sequence lenses update as the algorithm proceeds: if two
sequence lenses have the same source, the second one to be popped gets a higher
cost than it originally had; information must now be stored for including that
source of non-bijectivity.

As an example, consider searching for a lens between \lstinline{"" | name.name$^*$}
and \lstinline{"" | name}. \GreedySynth might first pop the
sequence lens between the
sequences \lstinline{""} and \lstinline{""}, because it is a bijective
sequence lens between. As neither \lstinline{""} is involved in a sequence
lens, this lens is considered useful. Next, the sequence lens between
\lstinline{name.name$^*$} and \lstinline{name} would be popped: while that lens
is not bijective it is still better than the alternatives. As all sequences are
now involved in sequence lenses, and there are no examples to make other lenses
useful, no more sequence lenses would be added to the lens builder.

Finally, after all sequences have been popped, the partial DNF lens \LensBuilder is
converted into a symmetric DNF lens. This is only possible if all sequences are
involved in some sequence lens: if they are not, \PCF{LensBuilder.ToDNFLens}
instead returns \None.

\paragraph*{Symmetric Sequence lenses} Algorithm~\ref{alg:greedyseqsynth}
presents \GreedySeqSynth, which synthesizes symmetric sequence lenses using an
algorithm whose structure is similar to \GreedySynth{}'s.  It calls \PCF{AtomSynth}, 
which synthesizes atom lenses by iterating a DNF lens between
its subcomponents. 

The inputs to \GreedySeqSynth are a suite of input-output examples and a pair of
lists of stochastic atoms. As in \GreedySynth, \GreedySeqSynth returns \None early if
there is no possible lens. Afterward, \GreedySeqSynth finds the best lenses
between each atom pair of the left and right sequences, and organizes them into
a priority queue ordered by likelihood with \PCF{PQ.Create}. The symmetric sequence
lens is built up iteratively from these atom lenses, where the state of the
partially built lens is tracked in the sequence lens builder, \SeqLensBuilder.

\GreedySeqSynth loops until there are no more atom lenses in the priority
queue. In the loop, a popped atom lens is considered 
``useful'' if adding it to the sequence will lower the
cost of the generated sequence lens, or if examples show that one of its atoms
must not be disconnected. Each atom can be part of only one lens at a time, so
the algorithm must sometimes remove a previously chosen atom lens in order to connect 
one that must not be disconnected.
The algorithm succeeds when all atoms that must not be disconnected
are involved in an atom lens; \PCF{SLensBuilder.ToDNFLens}
returns \None otherwise.

\begin{algorithm}[t]
  \caption{\GreedySeqSynth}
  \label{alg:greedyseqsynth}
  \begin{algorithmic}[1]
    \Function{AtomSynth}{$\Examples,\PRegexStar{\DNFRegex}{\Probability},\PRegexStar{\DNFRegexAlt}{\ProbabilityAlt}$}
    \If{$\Call{CannotMap}{\Examples,\PRegexStar{\DNFRegex}{\Probability},\PRegexStar{\DNFRegexAlt}{\ProbabilityAlt}}$}
    \State $\ReturnVal{\None}$
    \Else
    \State $\ReturnVal{\IterateLensOf{\GreedySynth(\Examples,\DNFRegex,\DNFRegexAlt)}}$
    \EndIf
    \EndFunction

    \Function{GreedySeqSynth}{$\Examples,\SequenceOf{\String_0\SeqSep\Atom_1\SeqSep\ldots\SeqSep\Atom_n\SeqSep\String_n},\SequenceOf{\StringAlt_0\SeqSep\AtomAlt_1\SeqSep\ldots\SeqSep\AtomAlt_m\SeqSep\String_m}$}
    \If{$\Call{CannotMap}{\Examples,\SequenceOf{\String_0\SeqSep\Atom_1\SeqSep\ldots\SeqSep\Atom_n\SeqSep\String_n},\SequenceOf{\StringAlt_0\SeqSep\AtomAlt_1\SeqSep\ldots\SeqSep\AtomAlt_m\SeqSep\String_m}}$}
    \State $\ReturnVal{\None}$
    \EndIf
    \State $\ALS \gets
    \Call{\CartesianMap}{\AtomSynth(\Examples),\ListOf{\Atom_1;\ldots;\Atom_n},\ListOf{\AtomAlt_1;\ldots;\AtomAlt_m}}$
    \State $\GreedyState \gets \Call{PQ.Create}{\ALS}$
    \State $\SeqLensBuilder \gets \Call{SLensBuilder.Empty}{}$
    \While{$\Call{PQ.IsNonempty}{\GreedyState}$}
    \State $\SAtomLens \gets
    \Call{PQ.Pop}{\GreedyState}$
    \If{$\Call{SLensBuilder.UsefulAdd}{\SeqLensBuilder,\SAtomLens,\Examples}$}
    \State $\GreedyState \gets \Call{SLensBuilder.AddAtom}{\GreedyState,\SAtomLens}$
    \EndIf
    \EndWhile
    \State \ReturnVal{\Call{SLensBuilder.ToDNFLens}{\GreedyState}}
    \EndFunction
  \end{algorithmic}
\end{algorithm}

\subsection{Optimizations}
\label{subsec:optimizations}

Our implementation includes a number of optimizations not described above:
annotations that guide DNF conversion; an expansion inference algorithm;
and compositional synthesis. The optimizations
make the system performant enough for interactive use.

\paragraph*{Open and Closed Regular Expressions} While \GreedySynth acts
relatively efficiently, it can suffer from an exponential blowup when converting
SREs to DNF form. This problem can be mitigated by avoiding the conversion
of some SREs to DNF form and by performing the conversion lazily when necessarily.
More specifically, \Expand labels some unconverted SREs as ``closed,'' which means the
type-directed \GreedySynth algorithm treats them as DNF atoms and does not dig into
them recursively.  In other words,
given a pair of closed SREs, \GreedySynth can either construct
the identity lens between them (it will do this if they are the same SRE), or it can
construct a disconnect lens between them.
Regular expressions that are not annotated as closed are considered ``open.''

Distinguishing between open and closed regular expressions improves the
efficiency of \GreedySynth, but forces \Expand to decide which closed
expressions to open. At the start of synthesis, all regular expressions are
closed, and \Expand rewrites selected closed regular expressions to open ones
(thereby triggering DNF normalization).

\paragraph*{Expansion Inference} These additional rewrites make the search
through possible regular expressions harder. Our algorithm identifies when
certain closed regular expressions can \emph{only} be involved in a disconnect
lens (unless opened). Such regular expressions will automatically be opened. The
full details of expansion inference are explained in \citet{optician}.

\paragraph*{Compositional Synthesis} Compositional synthesis allows \GreedySynth to use
previously defined (by users or synthesis) lenses. As the synthesizer processes a Boomerang file, it accumulates
lens definitions and types.  It tackles synthesis problems one after another
and there may be many such problems in a given program.
During synthesis, if an existing lens has the right type and agrees with the examples,
\GreedySynth will use it.  If a large synthesis problem cannot be solved all
at once, a user can generate subproblems for parts of a data format, and the
larger problem can subsequently use solutions to those subproblems. For example,
when given the synthesis task:
\begin{lstlisting}
let l_1 = synth R <=> S
let l_2 = synth R' <=> S'
\end{lstlisting}
the synthesis algorithm use the previously generated \lstinline{l_1} in the
synthesis of \lstinline{l_2}.
In our experience,
this is a very powerful tool that allows synthesis to tackle problems of arbitrary
complexity.

\section{Evaluation}
\label{sec:evaluation}
We implemented simple symmetric lenses as an extension to
Boomerang~\cite{boomerang}. In doing so, we reimplemented Boomerang's
asymmetric lens combinators using a combination of the symmetric
combinators presented in this paper and symmetric versions of asymmetric
extensions (like matching lenses~\cite{matchinglenses} and quotient
lenses~\cite{quotientlenses}) already present in Boomerang. We also integrated
our synthesis engine into Boomerang, allowing users to write synthesis tasks
alongside lens combinators, incorporate synthesis results into manually-written
lenses, and reference previously defined lenses during synthesis. All experiments
were performed on a 2.5 GHz Intel Core i7 processor with 16 GB of 1600 MHz DDR3
running macOS Mojave.

In this evaluation, we aim to answer four primary questions:
\begin{enumerate}
\item Can the algorithm (with suitable examples and annotations) find the correct lens?
\item Is the synthesis procedure efficient enough to be used in everyday
  development?
\item How much slower is our tool on bijective lens synthesis benchmarks than
  prior work customized for bijective lenses~\cite{optician}?
\item How effective is the information-theoretic search heuristic and how do our
  annotations affect the results?
\end{enumerate}

\subsection{Benchmark Suite}
Our benchmarks are drawn from three different sources.
\begin{enumerate}
\item We adapted 8 data cleaning benchmarks from Flash Fill~\cite{flashfill}. Flash Fill data
  cleaning tasks are either derived from online help forums or taken from the
  Excel product team. The 8 we chose are merely the first 8 found in the Flash
  Fill paper. Note that our tool produces bidirectional transformations
  rather than one-way transformers like Flash Fill. We ensure one direction of
  our bidirectional transformers performs the same unidirectional transformation
  as Flash Fill---the other direction is determined from the round-tripping
  laws. None of these benchmarks were bijective.
\item We adapted 29 benchmarks from Augeas~\cite{augeas}. Augeas is a utility
  that bidirectionally converts between Linux configuration files and an
  in-memory dictionaries. In our benchmarks, we synthesize lenses between Linux
  configuration files and serialized versions of Augeas dictionaries. We
  selected the first 28 benchmarks from Augeas in alphabetical order, and
  included one additional benchmark we considered particularly difficult,
  \texttt{xml\_to\_augeas.boom}. Because these benchmarks merely transformed the
  information into a structured form, synthesized lenses were all bijective.
\item We created 11 additional benchmarks 
  derived from real-world examples and/or the bidirectional programming
  literature. These tasks range from synchronizing REST and JSON web resource
  descriptions to synchronizing BibTeX and EndNote citation descriptions. Five
  of these benchmarks were not bijective lenses.
\end{enumerate}

Each task consists of a source and target regular expression and a set of
examples. We manually programmed the regular expressions. However, when writing
these regular expressions, we did not massage them to be amenable to synthesis.
The synthesizer implements the star-semiring regular expression equivalence
axioms, so that users do not have to worry about the way they write expressions
themselves. We are synthesizing transformations between real formats, so these
tasks are can be quite large -- on average, the size of each RE in the
specification is 1637 and the size of the synthesized lens is 3254. For each of
these tasks, we selected tasks until we did not feel additional examples would
be elucidating.

\subsection{Synthesizing Correct Lenses}

To determine whether the system can synthesize desired lenses, we ran it
interactively on all 48 tasks, working with the system to create sufficient
examples and provide useful relevance annotations.  In all cases, the desired lens
was obtained.
The majority of the tasks required only a single example and none required
more than three examples to synthesize the desired lens.\footnote{In
  one benchmark, we supplied a fourth example that was later discovered to 
  be unnecessary.}  

Providing relevance annotations was needed in only 8 of the 48 tasks. In
practice, we found that adding such annotations quite easy: if manual inspection
of the lens showed there were too few \IdentityLens{}s, and too many
\Disconnect{}s or merges, we would add \SRequire annotations. If synthesis took
too long, we would add \Skip annotations.
Section~\ref{subsec:effect-of-heuristics} studies the effects of removing such
annotations.

We verified that our default running mode (\SSOpt{}) generated the correct lenses the way programmers often
validate their programs: we manually inspected the code and ran unit tests on
the synthesized code. To determine whether the synthesis procedure generated the
correct lens when running in modes other than \SSOpt{}, we compared generated lens
to the lens synthesized by \SSOpt{}.


\subsection{Effectiveness of Compositional Synthesis}

Having determined appropriate examples and annotations for the 48
benchmarks, we evaluate the performance of the system by measuring the running
time of our algorithm in two modes:

\begin{tabulary}{\linewidth}{rL}
  \SSOpt{}: & Run the symmetric synthesis algorithm with all optimizations enabled.\\
  \SSNCOpt{}: & Run the symmetric synthesis algorithm, with no compositional synthesis enabled.\\
\end{tabulary}\\

Recall that compositional synthesis allows users to break a benchmark into a
series of smaller synthesis tasks, whose solutions are utilized in more complex
synthesis procedures. Compositional synthesis (\SSOpt{} mode)
allows our system to scale to arbitrarily large and complex formats; measuring
it shows the responsiveness of the system when used as
intended. \SSNCOpt{} mode, which synthesizes a complete lens all at once,
provides a useful experimental stress test for the system.

\begin{figure}
  \includegraphics{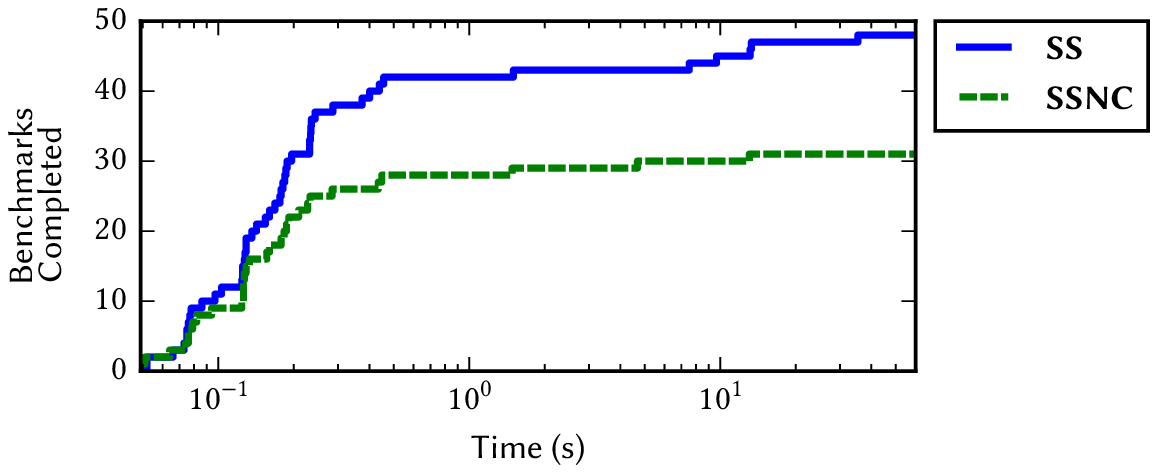}
  \vspace{-4ex}
  \caption{Number of
    benchmarks that can be solved by a given algorithm in a 
    given amount of time. \SSOpt{} is the full symmetric synthesis algorithm.
    \SSNCOpt{} is the symmetric synthesis algorithm without using a library of
    existing lenses. The symmetric synthesis algorithm is able to complete all
    benchmarks in under 30 seconds elapsed total time. Without compositional
    synthesis it is able to complete 31. Each benchmark specification includes
    source and target (potentially annotated) regular expressions, and between
    one and three sufficient examples.}
  \label{fig:times}
\end{figure}

For each benchmark in the suite and each mode, we ran the system with a timeout
of 60 seconds, averaging the result over 5 runs. Figure~\ref{fig:times}
summarizes the results of these tests. We find that our algorithm is able to
synthesize all of the benchmarks in under 30 seconds. Without compositional
synthesis, the synthesis algorithm is able to solve 31 out of 48 problem
instances. In total, 73 existing lenses were used in compositional synthesis
(about 1.5 per benchmark on average).


\subsection{Slowdown Compared to Bijective Synthesis}

To compare to the existing bijective synthesis algorithm, we run our symmetric
synthesis algorithm on the original Optician benchmarks, comprised of
39 bijective synthesis tasks.\footnote{We had to slightly alter four of these
benchmarks, either by providing additional examples or by adding in \SRequire annotations.
Without these alterations, symmetric synthesis yielded a lens that fit
the specification but that was undesired.}

To perform this comparison, we synthesized lenses in two modes:

\begin{tabulary}{\linewidth}{rL}
  \BSOpt{}: & The existing bijective synthesis
              algorithm with all optimizations enabled.\\
  \SSOpt{}: & The symmetric synthesis algorithm with all optimizations enabled.\\
\end{tabulary}\\

For each benchmark, we ran it in both modes with a
timeout of 60 seconds and averaged the result over 5 runs.
Figure~\ref{fig:times_bijective} summarizes the results of these tests. On
average, \SSOpt{} took 1.3 times (0.5 seconds) longer to complete than
\BSOpt{}. The slowest completed benchmark for both synthesis algorithms is
\texttt{xml\_to\_augeas.boom}, a benchmark that converts arbitrary XML up to
depth 3 into a serialized version of the structured dictionary representation
used in Augeas. This benchmark takes 18.9 seconds for the symmetric synthesis
algorithm to complete, and 9.3 seconds the bijective synthesis algorithm to
complete.

Both the bijective synthesis and the symmetric synthesis engines use a pair of collaborating
synthesizers that (1) search for a compatible pair of regular expressions and (2) search
for a lens given those regular expressions. Bijective synthesis is faster than symmetric synthesis
because part (2) is much faster. Specifically, a bijection must translate all data on the left into data
on the right, and this fact constrains the search. By contrast, a symmetric synthesis problem has a
choice of which data on the left to translate into data on the right. This choice gives rise to a large set of
other choices, and symmetric synthesis must consider all of them.

\begin{figure}[t]
  \includegraphics{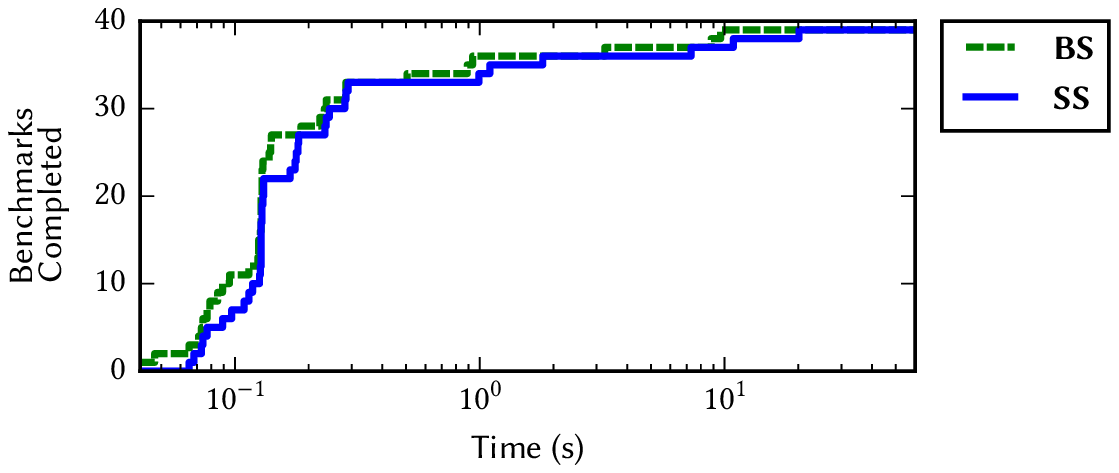}
  \vspace{-4ex}
  \caption{Number of benchmarks that can be solved by a given algorithm in a
    given amount of time. \SSOpt{} is the full symmetric synthesis algorithm.
    \BSOpt{} is the full bijective lens synthesis algorithm.}
  \label{fig:times_bijective}
\end{figure}

\subsection{The Effects of Heuristics and Relevance Annotations}
\label{subsec:effect-of-heuristics}
We evaluate the usefulness of (1) our information-theoretic metric, (2) our
termination heuristic and (3) our
relevance annotations.  To this end, we run our program in several different modes:\\[1ex]
\begin{tabulary}{\linewidth}{rL}
  \AnyOpt{}: & Ignore the information-theoretic preference metric (\IE, all valid
               lenses have cost 0). \\ 
  \FLOpt{}: &  Return the first highest ranked lens \GreedySynth returns (\IE,
              ignore the termination heuristic). \\
  \CCOpt{}: & Replace our information-theoretic cost metric with one where
              the cost of the lens is the number of disconnects plus the number
              of merges.\\
  \NSOpt{}: & Ignore all \Skip annotations in the SRE specifications. \\
  \NROpt{}: & Ignore all \SRequire annotations in the SRE specifications. \\
\end{tabulary}

We experimented with the \CCOpt{} mode to determine whether the
complexity of the information-theoretic measure is really
needed. Related work on string transformations has often used simpler
measures such as ``avoid constants'' that align with, but are simpler
than our measures. The \CCOpt{} mode is an example of such a simple
measure---it operates by counting disconnects, which put a complete
stop to information transfer, and merges, which
eliminate the information in a union. 

\begin{figure}[t]
  \includegraphics{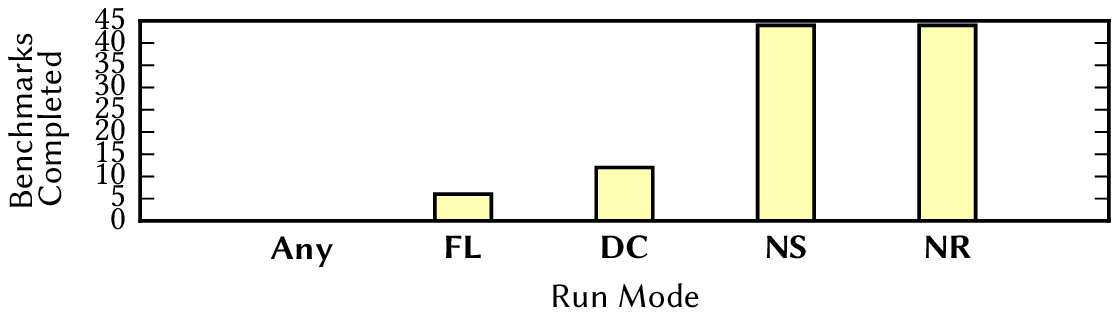}
  \vspace{-4ex}
  \caption{Number of benchmarks that synthesize the correct lens by a given
    algorithm. \AnyOpt{} provides no notion of cost, and merely returns the
    first lens it finds that satisfies the specification. \FLOpt{} provides a
    notion of cost to \GreedySynth, but once a satisfying lens is greedily
    found, that lens is returned. \CCOpt{} synthesizes lenses, where the cost of
    a lens is the number of disconnects plus the number of merges. \NSOpt
    ignores all \Skip annotations while running the algorithm. \NROpt ignores
    all \SRequire annotations while running the algorithm.}
  \label{fig:metric}
\end{figure}

Figure~\ref{fig:metric} summarizes the result of these experiments. The data
reveal that the information-theoretic metric is critical for finding the correct
lens: Only 10 of the benchmarks succeeded when running in \CCOpt{} mode. The
termination condition is also quite important. When running in \FLOpt{} mode,
the algorithm only discovers 5 lenses, which shows that the first class that
contains a satisfying lens is rarely the correct class. However, our algorithm
is not perfect and fails when either it is very difficult to find the desired
lens (necessitating \SRequire) or when a large amount of data is projected
(necessitating \Skip). Without any annotations, our algorithm finds the correct
lens for 40 of our 48 benchmarks; eight required relevance annotations to find
the correct lens. In total, there are 12 uses of \SRequire, and 4 uses of \Skip
in our benchmark suite.

\section{Related Work}
\label{sec:related}

In this paper, we have designed and implemented a new, pragmatic
formulation of symmetric lenses as well as new program synthesis
techniques.  In this section, we analyze the relationship
between our simple symmetric
lenses and two previous lens languages:  classical symmetric lenses
and asymmetric lenses. Proofs about these relationships are included in
\ifappendices appendix \S\ref{sec:appendixforget}. \else{}the full version of
the paper.  \fi We also comment on related
program synthesis techniques.

\subsection{Relationship with Classical Lens Languages}
\label{sec:relationship}

\paragraph*{Symmetric Lenses}
A classical symmetric lens\cite{symmetric-lenses} $\Lens$ between $X$ and $Y$ consists of 4 components:
a complement $C$, a designated element $init \in C$, and two functions,
$\PutRSym \OfType X \times C \rightarrow Y \times C$ and $\PutLSym \OfType Y
\times C \rightarrow X \times C$, that propagate changes in one format to the
other.

In this formulation, data unique to each side are stored in the complement. When
one format is edited, the \PutR or \PutL function stitches together the edited
data with data stored in the complement. The $init$ element is the initial value
of $C$ and specifies default behavior when data is missing. For instance, to
implement the scenario in Figure~\ref{fig:minimized-representations}, the
complement would consist of a list of pairs of salary and company name.
Classical symmetric lenses satisfy the following equational laws.
\begin{centermath}
\begin{array}[b]{l@{\qquad \; \qquad}l}
  \begin{mathprooftree}
    \AxiomC{$\PutRSymOf{(x,c)} = (y,c')$}
    \UnaryInfC{$\PutLSymOf{(y,c')} = (x,c')$}
  \end{mathprooftree}
&
  \begin{mathprooftree}
    \AxiomC{$\PutLSymOf{(y,c)} = (x,c')$}
    \UnaryInfC{$\PutRSymOf{(x,c')} = (y,c')$}
  \end{mathprooftree}
\end{array}
\end{centermath}

Two classical symmetric lenses are equivalent if they output the same formats
given any sequence of edits. Formally, given a lens $\Lens$ between $X$ and $Y$, an \emph{edit} for $\Lens$ is a member of $X + Y$. Consider the function
$apply$, which, given a lens and an element of that lens's complement, is a
function from sequences of edits to sequences of edits. If $apply(\Lens,c,es) =
es'$, then given complement $c$ and edit $es_i$, the lens $\Lens$ generates
$es'_i$.
\begin{centermath}
\begin{tabular}[b]{l@{\qquad \; \qquad}l}
$
  \inferrule
  {
  }
  {
    apply(\Lens,c,[]) = []
  }
$
&
$
  \inferrule
  {
    \Lens.putr(x,c) = (y,c')\\
    apply(\Lens,c',es) = es'
  }
  {
    apply(\Lens,c,(\InLOf{x})::es) = (\InROf{y})::es'
  }
$
\end{tabular}
\end{centermath}
\[
  \inferrule
  {
    \Lens.putl(y,c) = (x,c')\\
    apply(\Lens,c',es) = es'
  }
  {
    apply(\Lens,c,(\InROf{y})::es) = (\InLOf{x})::es'
  }
\]
\noindent
Two lenses, $\Lens_1$ and $\Lens_2$, are equivalent if
$apply(\Lens_1,\Lens_1.init,es) = apply(\Lens_2,\Lens_2.init,es)$ for all $es$.

Simple symmetric lenses are a strict subset of classical symmetric lenses, but
quite a useful one. For instance, all asymmetric lenses are expressible as
simple symmetric lenses. The primary loss is the loss of ``memory'' within the
complement. In classical symmetric lenses, disjunctive (\OrLens) lenses retain information
about both possible formats. If a user edits a format from one disjuncted format
to the other, the information contained in that first disjunct is retained
within the complement. Simple symmetric lenses have no such complement, so they
mimic the forgetful disjunctive lens of classical symmetric lenses.

Though classical symmetric lenses are more expressive, they introduce drawbacks
for synthesis: because each lens has a
custom complement, one can no longer specify the put functions through
input/output examples alone. One alternative would be to enrich specifications
with edit sequences; another would be to specify the structure of complements
explicitly (though the latter would be somewhat akin to specifying the internal
state of a program). In either case, the complexity of the specifications
increases.

\paragraph*{Symmetric Lenses vs.{} Simple Symmetric Lenses} To compare classical and simple
symmetric lenses, we define an $apply$ function on simple symmetric lenses as well. If
$apply(\Lens,\None,es) = es'$, then starting with no prior data, after edit
$es_i$, the lens \Lens generates $es_i'$ (the right format if $es_i =
\InLOf{x}$, and the left format if $es_i = \InROf{y}$). If
$apply(\Lens,\SomeOf{(x,y)},es) = es'$, then starting with data $x$
and $y$ on the left and right, respectively, after edit $es_i$, the lens \Lens
generates $es_i'$.

\begin{centermath}
\begin{tabular}[b]{l@{\qquad \; \qquad}l}
$
  \inferrule
  {
  }
  {
    apply(\Lens,xyo,[]) = []
  }
$
&
$
  \inferrule
  {
    \Lens.\CreateROf{x} = y\\
    apply(\Lens,\SomeOf{(x,y)},es) = es'
  }
  {
    apply(l,\None,\InLOf{x}::es) = \InROf{y}::es'
  }
$
\end{tabular}
\end{centermath}
\[
  \inferrule
  {
    \Lens.\CreateLOf{y} = x\\
    apply(\Lens,\SomeOf{(x,y)},es) = es'
  }
  {
    apply(l,\None,\InROf{y}::es) = \InLOf{x}::es'
  }
\]
\[
  \inferrule
  {
    \Lens.\PutROf{x'}{y}  = y'\\
    apply(\Lens,\SomeOf{(x',y')},es) = es'
  }
  {
    apply(l,\SomeOf{(x,y)},\InLOf{x'}::es) = \InROf{y'}::es'
  }
\]
\[
  \inferrule
  {
    \Lens.\PutLOf{y'}{x}  = x'\\
    apply(\Lens,\SomeOf{(x',y')},es) = es'
  }
  {
    apply(l,\SomeOf{(x,y)},\InROf{y'}::es) = \InLOf{x'}::es'
  }
\]

Next, we define \emph{forgetful symmetric
  lenses} to be symmetric lenses that satisfy the following additional laws.
\begin{equation}
  \tag{\ForgetfulRL}
  \begin{mathprooftree}
    \AxiomC{$\Lens.putr(x,c_1) = (\_,c_1')$}
    \def\extraVskip{.5pt}
    \noLine 
    \UnaryInfC{$\Lens.putr(x,c_2) = (\_,c_2')$}
    \AxiomC{$\Lens.putl(y,c_1') = (\_,c_1'')$}
    \def\extraVskip{.5pt}
    \noLine 
    \UnaryInfC{$\Lens.putl(y,c_2') = (\_,c_2'')$}
    \def\extraVskip{2pt}
    \singleLine
    \BinaryInfC{$c_1'' = c_2''$}
  \end{mathprooftree}
\end{equation}
\begin{equation}
  \tag{\ForgetfulLR}
  \begin{mathprooftree}
    \AxiomC{$\Lens.putl(y,c_1) = (\_,c_1')$}
    \def\extraVskip{.5pt}
    \noLine 
    \UnaryInfC{$\Lens.putl(y,c_2) = (\_,c_2')$}
    \AxiomC{$\Lens.putr(x,c_1') = (\_,c_1'')$}
    \def\extraVskip{.5pt}
    \noLine 
    \UnaryInfC{$\Lens.putr(x,c_2') = (\_,c_2'')$}
    \def\extraVskip{2pt}
    \singleLine
    \BinaryInfC{$c_1'' = c_2''$}
  \end{mathprooftree}
\end{equation}

Intuitively, these equations state that complements are uniquely determined by
the most recent input $x$ and $y$. Such lenses correspond exactly with simple
symmetric lenses, where all state is maintained by the $x$ and $y$ data.

\begin{theorem}
  Let $\Lens$ be a classical symmetric lens. The lens $\Lens$ is equivalent to a forgetful
  lens if, and only if, there exists a simple symmetric lens $\Lens'$ where
  $apply(\Lens,\Lens.init,es) = apply(\Lens',\None,es)$, for all put sequences $es$.
\end{theorem}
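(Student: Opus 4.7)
The plan is to prove the two directions by constructing explicit lens witnesses. For the ($\Leftarrow$) direction, given a simple symmetric lens $\Lens'$, I build a classical symmetric lens $\Lens_0$ with complement type $C = 1 + (X \times Y)$, initial complement $\None$, and operations
\begin{align*}
  \Lens_0.putr(x, \None) &= (\Lens'.\CreateROf{x},\ \SomeOf{(x, \Lens'.\CreateROf{x})}),\\
  \Lens_0.putr(x, \SomeOf{(x_0, y_0)}) &= (\Lens'.\PutROf{x}{y_0},\ \SomeOf{(x, \Lens'.\PutROf{x}{y_0})}),
\end{align*}
and symmetrically for $putl$. The classical PutRL/PutLR laws for $\Lens_0$ then reduce directly to the four simple-symmetric-lens laws of $\Lens'$. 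Forgetfulness for $\Lens_0$ is immediate: the complement after any operation is literally the most recent $(x, y)$ pair, so two executions that agree on the most recent inputs yield identical complements. A straightforward induction on the length of $es$ then gives $apply(\Lens_0, \None, es) = apply(\Lens', \None, es)$, and combining this with the hypothesis gives $apply(\Lens, \Lens.init, es) = apply(\Lens_0, \Lens_0.init, es)$, witnessing equivalence to a forgetful lens.

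For the ($\Rightarrow$) direction, I may assume $\Lens$ itself is forgetful (equivalence preserves $apply$-behavior). Define canonical complements $c_x := \pi_2(\Lens.putr(x, init))$ and $c_y := \pi_2(\Lens.putl(y, init))$, and construct
\begin{align*}
  \Lens'.\CreateROf{x} &= \pi_1(\Lens.putr(x, init)), & \Lens'.\PutROf{x}{y} &= \pi_1(\Lens.putr(x, c_y)),
\end{align*}
symmetrically for $\CreateL, \PutL$. The four round-trip laws for $\Lens'$ are checked by combining the classical PutRL/PutLR laws with forgetfulness: for example, $\Lens'.\PutL(\Lens'.\PutROf{x}{y}, x) = x$ follows because PutRL gives $\pi_1(\Lens.putl(y', c')) = x$ where $c' = \pi_2(\Lens.putr(x, c_y))$, and forgetfulness (applied with starting complements $c_y$ and $init$, both pushed through $putr(x, \cdot)$ then $putl(y', \cdot)$) equates the resulting complements, from which a further appeal to PutLR shows $\pi_1(\Lens.putl(y', c_x))$ coincides with $\pi_1(\Lens.putl(y', c'))$.

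Finally, I establish $apply(\Lens, \Lens.init, es) = apply(\Lens', \None, es)$ by induction on $es$. The induction hypothesis maintains the invariant that after processing any prefix, the current simple-symmetric state $\SomeOf{(x,y)}$ records the most recent pair of values output/input on each side, while the classical complement reached in $\Lens$ is forgetfulness-equivalent to the canonical complement associated with $(x,y)$. The inductive step for each edit case reduces, via forgetfulness plus PutRL/PutLR, to checking that $\pi_1$ of $\Lens$'s next $putr$/$putl$ output equals the corresponding $\Lens'$ operation.

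The hard part will be the forward direction, specifically justifying that $\pi_1$-outputs of $putr$/$putl$ (not just the complements) are determined by the current $(x,y)$ state. Forgetfulness as written constrains only the second components, so each equality of outputs must be ``bootstrapped'' by applying PutLR or PutRL to turn a statement about a later complement into a statement about an earlier first component. Threading these manipulations through both the law verifications and the $apply$-induction is the main technical content; all remaining steps are routine case analysis and bookkeeping.
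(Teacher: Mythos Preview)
Your proposal is correct and follows essentially the same approach as the paper: your $\Lens_0$ is exactly the paper's construction $F(\Lens')$, your forward-direction simple lens is exactly the paper's $S(\Lens)$, and your ``bootstrap'' argument (forgetfulness equates complements, then PutLR/PutRL transfers this to first components) is precisely the paper's Property ``Starting Forgetfulness RL/LR.''

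One small sharpening worth borrowing: where you phrase the $apply$-induction invariant as ``the classical complement is forgetfulness-equivalent to the canonical complement associated with $(x,y)$,'' the paper uses a cleaner fixed-point invariant, namely that the current complement $c$ satisfies $\Lens.putr(x,c)=(y,c)$ and $\Lens.putl(y,c)=(x,c)$. This follows immediately from PutRL/PutLR after any step, and it lets you write $c=\pi_2(\Lens.putl(y,c))$ so that the forgetfulness property applies directly with $c_1=c$ and $c_2=init$. This avoids having to say which of $c_x$ or $c_y$ is ``the'' canonical complement and makes the inductive step a one-liner.
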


\paragraph*{Asymmetric Lenses}
Formally, an {\em asymmetric lens} $ \ell : S \Leftrightarrow V$ is a triple of functions $\ell.\get : S \longrightarrow V$, $\ell.\pput : V \longrightarrow S \longrightarrow S$ and $\ell.\create : V \longrightarrow S$ satisfying the following laws \cite{Focal2005-long}:
\begin{align*}
\ell.\get \; (\ell.\pput \; s \; v) &= v \tag{PUTGET}\\
\ell.\pput \; s \; (\ell.\get \; s) &= s \tag{GETPUT}\\
\ell.\get \; (\ell.\create \; v) &= v \tag{CREATEGET}
\end{align*}
Simple symmetric lenses are strictly more expressive than classical asymmetric
lenses.

\begin{theorem}
  Let $\Lens$ be an asymmetric lens. $\Lens$ is also a simple symmetric lens,
  where:
  \begin{center}
    \begin{tabular}{rcl@{\hspace*{3em}}rcl}
      $\Lens.\CreateLOf{y}$ & $=$ & $\Lens.create \App y$
      & $\Lens.\CreateROf{x}$ & $=$ & $\Lens.get \App x$\\
      $\Lens.\PutLOf{y}{x}$ & $=$ & $\Lens.put \App y \App x$
      & $\Lens.\PutROf{x}{y}$ & $=$ & $\Lens.get \App x$
    \end{tabular}
  \end{center}
\end{theorem}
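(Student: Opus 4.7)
The plan is to directly verify that the four functions defined in the statement satisfy the four round-tripping laws (\CreatePutRL, \CreatePutLR, \PutRL, \PutLR) of simple symmetric lenses, using the three asymmetric lens laws (PUTGET, GETPUT, CREATEGET) as the only tools. Each law unwinds into a single-line calculation after substituting the definitions, so the argument is a routine check; no structural induction or auxiliary construction is needed.

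Concretely, I would proceed law by law. For \CreatePutRL, expand $\PutLOf{(\CreateROf{x})}{x} = \Lens.\mathit{put}\,(\Lens.\mathit{get}\,x)\,x$ and apply GETPUT to conclude this equals $x$. For \CreatePutLR, expand $\PutROf{(\CreateLOf{y})}{y} = \Lens.\mathit{get}\,(\Lens.\mathit{create}\,y)$ and apply CREATEGET. For \PutRL, note that $\PutROf{x}{y}$ discards $y$ and returns $\Lens.\mathit{get}\,x$, so $\PutLOf{(\PutROf{x}{y})}{x} = \Lens.\mathit{put}\,(\Lens.\mathit{get}\,x)\,x$, which equals $x$ by GETPUT again. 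For \PutLR, expand $\PutROf{(\PutLOf{y}{x})}{y} = \Lens.\mathit{get}\,(\Lens.\mathit{put}\,y\,x)$, which equals $y$ by PUTGET.

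There is no genuine obstacle here; the only subtlety worth flagging is that the definition of \PutR deliberately ignores its second argument $y$ (it just recomputes $\Lens.\mathit{get}\,x$). This choice is what makes \PutRL reduce to GETPUT rather than to a law asymmetric lenses do not satisfy, and it reflects the fact that in the asymmetric setting the view is fully determined by the source, so no past-view information needs to be mixed in. I would state this as a brief remark before the four-line verification so the reader sees why the definitions were chosen as they were, and then conclude that $\Lens$, equipped with these four functions, is a simple symmetric lens of type $S \Leftrightarrow V$.
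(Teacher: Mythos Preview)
Your proposal is correct and follows essentially the same approach as the paper's proof: both unfold the four definitions and verify each round-tripping law by a one-line appeal to the corresponding asymmetric lens law (GETPUT for \CreatePutRL{} and \PutRL{}, CREATEGET for \CreatePutLR{}, PUTGET for \PutLR{}). Your added remark about why \PutR{} ignores its second argument is a nice clarification but does not alter the argument.
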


\paragraph*{Other Lens Formulations}
Bidirectional programming has an extensive literature, with many extensions onto
basic lenses, like quotient lenses~\cite{quotientlenses} and matching
lenses~\cite{matchinglenses}. Readers can consult
\citet{DBLP:conf/icmt/CzarneckiFHLST09} for a survey of lenses
and lens-like structures.

\subsection{Data Transformation Synthesis}
Over the past decade, the programming languages community has explored the
synthesis of programs from a wide variety of angles. One of the key ideas is
typically to narrow the program search space by focusing on a specific domain,
and imposing constraints on syntax~\cite{sygus},
typing~\cite{augustsson-2004,gvero-pldi-2013,osera+:pldi15,feser-pldi-2015,scherer-icfp-2015,frankle+:popl16},
or both.

Automation of string transformations, in particular, has been the focus
of much prior attention.  For example, Gulwani's and others' work on FlashFill
generates one-way spreadsheet transformations from input/output
examples~\cite{flashfill,le-pldi-2014}.  On the one hand,
FlashFill is easier to use because one need not specify the type of
the data being transformed.  On the other hand, this type information
makes it possible to transform more complex formats. For example, Flash Fill
does not synthesize programs with nested loops~\cite{flashfill},
and hence is incapable of synthesizing the majority of our
benchmarks, even in one direction. A comparison of Optician (on
bijective benchmarks) to these synthesis tools is included in~\citet{optician}.

All pragmatic synthesis algorithms use heuristics of one kind or another. One of
the goals of the current paper is to try to ground those heuristics in a broader
theory, information theory, with the hope that this theory may inform future
design decisions and help us understand heuristics crafted in other tools and
possibly in other domains. For instance, we speculate that some of the
heuristics used in FlashFill (to take one well-documented example) may be
connected to some of the principles laid out here. For instance, Flash Fill
prioritizes the substring constructor over the constant string constructor when
ranking possible programs. Such a choice is consistent with our
information-theoretic viewpoint as the constant function throws away a great
deal of information about the source string being transformed. Likewise, Flash
Fill prefers ``wider'' character classes over ``narrower'' ones. Again such a
choice is implied by information theory---the wider class preserves more
information during translation. More broadly, we hope our information-theoretic
analysis provides a basis for understanding the heuristic choices made in
related work.


As discussed in the introduction,
the Optician tool~\cite{optician,maina+:quotient-synthesis} was a building block for
our work.  Optician synthesized bijective transformations~\cite{optician} and
bijections modulo quotients~\cite{maina+:quotient-synthesis}, but could
not synthesize more complex bidirectional transformations where one format contains
important information not present in the other---a common situation in the real
world.  From a technical perspective, the first key novelty in our work involves the definition, theory,
analysis, and implementation of a new class of simple symmetric lenses, designed
for synthesis.  The second key technical innovation involves the use of stochastic
regular expressions and information theory to guide the search for program transformations.
As mentioned earlier, we believe such information-theoretic techniques may have broad utility
in helping us understand how to formulate a search for a data transformation function.

While our tool uses types and examples to specify invertible transformations, other
tools have been shown to synthesize a backwards transformation from ordinary code designed
to implement the forwards transformation.  For instance, \citet{program-inversion-symbolic-transducers} show how to invert transformations
using symbolic finite automata, and
\citet{bidirectionalization-for-free} demonstrates how to
construct reverse transformations from forwards transformations
by exploiting parametricity properties.  These kinds of tools are very useful, but in different
circumstances from ours (namely, when one already has the code to implement one direction of
the transformation).

More broadly, information theory and probabilistic languages are common tools in
natural language understanding, machine translation, information retrieval, data
extraction and grammatical inference (see \citet{pereira:info-theory}, for an
introduction). Indeed, our work was inspired, in part, by the PADS format
inference tool~\cite{pads:synthesis}, which was in turn inspired by earlier work
on probabilistic grammatical inference and information
extraction~\cite{kushmerick-thesis,arasu:extracting}. PADS did not synthesize
data transformers, and we are not aware of the use of information-theoretic
principles in type-directed or syntax-guided synthesis of DSL programs. More
recently, the principles used in \citet{pads:synthesis} have been applied in
unsupervised learning~\cite{ULPS}. This work learns compact descriptions of a
single data set. Ideally, such descriptions are compact and information theory
is used as a measure of the compactness of the description learned. In contrast,
we are attempting to learn a translation from one data set to another. However,
there are many different candidate translations. To select amongst the candidate
translations, we choose the translation that preserves as much information from
source to target (and vice versa) as possible.





\section{Conclusion}
\label{sec:conc}
We described a synthesis algorithm for synthesizing synchronization
functions between data formats that may not be in bijective correspondence.
We identified a subset of symmetric lenses, simple symmetric lenses, 
develop new combinators for them, and show how to synthesize them from regular
expression specifications. In order to guide the search for ``likely'' lenses,
we introduce new search principles based on information theory and allow users
to control this search via relevance annotations. To evaluate the effectiveness
of these ideas, we designed and implemented a new tool for symmetric lens
synthesis and integrated it into the Boomerang lens programming framework. Our
experiments on 48 benchmarks demonstrate that we can synthesize complex lenses for
real-life formats in under 30 seconds.

\begin{acks}
  We thank our anonymous reviewers for their useful feedback and discussions and
  Nate Foster and Michael Greenberg for their help integrating Optician into
  Boomerang. This research has been supported in part by DARPA award
  FA8750-17-2-0028 and ONR 568751 (SynCrypt).
\end{acks}

\ifanon\else
\fi



\bibliography{local,bcp}

\ifappendices
\appendix

\onecolumn

\section{Symmetric Lenses full Detail}
\label{sec:appendixlenses}

Here we provide fullly detailed information on symmetric lenses -- full typing
rules and full semantics.

\begin{centermath}
\begin{array}[b]{l@{\qquad \; \qquad}l}
\begin{array}[b]{c}
\quad\\
  \inferrule*
  {
  }
  {
    \IdentityLensOf{\BRegex} \OfType \BRegex \Leftrightarrow \BRegex
  }
  \\
  \quad
  \end{array}
&
\begin{array}[b]{r@{\ }c@{\ }l}
    \CreateR{} \App s & = & s\\
    \CreateL{} \App s & = & s\\
    \PutR{} \App s \App t & = & s\\
    \PutL{} \App s \App t & = & s\\
  \end{array}
\end{array}
\end{centermath}
Note that the identity lens ignores the second argument in the put functions.
Because the two formats are fully synchronized, no knowledge of the prior data
is needed.

\begin{centermath}
\begin{array}[b]{l@{\qquad \; \qquad}l}
  \inferrule*
  {
    \String \in \LanguageOf{\BRegex}\\
    \StringAlt \in \LanguageOf{\BRegexAlt}
  }
  {
    \DisconnectOf{\BRegex}{\BRegexAlt}{\String}{\StringAlt}
    \OfType \BRegex \Leftrightarrow \BRegexAlt
  }
&
  \begin{array}{@{}r@{\ }c@{\ }l@{}}
    $\CreateR{} \App \String''''$ & = & $\StringAlt$\\
    $\CreateL{} \App \StringAlt'''''$ & = & $\String$\\
    $\PutR{} \App \String''' \App \StringAlt'$ & = & $\StringAlt'$\\
    $\PutL{} \App \StringAlt' \App \String'$ & = & $\String'$
  \end{array}
\end{array}
\end{centermath}

Just as the identity lens ignores the second argument in puts, disconnect lenses
ignore the first in both puts and creates.  The data is unsynchronized in these
two formats, information from one format doesn't impact the other.

\begin{centermath}
\begin{tabular}[b]{l@{\qquad \; \qquad}l}
$
  \inferrule*
  {
    \Lens_1 \OfType \BRegex_1 \Leftrightarrow \BRegexAlt_1\\
    \Lens_2 \OfType \BRegex_2 \Leftrightarrow \BRegexAlt_2
  }
  {
    \ConcatLensOf{\Lens_1}{\Lens_2} \OfType \BRegex_1 \Concat \BRegex_2
    \Leftrightarrow
    \BRegexAlt_1 \Concat \BRegexAlt_2
  }
$
&
$
  \inferrule*
  {
    \Lens_1 \OfType \BRegex_1 \Leftrightarrow \BRegexAlt_1\\
    \Lens_2 \OfType \BRegex_2 \Leftrightarrow \BRegexAlt_2
  }
  {
    \SwapLensOf{\Lens_1}{\Lens_2} \OfType \BRegex_1 \Concat \BRegex_2
    \Leftrightarrow
    \BRegexAlt_2 \Concat \BRegexAlt_1
  }
$
\end{tabular}
\end{centermath}
\begin{center}
  \begin{tabular}{@{}r@{\ }c@{\ }l@{}}
    $\CreateR{} \App \String_1 \String_2$ & = & $(\Lens_1.\CreateROf{\String_1})(\Lens_2.\CreateROf{\String_2})$\\
    $\CreateL{} \App \StringAlt_1 \StringAlt_2$ & = & $(\Lens_1.\CreateLOf{\StringAlt_1})(\Lens_2.\CreateLOf{\StringAlt_2})$\\
    $\PutR{} \App (\String_1\String_2) \App (\StringAlt_1\StringAlt_2)$ & = & $(\Lens_1.\PutROf{\String_1}{\StringAlt_1})(\Lens_2.\PutROf{\String_2}{\StringAlt_2})$\\
    $\PutL{} \App (\StringAlt_1\StringAlt_2) \App (\String_1\String_2)$ & = & $(\Lens_1.\PutLOf{\StringAlt_1}{\String_1})(\Lens_2.\PutLOf{\StringAlt_2}{\String_2})$\\
  \end{tabular}
\end{center}

Concat is similar to concatenation in existing string lens languages like
Boomerang.  For such terms, we do not provide the semantics, and merely refer readers to existing work. The swap combinator is similar to concat, though the second regular expression
is swapped.
\begin{center}
  \begin{tabular}{@{}r@{\ }c@{\ }l@{}}
    $\CreateR{} \App \String_1\String_2$ & = & $(\Lens_2.\CreateROf{\String_2})(\Lens_1.\CreateROf{\String_1})$\\
    $\CreateL{} \App \StringAlt_2\StringAlt_1$ & = & $(\Lens_1.\CreateLOf{\StringAlt_1})(\Lens_2.\CreateLOf{\StringAlt_2})$\\
    $\PutR{} \App (\String_1\String_2) \App (\StringAlt_2\StringAlt_1)$ & = & $(\Lens_2.\PutROf{\String_2}{\StringAlt_2})(\Lens_1.\PutROf{\String_1}{\StringAlt_1})$\\
    $\PutL{} \App (\StringAlt_2\StringAlt_1) \App (\String_1\String_2)$ & = & $(\Lens_1.\PutLOf{\StringAlt_1}{\String_1})(\Lens_2.\PutLOf{\StringAlt_2}{\String_2})$\\
  \end{tabular}
\end{center}
\begin{centermath}
\begin{tabular}[b]{l@{\qquad}l}
$
  \inferrule*
  {
    \Lens_1 \OfType \BRegex_1 \Leftrightarrow \BRegexAlt_1\\
    \Lens_2 \OfType \BRegex_2 \Leftrightarrow \BRegexAlt_2
  }
  {
    \OrLensOf{\Lens_1}{\Lens_2} \OfType
    \RegexOr{\BRegex_1}{\BRegex_2}
    \Leftrightarrow
    \RegexOr{\BRegexAlt_1}{\BRegexAlt_2}
  }
  $
&
  \begin{tabular}{@{}r@{\ }c@{\ }l@{}}
    $\CreateR{} \App \String$
    & =
    & $\begin{cases*}
      \Lens_1.\CreateROf{\String} & if $\String\in\LanguageOf{\BRegex_1}$\\
      \Lens_2.\CreateROf{\String} & if $\String\in\LanguageOf{\BRegex_2}$
      \end{cases*}$\\
    
    $\CreateL{} \App \StringAlt$
    & =
    & $\begin{cases*}
      \Lens_1.\CreateLOf{\StringAlt} & if $\StringAlt\in\LanguageOf{\BRegexAlt_1}$\\
      \Lens_2.\CreateLOf{\StringAlt} & if $\StringAlt\in\LanguageOf{\BRegexAlt_2}$
      \end{cases*}$\\
    
    $\PutR{} \App \String \App \StringAlt$
    & =
    & $\begin{cases*}
        \Lens_1.\PutROf{\String}{\StringAlt} & if $\String\in\LanguageOf{\BRegex_1} \BooleanAnd \StringAlt\in\LanguageOf{\BRegexAlt_1}$\\
        \Lens_2.\PutROf{\String}{\StringAlt} & if $\String\in\LanguageOf{\BRegex_2} \BooleanAnd \StringAlt\in\LanguageOf{\BRegexAlt_2}$\\
        \Lens_1.\CreateROf{\String} & if $\String\in\LanguageOf{\BRegex_1} \BooleanAnd \StringAlt\in\LanguageOf{\BRegexAlt_2}$\\
        \Lens_2.\CreateROf{\String} & if $\String\in\LanguageOf{\BRegex_2} \BooleanAnd \StringAlt\in\LanguageOf{\BRegexAlt_1}$
      \end{cases*}$\\
    
    $\PutL{} \App \StringAlt \App \String$
    & =
    & $\begin{cases*}
        \Lens_1.\PutLOf{\StringAlt}{\String} & if $\StringAlt\in\LanguageOf{\BRegexAlt_1} \BooleanAnd \String\in\LanguageOf{\BRegex_1}$\\
        \Lens_2.\PutLOf{\StringAlt}{\String} & if $\StringAlt\in\LanguageOf{\BRegexAlt_2} \BooleanAnd \String\in\LanguageOf{\BRegex_2}$\\
        \Lens_1.\CreateLOf{\StringAlt} & if $\StringAlt\in\LanguageOf{\BRegexAlt_1} \BooleanAnd \String\in\LanguageOf{\BRegex_2}$\\
        \Lens_2.\CreateLOf{\String} & if $\StringAlt\in\LanguageOf{\BRegexAlt_2} \BooleanAnd \String\in\LanguageOf{\BRegex_1}$
      \end{cases*}$\\
  \end{tabular}
\end{tabular}
\end{centermath}
The \OrLens lens deals with data that can come in one form or another. If the
data gets changed from one format to the other, information in the old format is
lost.
\begin{center}
  \begin{tabular}{@{}r@{\ }c@{\ }l@{}}
    $\CreateR{} \App \String_1\ldots\String_n$
    & =
    & $(\Lens.\CreateROf{\String_1})\ldots(\Lens.\CreateROf{\String_n})$\\
    
    $\CreateL{} \App \StringAlt_1\ldots\StringAlt_n$
    & =
    & $(\Lens.\CreateLOf{\StringAlt_1})\ldots(\Lens.\CreateLOf{\StringAlt_n})$\\
    
    $\PutR{} \App (\String_1\ldots\String_n) \App (\StringAlt_1\ldots\StringAlt_m)$
    & =
    & $\StringAlt_1'\ldots\StringAlt_n'$ where $\StringAlt_i' =
      \begin{cases*}
        \Lens.\PutROf{\String_i}{\StringAlt_i} & if $i \leq m$\\
        \Lens.\CreateROf{\String_i} & otherwise
      \end{cases*}$\\
    $\PutL{} \App (\StringAlt_1\ldots\StringAlt_m) \App (\String_1\ldots\String_n)$
    & =
    & $\String_1'\ldots\String_n'$ where $\String_i' =
      \begin{cases*}
        \Lens.\PutROf{\StringAlt_i}{\String_i} & if $i \leq n$\\
        \Lens.\CreateROf{\StringAlt_i} & otherwise
      \end{cases*}$
  \end{tabular}
\end{center}
\begin{centermath}
\begin{tabular}[b]{l@{\qquad}l}
$
  \centering
  \inferrule*
  {
    \Lens_1 \OfType \BRegex_1 \Leftrightarrow \BRegexAlt\\
    \Lens_2 \OfType \BRegex_2 \Leftrightarrow \BRegexAlt
  }
  {
    \MergeROf{\Lens_1}{\Lens_2} \OfType
    \RegexOr{\BRegex_1}{\BRegex_2}
    \Leftrightarrow
    \BRegexAlt
  }
$
&
  \begin{tabular}{@{}r@{\ }c@{\ }l@{}}
    $\CreateR{} \App \String$
    & =
    & $\begin{cases*}
      \Lens_1.\CreateROf{\String} & if $\String\in\LanguageOf{\BRegex_1}$\\
      \Lens_2.\CreateROf{\String} & if $\String\in\LanguageOf{\BRegex_2}$
      \end{cases*}$\\
    
    $\CreateL{} \App \StringAlt$
    & =
    & $\Lens_1.\CreateLOf{\StringAlt}$\\
    
    $\PutR{} \App \String \App \StringAlt$
    & =
    & $\begin{cases*}
      \Lens_1.\PutROf{\String}{\StringAlt} & if $\String\in\LanguageOf{\BRegex_1}$\\
      \Lens_2.\PutROf{\String}{\StringAlt} & if $\String\in\LanguageOf{\BRegex_2}$
    \end{cases*}$\\
    
    $\PutL{} \App \StringAlt \App \String$
    & =
    & $\begin{cases*}
        \Lens_1.\PutLOf{\StringAlt}{\String} & if $\String\in\LanguageOf{\BRegex_1}$\\
        \Lens_2.\PutLOf{\StringAlt}{\String} & if $\String\in\LanguageOf{\BRegex_2}$
      \end{cases*}$\\
  \end{tabular}
\end{tabular}
\end{centermath}

The \MergeR lens is interesting because it merges data where one data can be in
two formats, and one data has only one format. In previous
work~\cite{boomerang}, this was combined into \OrLens{}, where
\OrLens{} could have ambiguous types, but we find it more clear to have explicit
merge operators: it is easier to see what lens the synthesis algorithm is
creating.

\begin{centermath}
\begin{tabular}[b]{l@{\qquad}l}
$
  \inferrule*
  {
    \Lens_1 \OfType \BRegex \Leftrightarrow \BRegexAlt_1\\
    \Lens_2 \OfType \BRegex \Leftrightarrow \BRegexAlt_2
  }
  {
    \MergeLOf{\Lens_1}{\Lens_2} \OfType
    \BRegex
    \Leftrightarrow
    \RegexOr{\BRegexAlt_1}{\BRegexAlt_2}
  }
$
&
  \begin{tabular}{@{}r@{\ }c@{\ }l@{}}
    $\CreateR{} \App \String$
    & =
    & $\Lens_1.\CreateROf{\String}$\\
    
    $\CreateL{} \App \StringAlt$
    & =
    & $\begin{cases*}
      \Lens_1.\CreateLOf{\StringAlt} & if $\StringAlt\in\LanguageOf{\BRegexAlt_1}$\\
      \Lens_2.\CreateLOf{\StringAlt} & if $\StringAlt\in\LanguageOf{\BRegexAlt_2}$
      \end{cases*}$\\
    
    $\PutR{} \App \String \App \StringAlt$
    & =
    & $\begin{cases*}
      \Lens_1.\PutROf{\String}{\StringAlt} & if $\StringAlt\in\LanguageOf{\BRegexAlt_1}$\\
      \Lens_2.\PutROf{\String}{\StringAlt} & if $\StringAlt\in\LanguageOf{\BRegexAlt_2}$
    \end{cases*}$\\
    
    $\PutL{} \App \StringAlt \App \String$
    & =
    & $\begin{cases*}
        \Lens_1.\PutLOf{\StringAlt}{\String} & if $\StringAlt\in\LanguageOf{\BRegexAlt_1}$\\
        \Lens_2.\PutLOf{\StringAlt}{\String} & if $\StringAlt\in\LanguageOf{\BRegexAlt_2}$
      \end{cases*}$\\
  \end{tabular}
\end{tabular}
\end{centermath}
The \MergeL lens is symmetric to \MergeR.

\begin{centermath}
\begin{tabular}[b]{l@{\qquad}l}
  $
  \inferrule*
  {
    \Lens_1 \OfType \BRegex \Leftrightarrow \BRegexAlt\\
    \Lens_2 \OfType \BRegexAlt \Leftrightarrow \BRegexAltAlt\\
  }
  {
    \ComposeLensOf{\Lens_1}{\Lens_2} \OfType
    \BRegex \Leftrightarrow \BRegexAltAlt
  }
$
&
  \begin{tabular}{@{}r@{\ }c@{\ }l@{}}
    $\CreateR{} \App \String$ & = & $\Lens_2.\CreateROf{(\Lens_1.\CreateROf{\String})}$\\
    $\CreateL{} \App \StringAlt$ & = & $\Lens_1.\CreateLOf{(\Lens_2.\CreateLOf{\StringAlt})}$\\
    $\PutR{} \App \String \App \StringAltAlt$ & = & $\Lens_2.\PutROf{(\Lens_1.\PutROf{\String}{(\Lens_2.\CreateLOf{\StringAltAlt})})}{\StringAltAlt}$\\
    $\PutL{} \App \StringAltAlt \App \String$ & = & $\Lens_1.\PutLOf{(\Lens_2.\PutLOf{\StringAltAlt}{(\Lens_2.\CreateROf{\String})})}{\String}$
  \end{tabular}
\end{tabular}
\end{centermath}
Composing is interesting in the put functions. Because puts require intermediary
data, we recreate that intermediary data with creates.

\begin{centermath}
\begin{tabular}[b]{l@{\qquad}l}
$
\inferrule*
  {
    \Lens \OfType \BRegex \Leftrightarrow \BRegexAlt
  }
  {
    \IterateLensOf{\Lens} \OfType
    \StarOf{\BRegex}
    \Leftrightarrow
    \StarOf{\BRegexAlt}
  }
  $
  &
  $
  \inferrule*
  {
    \Lens \OfType \BRegex \Leftrightarrow \BRegexAlt
  }
  {
    \InvertOf{\Lens} \OfType \BRegexAlt \Leftrightarrow \BRegex
  }
  $
\end{tabular}
\end{centermath}
\begin{center}
  \begin{tabular}{@{}r@{\ }c@{\ }l@{}}
    $\CreateR{} \App \StringAlt$ & = & $\Lens.\CreateLOf{\StringAlt}$\\
    $\CreateL{} \App \String$ & = & $\Lens.\CreateROf{\String}$\\
    $\PutR{} \App \StringAlt \App \String$ & = & $\Lens.\PutLOf{\StringAlt}{\String}$\\
    $\PutL{} \App \String \App \StringAlt$ & = & $\Lens.\PutROf{\String}{\StringAlt}$
  \end{tabular}
\end{center}
The \IterateLens lens deals with iterated data, while inverting reverses the direction of a lens: creating on the right becomes creating on the left and vice versa, and putting on the right becomes putting on the left and vice versa. The invert combonator is particularly useful when chaining many compositions together, as it can be used to align the central types.
\[
  \centering
  \inferrule*
  {
    \Lens \OfType \BRegex \Leftrightarrow \BRegexAlt\\
    \BRegex \SSREquiv \BRegex'\\
    \BRegexAlt \SSREquiv \BRegexAlt'
  }
  {
    \Lens \OfType \BRegex' \Leftrightarrow \BRegexAlt'
  }
\]

Type equivalence enables a lens of type $S \Leftrightarrow T$ to be used as a
lens of type $S' \Leftrightarrow T'$ if $S$ equivalent to $S'$ and $T$ is
equivalent to $T'$. Type equivalence is useful both for addressing type
annotations, and for making well-typed compositions.

\section{Symmetric DNF Lenses: Syntax, Cost, Typing, and Semantics}
\label{sec:appendixdnf}

Here we give full details on symmetric DNF lenses, including syntax, cost,
typing, and semantics.

\paragraph*{Syntax}
Roughly speaking, symmetric DNF lenses are an n-ary union of symmetric sequence
lenses, which are, in turn, an n-ary sequence of atomic lenses. Atomic lenses
are iterations of symmetric DNF lenses. In the following syntax, the
nonterminals $i$, $j$, $p$, $q$, $r$, $c$, and $d$ all represent natural
numbers. The nonterminals $\String$ and $\StringAlt$ represent strings.
\begin{center}
  \begin{tabular}{@{}r@{\ }c@{}l@{}}
    \SAtomLens{} & \GEq{} & $\StarOf{\SDNFLens}$ \\
    \SSQLens{} & \GEq{} & $(\SSQLensOf{(i_1,j_1,\SAtomLens_1)\SeqLSep
                          \ldots\SeqLSep
                          (i_p,j_p,\SAtomLens_p)}
                          ,\ListOf{(\String_1,\Atom_1);\ldots;(\String_q,\Atom_q)}
                          ,\ListOf{(\StringAlt_1,\AtomAlt_1);\ldots;(\StringAlt_r,\AtomAlt_r)})$ \\
    \SDNFLens{} & \GEq{} & $(\SDNFLensOf{(i_1,j_1,\SSQLens_1)\DNFLSep
                           \ldots\DNFLSep
                           (i_p,j_p,\SSQLens_p)}
                           ,\ListOf{c_1;\ldots;c_q}
                           ,\ListOf{d_1;\ldots;d_r})$ \\
  \end{tabular}
\end{center}

\paragraph*{Information-Theoretic Cost Metric}
Much like we defined a syntactic means to find the expected number of bits to
recover string from a synchronized string in another format for simple symmetric
string lenses, we do the same for symmetric DNF lenses. Unlike simple symmetric
string lenses, DNF lenses do not have composition, so the entropy can be defined
over all terms. $\LEntropyOf{\DNFRegex \Given \SDNFLens, \DNFRegexAlt}$ is the
expected number of bits required to recover a string in $\DNFRegex$ from a
synchronized string in $\DNFRegexAlt$.

\[
  \begin{array}{rl}
    & \LEntropyOf{\DNFOf{\Sequence_1 \DNFSep \ldots \DNFSep \Sequence_n}\\
    & \hspace*{1.21em}\Given
      (\SDNFLensOf{(i_1,j_1,\SSQLens_1)\DNFLSep
      \ldots\DNFLSep
      (i_p,j_p,\SSQLens_p)}
      ,\ListOf{c_1;\ldots;c_q}
      ,\ListOf{d_1;\ldots;d_r})\\
    & \hspace*{1.21em},~\DNFOf{\SequenceAlt_1 \DNFSep \ldots \DNFSep \SequenceAlt_m}}\\
    = & \frac{\Sigma_{j=1}^m(H_j)}{m}\\
    & \text{where } H_j = (0,\Sigma_{\SetOf{k\SuchThat j_k = j}}
      \LEntropyOf{\Sequence_{i_k}\Given\SequenceLens_k,\SequenceAlt_j})
  \end{array}
\]

This entropy calculation for symmetric DNF lenses requires a similar entropy
calculation for symmetric sequence lenses and symmetric atom lenses.

\[
  \begin{array}{rl}
      & \REntropyOf{
        \SequenceOf{\String_0,\Atom_1,\ldots,\Atom_n,\String_n}\\
      & \hspace*{1em}\Given
        (\SSQLensOf{(i_1,j_1,\SAtomLens_1)\SeqLSep
        \ldots\SeqLSep
        (i_p,j_p,\SAtomLens_p)}
        ,\ListOf{(k_1,\String_1);\ldots;(k_q,\String_q)}
        ,\ListOf{(l_1,\StringAlt_1);\ldots;(l_r,\StringAlt_r)})\\
      & \hspace*{1.21em},~
        \SequenceOf{\StringAlt_0,\AtomAlt_1,\ldots,\AtomAlt_m,\StringAlt_m}}\\
      =
      & \Sigma_{x=1}^p\REntropyOf{\Atom_{i_x} \Given \AtomAlt_{i_x},\AtomLens_x} +
        \Sigma_{x=1}^q\REntropyOf{\Atom_{k_x}}
  \end{array}
\]

The entropy calculation for symmetric sequence lenses requires a similar entropy
calculation for symmetric atom lenses, which in turn relies on the entropy
calculation for symmetric DNF lenses.

\[\begin{array}{rl}
      & \REntropyOf{\PRegexStar{\DNFRegexAlt}{\ProbabilityAlt} \Given \PRegexStar{\DNFRegex}{\Probability},\StarOf{\SDNFLens}}\\
      =
      & \frac{\Probability}{1-\Probability}\REntropyOf{\DNFRegexAlt\Given\DNFRegex,\SDNFLens}
  \end{array}
\]
The expected number of bits required to recover a string in $\DNFRegexAlt$ from
a synchronized string in $\DNFRegex$, $\REntropyOf{\DNFRegexAlt \Given
  \SDNFLens, \DNFRegex}$ is defined symmetrically.

The typing judgment is a 3-ary relation over a single DNF lens, and two DNF
regular expressions. If $\SDNFLens \OfType \DNFRegex \Leftrightarrow
\DNFRegexAlt$, then the $\SDNFLens.\CreateR$, $\SDNFLens.\CreateL$,
$\SDNFLens.\PutR$, and $\SDNFLens.\PutL$ functions form a symmetric lens.

The typing judgement has 3 components. The first is the sublens components,
confirming that the sequence lenses the DNF lens is comprised of are all
well-typed. The second guarantees that if each sequence on the left has a
sequence lens that can be used for \CreateR{}s and \PutR{}s. The last guarantees
the same for sequences on the right, with \CreateL{} and \PutL{}.

\[
  \inferrule*
  {
    \SSQLens_1 \OfType \Sequence_{i_1} \Leftrightarrow \SequenceAlt_{j_1}\\
    \ldots\\
    \SSQLens_p \OfType \Sequence_{i_p} \Leftrightarrow \SequenceAlt_{j_p}\\\\
    i_{c_1} = 1\\
    \ldots\\
    i_{c_q} = q\\\\
    j_{d_1} = 1\\
    \ldots\\
    j_{d_r} = r
  }
  {
    (\SDNFLensOf{(i_1,j_1,\SSQLens_1)\DNFLSep
      \ldots\DNFLSep
      (i_p,j_p,\SSQLens_p)}
    ,\ListOf{c_1;\ldots;c_q}
    ,\ListOf{d_1;\ldots;d_r})
    \OfType\\\\
    \DNFOf{(\Sequence_1,\Probability_1) \DNFSep \ldots \DNFSep (\Sequence_q,\Probability_q)}
    \Leftrightarrow
    \DNFOf{(\SequenceAlt_1,\ProbabilityAlt_1) \DNFSep \ldots \DNFSep (\SequenceAlt_r,\ProbabilityAlt_q)}
  }
\]

The \CreateR{} function looks for the sequence the provided string matches. If
the string matches sequence $\Sequence_x$, then the lens will look in the create
list at position $x$ to find which sequence lens to use. Then, the sequence lens
transforms the given string using that sequence lens.

The \PutR{} function finds what pairs sequences the input source and view
strings match.  If there that pair of sequences have a sequence lens between
them, then the DNF lens merely performs that sequence lens on the provided
strings.  If there isn't a pair of sequence lenses between them, then \CreateR{}
is performed on the source, with the view forgotten.

The \CreateL{} and \PutL{} functions are defined symmetrically.

\begin{tabular}{@{}r@{\ }c@{\ }l@{}}
  $\CreateR{} \App s$ & = & $\SSQLens_{c_x}.\CreateR{} \App s$ if $s \in \Sequence_x$\\
  $\CreateL{} \App v$ & = & $\SSQLens_{d_y}.\CreateL{} \App v$ if $v \in \SequenceAlt_y$\\
  $\PutR{} \App s \App v$ & = &
                               $\begin{cases*}
                                 \SSQLens_x.\PutR{} \App s \App v & if $s \in \Sequence_{i_x}$ and $v \in \SequenceAlt_{j_x}$\\
                                 \CreateR{} \App s & if $\nexists x.$ $s \in \Sequence_{i_x}$ and $v \in \SequenceAlt_{j_x}$
                               \end{cases*}$\\
  $\PutL{} \App v \App s$ & = &
                               $\begin{cases*}
                                 \SSQLens_y.\PutL{} \App v \App s & if $v \in \SequenceAlt_{j_y}$ and $s \in \Sequence_{i_y}$\\
                                 \CreateL{} \App v & if $\nexists y.$ $v \in \SequenceAlt_{j_y}$ and $s \in \Sequence_{i_y}$
                               \end{cases*}$
\end{tabular}

\paragraph*{Symmetric Sequence Lenses: Typing and Semantics}
The typing judgment is a 3-ary relation over a single Sequence lens, and two
sequences. If $\SSQLens \OfType \Sequence \Leftrightarrow \SequenceAlt$, then
the $\SSQLens.\CreateR$, $\SSQLens.\CreateL$, $\SSQLens.\PutR$, and
$\SSQLens.\PutL$ functions form a symmetric lens.

The typing judgement has 4 components. The first is the sublens components,
confirming that the atom lenses the sequence lens is comprised of all are
well-typed. The second guarantees that if each string that will be used for
\CreateR{}s are members of the correct atoms.  The third guarantees
the same for strings and atoms on the right, with \CreateL.  The last guarantees
that each atom is mapped by at most one atom lens.

\[
  \inferrule* {
    \SAtomLens_1 \OfType \Atom_{i_1} \Leftrightarrow \AtomAlt_{j_1}\\
    \ldots\\
    \SAtomLens_p \OfType \Atom_{i_p} \Leftrightarrow \AtomAlt_{j_p}\\\\
    \String_{c_1} \in \Atom_{c_1}\\
    \ldots\\
    \String_{c_{q'}} \in \Atom_{c_{q'}}\\\\
    \StringAlt_{d_1} \in \AtomAlt_{d_1}\\
    \ldots\\
    \StringAlt_{d_{r'}} \in \AtomAlt_{d_{r'}}\\\\
    i_x = i_y \BooleanImplies x = y\\
    j_x = j_y \BooleanImplies x = y } {
    (\SSQLensOf{(i_1,j_1,\SAtomLens_1)\SeqLSep \ldots\SeqLSep
      (i_p,j_p,\SAtomLens_p)}
    ,\ListOf{(\String_{c_1},\Atom_{c_1});\ldots;(\String_{c_{q'}},\Atom_{c_{q'}})}
    ,\ListOf{(\StringAlt_{d_1},\Atom_{d_1});\ldots;(\StringAlt_{d_r},\Atom_{d_r})})\\
    \OfType \SequenceOf{\String_0'''' \SeqSep \Atom_1 \SeqSep \ldots \SeqSep
      \Atom_{q'} \SeqSep \String_{q}'} \Leftrightarrow \SequenceOf{\StringAlt_0'
      \SeqSep \AtomAlt_1 \SeqSep \ldots \SeqSep \AtomAlt_r \SeqSep
      \StringAlt_{r}'} }
\]

For each component of the string matching an atom, the \CreateR function looks
for the atom lens that maps on the atom. If there is such an atom lens,
\SAtomLens, then that the sequence lens puts the provided string into the
default string for the target atom. If there is no such atom lens, then the
sequence lens merely uses the default string.

For each component of the string matching an atom, the \PutR function looks
for the atom lens that maps on the atom. If there is such an atom lens,
\SAtomLens, then that the atom lens puts the provided string of the source atom into the
string of the target atom. If there is no such atom lens, then the
sequence lens merely recovers the target's string.

The \CreateL{} and \PutL{} functions are defined symmetrically.

\begin{tabular}{@{}r@{\ }c@{\ }l@{}}
  $\CreateR{} \App \String_0'\Concat \String_1'' \Concat \ldots \Concat \String_q'' \Concat \String_q'$
  & = 
  & $\StringAlt_0' \Concat \StringAlt_1'' \Concat \ldots \Concat
    \StringAlt_r'' \Concat \StringAlt_r'$\\
  & & where $\StringAlt_y'' =
    \begin{cases*}
      \SAtomLens_k.\PutROf{\String_{i_k}''}{\StringAlt_y} & if $j_k = y$\\
      \StringAlt_y & if $\nexists k. j_k = y$\\
    \end{cases*}$\\
  $\CreateL{} \App \StringAlt_0'\Concat \StringAlt_1'' \Concat \ldots \Concat \StringAlt_q''
  \Concat \StringAlt_q'$
  & = 
  & $\String_0' \Concat \String_1'' \Concat \ldots \Concat
    \String_r'' \Concat \String_r'$\\
  & & where $\String_x'' =
    \begin{cases*}
      \SAtomLens_k.\PutROf{\StringAlt_{j_k}''}{\String_x} & if $i_k = x$\\
      \String_x & if $\nexists k. i_k = x$\\
    \end{cases*}$\\
  $\PutR{} \App (\String_0'\Concat \String_1'' \Concat \ldots \Concat \String_q'' \Concat \String_q') \App (\StringAlt_0'\Concat \StringAlt_1'' \Concat \ldots \Concat \StringAlt_q'' \Concat \StringAlt_q')$
  & =
  & $\StringAlt_0'\Concat \StringAlt_1''' \Concat \ldots \Concat \StringAlt_q''' \Concat \StringAlt_q'$\\
  & & where
      $t_y''' =
      \begin{cases*}
        \SAtomLens_k.\PutROf{\String_{i_k}''}{\StringAlt_{y}''} & if $j_k = y$\\
        \StringAlt_y'' & if $\nexists k. j_k = y$\\
      \end{cases*}$\\
  $\PutL{} \App (\StringAlt_0'\Concat \StringAlt_1'' \Concat \ldots \Concat \StringAlt_q'') \Concat (\StringAlt_q' \App \String_0'\Concat \String_1'' \Concat \ldots \Concat \String_q'' \Concat \String_q')$
  & =
  & $\String_0'\Concat \String_1''' \Concat \ldots \Concat \String_r''' \Concat \String_r'$\\
  & & where
      $s_x''' =
      \begin{cases*}
        \SAtomLens_k.\PutLOf{\StringAlt_{j_k}''}{\String_{x}''} & if $i_k = x$\\
        \String_x'' & if $\nexists k. i_k = x$\\
      \end{cases*}$\\
\end{tabular}

\paragraph*{Symmetric Atom Lenses: Typing and Semantics}
The typing judgment is a 3-ary relation over a single atom lens, and two
atoms. If $\SAtomLens \OfType \Atom \Leftrightarrow \AtomAlt$, then
the $\SAtomLens.\CreateR$, $\SAtomLens.\CreateL$, $\SAtomLens.\PutR$, and
$\SAtomLens.\PutL$ functions form a symmetric lens.

The typing judgement just confirms that the DNF lens that comprises the
sequence lens is also well typed.

\[
  \inferrule*
  {
    \SDNFLens \OfType \DNFRegex \Leftrightarrow \DNFRegexAlt
  }
  {
    \StarOf{\SDNFLens}
    \OfType \PRegexStar{\DNFRegex}{\Probability}
    \Leftrightarrow \PRegexStar{\DNFRegexAlt}{\ProbabilityAlt}
  }
\]

For each component of the string matching an atom, the \CreateR function looks
for the atom lens that maps on the atom. If there is such an atom lens,
\SAtomLens, then that the sequence lens puts the provided string into the
default string for the target atom. If there is no such atom lens, then the
sequence lens merely uses the default string.

For each component of the string matching an atom, the \PutR function looks
for the atom lens that maps on the atom. If there is such an atom lens,
\SAtomLens, then that the atom lens puts the provided string of the source atom into the
string of the target atom. If there is no such atom lens, then the
sequence lens merely recovers the target's string.

The \CreateL{} and \PutL{} functions are defined symmetrically.

\begin{tabular}{@{}r@{\ }c@{\ }l@{}}
  $\CreateR{} \App \String_0 \Concat \ldots \Concat \String_n$
  & = 
  & $\StringAlt_1 \Concat \ldots \Concat \StringAlt_n$
  where $\StringAlt_i = \SDNFLens.\CreateR \App \String_i$\\
  $\CreateL{} \App \StringAlt_0 \Concat \ldots \Concat \StringAlt_m$
  & = 
  & $\String_1 \Concat \ldots \Concat \String_m$
  where $\String_i = \SDNFLens.\CreateL \App \StringAlt_i$\\
  $\PutR{} \App (\String_0 \Concat \ldots \Concat \String_n) \App
  (\StringAlt_1 \Concat \ldots \Concat \StringAlt_m)$
  & = 
  & $\StringAlt_1' \Concat \ldots \Concat \StringAlt_n'$
    where $\StringAlt_i' =
    \begin{cases*}
      \SDNFLens.\PutR \App \String_i \App \StringAlt_i & if $i \leq m$\\
      \SDNFLens.\CreateR \App \String_i & otherwise
    \end{cases*}$\\
  $\PutL{} \App (\StringAlt_1 \Concat \ldots \Concat \StringAlt_m) \App
  (\String_0 \Concat \ldots \Concat \String_n)$
  & = 
  & $\String_1' \Concat \ldots \Concat \String_m'$
    where $\String_i' =
    \begin{cases*}
      \SDNFLens.\PutL \App \StringAlt_i \App \String_i & if $i \leq n$\\
      \SDNFLens.\CreateL \App \StringAlt_i & otherwise
    \end{cases*}$\\
\end{tabular}

\section{Forgetful Symmetric Lenses}
\label{sec:appendixforget}

\begin{property}[Starting Forgetfulness RL]
  \label{prop:forget-rl}
  Let $\Lens$ be a forgetful symmetric lens.  If $(x_1',c_1') = \Lens.\PutL \App$
  $(y,\Snd \App (\Lens.\PutR \App (x,c_1)))$, and
  $(x_2',c_2') = \Lens.\PutL \App
  (y,\Snd \App (\Lens.\PutR \App (x,c_2)))$, then $x_1' = x_2'$.
\end{property}
\begin{proof}
  By \ForgetfulRL, $c_1' = c_2'$. We know $(y,c_1') = \Lens.\PutR \App
  (x_1',c_1')$ and $(y,c_1') = \Lens.\PutR \App (x_2',c_1')$ by \PutLR. By
  \PutRL, we know $(x_1' = \Lens.\PutL \App (y,c_1'))$ and $(x_2' = \Lens.\PutL
  \App (y,c_1')))$. Therefore, by transitivity of equality, $x_1' = x_2'$.
\end{proof}

\begin{property}[Starting Forgetfulness LR]
  \label{prop:forget-lr}
  Let $\Lens$ be a forgetful symmetric lens.  If $(y_1',c_1') = \Lens.\PutR \App$
  $(x,\Snd \App (\Lens.\PutL \App (y,c_1)))$, and
  $(x_2',c_2') = \Lens.\PutL \App
  (x,\Snd \App (\Lens.\PutR \App (y,c_2)))$,
  then $x_1' = x_2'$.
\end{property}
\begin{proof}
  Symmetric to Starting Forgetfulness RL.
\end{proof}

\begin{definition}[S]
  Let $\Lens$ be a forgetful lens.

  Consider the following four functions $S(\Lens)$, that we wish to satisfy the
  simple symmetric lens laws.

  \begin{centering}
    \begin{tabular}{@{}r@{\ }c@{\ }l@{}}
      $\CreateR{} \App s$
      & =
      & $\Fst \App (\Lens.\PutR{} \App (s,\Lens.init))$\\
      
      $\CreateL{} \App v$
      & =
      & $\Fst \App (\Lens.\PutL{} \App (v,\Lens.init))$\\
      
      $\PutR{} \App s \App v$
      & =
      & $\LetIn{(\_,c)}{\Lens.\PutL \App (v,\Lens.init)}$\\
      &
      & $\LetIn{(s',\_)}{\Lens.\PutR \App (s,c)}$\\
      &
      & $s'$\\
      
      $\PutL{} \App v \App s$
      & =
      & $\LetIn{(\_,c)}{\Lens.\PutR \App (s,\Lens.init)}$\\
      &
      & $\LetIn{(y',\_)}{\Lens.\PutL \App (v,c)}$\\
      &
      & $y'$\\
    \end{tabular}
  \end{centering}
\end{definition}

\begin{mylemma}
  If $\Lens$ is a forgetful symmetric lens, then $S(\Lens)$ is a simple symmetric
  lens.
\end{mylemma}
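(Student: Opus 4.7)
The plan is to verify that $S(\Lens)$ satisfies the four round-tripping laws defining a simple symmetric lens: \CreatePutRL, \CreatePutLR, \PutRL, and \PutLR. Since the definitions of $S(\Lens)$ are symmetric under swapping the roles of the two sides (and Properties~\ref{prop:forget-rl} and~\ref{prop:forget-lr} are themselves mirror images), I would prove only \CreatePutRL and \PutRL in detail and obtain \CreatePutLR and \PutLR by the analogous argument.

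For \CreatePutRL, no forgetfulness is required. Unfolding the definition, let $\Lens.\PutR(x,\Lens.init) = (y_0,c_0)$, so $S(\Lens).\CreateR\,x = y_0$. Computing $S(\Lens).\PutL\,y_0\,x$, the first binding is $(\_,c) = \Lens.\PutR(x,\Lens.init) = (y_0,c_0)$, so $c = c_0$. The second binding is $(x',\_) = \Lens.\PutL(y_0,c_0)$, and the classical \PutRL law applied to $\Lens.\PutR(x,\Lens.init) = (y_0,c_0)$ gives $\Lens.\PutL(y_0,c_0) = (x,c_0)$. Hence $x' = x$, as required.

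The main obstacle is \PutRL, because here the two operations seed their internal complements from $\Lens.init$ in \emph{different} orders, so the classical round-tripping law does not apply directly. Fix $x$ and $y$ and unfold $S(\Lens).\PutR\,x\,y$: let $(\_,c_1) = \Lens.\PutL(y,\Lens.init)$ and $(y'',c_2) = \Lens.\PutR(x,c_1)$, so $S(\Lens).\PutR\,x\,y = y''$. Then $S(\Lens).\PutL\,y''\,x$ computes $(\_,c_3) = \Lens.\PutR(x,\Lens.init)$ and returns the first component of $\Lens.\PutL(y'',c_3)$. Classical \PutRL gives $\Lens.\PutL(y'',c_2) = (x,c_2)$, so the first component of $\Lens.\PutL(y'',c_2)$ is $x$; what remains is to show the first component of $\Lens.\PutL(y'',c_3)$ is also $x$. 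This is exactly the shape of Property~\ref{prop:forget-rl} (Starting Forgetfulness RL): applied with the same $x$ and $y''$ but with the two seed complements $\Lens.init$ and $c_1$, it yields $\Fst(\Lens.\PutL(y'',c_3)) = \Fst(\Lens.\PutL(y'',c_2)) = x$.

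The cases \CreatePutLR and \PutLR follow by the symmetric arguments, using classical \PutLR in place of classical \PutRL for the create-put case and Property~\ref{prop:forget-lr} in place of Property~\ref{prop:forget-rl} for the put-put case. The conceptual takeaway of the proof is that $S(\Lens)$ discards the complement between operations and reconstructs a fresh one by running a put in the opposite direction from $\Lens.init$; forgetfulness is precisely the hypothesis that guarantees the reconstructed complement is ``good enough'' to make round-trips collapse.
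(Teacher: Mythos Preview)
Your proof is correct and follows essentially the same approach as the paper's: both handle \CreatePutRL\ directly from the classical \PutRL\ law (no forgetfulness needed), and both prove \PutRL\ by unfolding the two nested puts, using classical \PutRL\ to obtain $\Lens.\PutL(y'',c_2)=(x,c_2)$, and then invoking Property~\ref{prop:forget-rl} with the two seed complements $\Lens.init$ and $c_1$ to transfer this to $c_3$. The symmetric cases are dispatched the same way in both.
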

\begin{proof}
  
  \CreatePutRL{}:
  
  \begin{centering}
    \begin{tabular}{@{}r@{\ }c@{\ }l@{\ }l}
      $S(\Lens).\PutLOf{(S(\Lens).\CreateROf{x})}{x}$
      & =
      & $S(\Lens).\PutLOf{(\Fst \App (\Lens.\PutR{(x,\Lens.init)}))}{x}$
      & By unfolding definitions
      \\
      
      & =
      & $\LetIn{(\_,c)}{\Lens.\PutR \App (x,\Lens.init)}$\\
      &
      & $\LetIn{(y',\_)}{\Lens.\PutL \App (\Fst \App (\Lens.\PutR{(x,\Lens.init)}),c)}$\\
      &
      & $y'$
      & By unfolding definitions\\
      
      & =
      & $\LetIn{(y',\_)}{\Lens.\PutL \App (\Lens.\PutR \App (x,\Lens.init))}$\\
      &
      & $y'$
      & By tuple harmony \\
      
      & =
      & $x$
      & By \PutRL \\
    \end{tabular}
  \end{centering}

  \CreatePutLR{}:  Symmetric to \CreatePutRL{}

  \PutRL{}:

  \begin{centering}
    \begin{tabular}{@{}r@{\ }c@{\ }l@{\ }l}
      $S(\Lens).\PutLOf{(S(\Lens).\PutROf{x}{y})}{x}$
      & =
      & $\LetIn{(\_,c)}{\Lens.\PutL \App (y,\Lens.init)}$\\
      & & $\LetIn{(y',c')}{\Lens.\PutR \App (x,c)}$\\
      & & $S(\Lens).\PutLOf{y'}{x}$
      & By unfolding definitions
      \\
      
      & =
      & $\LetIn{(\_,c)}{\Lens.\PutL \App (y,\Lens.init)}$\\
      & & $\LetIn{(y',c')}{\Lens.\PutR \App (x,c)}$\\
      & & $\LetIn{(\_,c'')}{\Lens.\PutR \App (x,\Lens.init)}$\\
      & & $\LetIn{(x',c''')}{\Lens.\PutL \App (y',c'')}$\\
      & & $x'$
      & By unfolding definitions
    \end{tabular}
  \end{centering}

  At this point, we know from \PutRL that $(x,c') = \Lens.\PutL \App (y',c')$.
  By Property~\ref{prop:forget-rl}, this means that $x' = x$, as desired.

  \PutLR{}:  Symmetric to \PutRL{}
\end{proof}

\begin{definition}
  Fix a symmetric lens $\Lens$ beteween $X$ and $Y$. Consider the
  function, $\SingleApp_{\Lens} \OfType (X + Y) \times \Lens.C
  \rightarrow ((X + Y) \times \Lens.C)$, defined as:

  \begin{tabular}{@{}r@{\ }c@{\ }l@{\ }l}
    $\SingleApp_{\Lens}(\InLOf{x},c)$
    & =
    & $\LetIn{(y,c')}{\Lens.\PutRSymOf{(x,c)}}$\\
    &
    & $(\InROf{y},c')$\\
    
    $\SingleApp_{\Lens}(\InROf{y},c)$
    & =
    & $\LetIn{(x,c')}{\Lens.\PutLSymOf{(y,c)}}$\\
    &
    & $(\InLOf{x},\SomeOf{(x,y)})$
  \end{tabular}
\end{definition}

\begin{definition}
  Fix a simple symmetric lens $\Lens$ beteween $X$ and $Y$. Consider the
  function, $\SingleApp_{\Lens} \OfType ((X + Y) \times \OptionOf{(X \times Y)})
  \rightarrow ((X + Y) \times \OptionOf{(X \times Y)})$, defined as:

  \begin{tabular}{@{}r@{\ }c@{\ }l@{\ }l}
    $\SingleApp_{\Lens}(\InLOf{x},\None)$
    & =
    & $\LetIn{y}{\Lens.\CreateROf{x}}$\\
    &
    & $(\InROf{y},\SomeOf{(x,y)})$\\
    
    $\SingleApp_{\Lens}(\InROf{y},\None)$
    & =
    & $\LetIn{x}{\Lens.\CreateLOf{y}}$\\
    &
    & $(\InLOf{x},\SomeOf{(x,y)})$\\
    
    $\SingleApp_{\Lens}(\InLOf{x'},\SomeOf{(x,y)})$
    & =
    & $\LetIn{y'}{\Lens.\PutROf{x'}{y}}$\\
    &
    & $(\InROf{y'},\SomeOf{(x',y')})$\\
    
    $\SingleApp_{\Lens}(\InROf{y'},\SomeOf{(x,y)})$
    & =
    & $\LetIn{x'}{\Lens.\PutLOf{y'}{x}}$\\
    &
    & $(\InROf{x'},\SomeOf{(x',y')})$\\
  \end{tabular}
\end{definition}

\begin{definition}
  Fix a symmetric lens \Lens over $X$ and $Y$. We define the relation $R_\Lens$
  over $\Lens.C$ and $\OptionOf{(X \times Y)}$ as the largest relation such
  that:
  \begin{enumerate}
  \item $R_\Lens(c,None) \BooleanImplies c = \Lens.init$
  \item $R_\Lens(c,Some (x,y)) \BooleanImplies
    \Lens.\PutRSymOf{(x,c)} = (y,c) \BooleanAnd
    \Lens.\PutLSymOf{(y,c)} = (x,c)$
  \end{enumerate}
\end{definition}

\begin{mylemma}
  \label{lem:s-equiv}
  Let $\Lens$ be a symmetric lens.  Let $c \in \Lens.C$ be a complement, and
  $xyo \in \OptionOf{(X \times Y)}$.  If $R_{\Lens}(c,xyo)$, then
  $apply(\Lens,c,es) = apply(S(\Lens),xyo,es)$.
\end{mylemma}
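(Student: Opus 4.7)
The plan is to proceed by induction on the length of the edit sequence $es$. The base case $es = []$ is immediate, since both $apply(\Lens,c,[])$ and $apply(S(\Lens),xyo,[])$ unfold to $[]$ by their defining equations, regardless of $c$ and $xyo$. For the inductive step with $es = e :: es'$, I will perform a four-way case analysis on whether $xyo = \None$ or $xyo = \SomeOf{(x,y)}$, and on whether $e = \InLOf{x'}$ or $e = \InROf{y'}$. In each subcase I must verify two things: the first output edit produced by $\Lens$ agrees with the first output edit produced by $S(\Lens)$, and the resulting internal states are still related by $R_\Lens$, so that the inductive hypothesis can be applied to the tail $es'$.

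In the two $\None$ subcases, the definition of $R_\Lens$ forces $c = \Lens.init$, so the ``scratch'' complement that $S(\Lens).\CreateR$ or $S(\Lens).\CreateL$ builds from $\Lens.init$ coincides with the complement $c$ used by $\Lens$; both lenses therefore invoke the same underlying $\Lens.\PutR$ or $\Lens.\PutL$ call and produce the same output. For the state, if $\Lens.\PutR(x', c) = (y', c')$ then PutRL gives $\Lens.\PutL(y', c') = (x', c')$, and PutLR applied to this gives $\Lens.\PutR(x', c') = (y', c')$; these two equations are exactly the witnesses that $R_\Lens(c', \SomeOf{(x', y')})$ holds, matching the new state recorded by $S(\Lens)$. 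The $\InROf{y'}$ subcase is symmetric.

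The two $\SomeOf{(x,y)}$ subcases are where the forgetfulness hypothesis is actually used, and they are the main obstacle. Take $e = \InLOf{x'}$: the classical lens simply produces $\Lens.\PutR(x', c)$, whereas $S(\Lens).\PutROf{x'}{y}$ unfolds into the two-step pipeline $\Lens.\PutL(y, \Lens.init) = (\_, c'')$ followed by $\Lens.\PutR(x', c'')$. These pipelines start from different complements ($c$ versus $\Lens.init$), so their visible first components must be reconciled. Since $R_\Lens(c,\SomeOf{(x,y)})$ supplies $\Lens.\PutL(y, c) = (x, c)$, the $\Lens$ side can be rewritten as $\Lens.\PutR(x', \Snd(\Lens.\PutL(y, c)))$, and Starting Forgetfulness LR (Property~\ref{prop:forget-lr}) then equates the two first components. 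Once the outputs agree, $R_\Lens(c', \SomeOf{(x', y')})$ for the new states again follows from one PutRL and one PutLR application. The $\InROf{y'}$ subcase is symmetric, using Starting Forgetfulness RL together with the fact that $R_\Lens$ also supplies $\Lens.\PutR(x, c) = (y, c)$. The main obstacle is precisely this asymmetry: the simple symmetric lens re-creates a complement from $\Lens.init$ while the classical lens threads it forward, and the forgetfulness properties are exactly what guarantee that the two computation paths yield the same user-visible output.
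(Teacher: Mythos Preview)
Your proposal is correct and follows essentially the same approach as the paper: induction on $es$ with a case split on $xyo$ and the direction of the head edit, using the definition of $R_\Lens$ in the $\None$ cases and Starting Forgetfulness (Properties~\ref{prop:forget-rl} and~\ref{prop:forget-lr}) together with the $R_\Lens$ hypothesis $\Lens.\PutLSymOf{(y,c)}=(x,c)$ (resp.\ $\Lens.\PutRSymOf{(x,c)}=(y,c)$) in the $\SomeOf{(x,y)}$ cases, then re-establishing $R_\Lens$ on the new states via \PutRL{} and \PutLR{} to invoke the inductive hypothesis.
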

\begin{proof}
  By induction on the derivation the application of $apply(S(\Lens),xyo,es)$
  \begin{case}[empty list]
    So by the case, $apply(S(\Lens),xyo,[]) = []$. Furthermore,
    $apply(\Lens,c,[]) = []$, as desired.
  \end{case}
  \begin{case}[first edit is a create right]
    So by the case, $apply(S(l),\None,\InLOf{x}::es) = \InROf{y}::es'$, where
    $S(\Lens).\CreateROf{x} = y$ and $apply(S(\Lens),\SomeOf{(x,y)},es) = es'$.

    As $R_{\Lens}(None,c)$, $c = \Lens.init$. So, performing $apply$ on
    $\Lens.init$, we get $apply(\Lens,\Lens.init,(\InLOf{x})::es) =
    (\InROf{y'})::es''$ where $\Lens.putr(x,c) = (y',c')$ and $apply(\Lens,c',es)
    = es'$.

    So, by definition, $S(\Lens).\CreateROf{x} = \Fst \App (\Lens.\PutRSymOf{(x,\Lens.init)})$, so $y = y'$.

    Furthermore, by \PutRL, $\Lens.\PutLSymOf{(y,c')} = (x,c')$, and by another
    application of \PutLR, $\Lens.\PutRSymOf{(x,c')} = (y,c')$.  This means
    that $R_{\Lens}(c',Some (x,y))$.

    So, by induction assumption, $apply(\Lens,c',es) = apply(S(\Lens),xyo,es)$, so
    $es' = es''$.  This means, $apply(S(l),\None,\InLOf{x}::es) =
    \InROf{y}::es'$ and $apply(\Lens,\Lens.init,(\InLOf{x})::es) =
    (\InROf{y})::es'$, so they are equal, as desired.
  \end{case}
  \begin{case}[first edit is a create left]
    Symmetric to previous case
  \end{case}
  \begin{case}[first edit is a put right]
    So by the case, $apply(S(\Lens),\SomeOf{(x,y)},\InLOf{x'}::es) = \InROf{y'}::es'$, where
    $S(\Lens).\PutROf{x}{y} = y'$ and $apply(S(\Lens),\SomeOf{(x',y')},es) = es'$.

    Performing $apply$ on $c$, we get $apply(\Lens,c,(\InLOf{x'})::es) =
    (\InROf{y'})::es''$ where $\Lens.\PutRSymOf{(x',c)} = (y',c')$ and
    $apply(\Lens,c',es) = es''$.

    So, by definition, $S(\Lens).\PutROf{x}{y} = \Fst \App (\Lens.\PutRSymOf{(x,c'')})$, where $c'' = \Snd \App (\Lens.\PutLSymOf{(y,\Lens.init)})$.

    By assumption, $R_{\Lens}(c,\SomeOf{(x,y)})$, so $\Lens.\PutLSymOf{(y,c)}
    = (x,c)$.  So, by Property~\ref{prop:forget-lr}, we know $y' = y''$.

    Furthermore, by \PutRL, $\Lens.\PutLSymOf{(y',c')} = (x',c')$, and by another
    application of \PutLR, $\Lens.\PutRSymOf{(x',c')} = (y',c')$.  This means
    that $R_{\Lens}(c',Some (x',y'))$.

    So, by induction assumption, $apply(\Lens,c',es) = apply(S(\Lens),xyo,es)$, so
    $es' = es''$.  This means, $apply(S(l),\SomeOf{(x,y)},\InLOf{x'}::es) =
    \InROf{y'}::es'$ and $apply(\Lens,c,(\InLOf{x'})::es) =
    (\InROf{y'})::es'$, so they are equal, as desired.
  \end{case}
  \begin{case}[first edit is a put left]
    Symmetric to previous case
  \end{case}
\end{proof}

\begin{definition}[F]
  Let $\Lens$ be a simple symmetric lens between $X$ and $Y$.

  Consider the following set $C$, distinguished element of that set, $init$, and
  pair of functions $\PutRSym$ and $\PutLSym$, that we wish to satisfy the
  symmetric lens laws, and that we also wish to be forgetful.

  \begin{centering}
    \begin{tabular}{@{}r@{\ }c@{\ }l@{}}
      $C$
      & =
      & $\OptionOf{(X \times Y)}$\\
      
      $init$
      & =
      & $\None$\\
      
      $\PutRSymOf{(x,c)}$
      & =
      & $(y,\SomeOf{(x,y)})$ where $y = \begin{cases*}
        \Lens.\CreateROf{x} & if $c = \None$\\
        \Lens.\PutROf{x}{y'} & if $c = \SomeOf{(x',y')}$\\
        \end{cases*}$\\
      
      $\PutLSymOf{(y,c)}$
      & =
      & $(x,\SomeOf{(x,y)})$ where $x = \begin{cases*}
        \Lens.\CreateLOf{y} & if $c = \None$\\
        \Lens.\PutLOf{y}{x'} & if $c = \Some{(x',y')}$\\
        \end{cases*}$\\
    \end{tabular}
  \end{centering}
\end{definition}

\begin{mylemma}
  \label{lem:f-sym}
  If $\Lens$ is a simple symmetric lens, then $F(\Lens)$ is a symmetric lens.
\end{mylemma}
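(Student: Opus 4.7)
The plan is to verify that the constructed $F(\Lens) = (C, init, \PutRSym, \PutLSym)$ satisfies the two symmetric lens equational laws, namely that $\Lens.\PutRSymOf{(x,c)} = (y,c')$ implies $\Lens.\PutLSymOf{(y,c')} = (x,c')$, and symmetrically for \PutLSym. The proof is a direct calculation by case analysis on the shape of the complement, reducing in each case to one of the four simple symmetric lens laws available on $\Lens$.

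\textbf{Law 1 (PutR then PutL).} Suppose $F(\Lens).\PutRSymOf{(x,c)} = (y,c')$. By the definition of \PutRSym, we have $c' = \SomeOf{(x,y)}$, so unfolding $F(\Lens).\PutLSymOf{(y,c')}$ gives $(x'', \SomeOf{(x'',y)})$ with $x'' = \Lens.\PutLOf{y}{x}$. We must show $x'' = x$. Split on $c$:
\begin{itemize}
\item If $c = \None$, then $y = \Lens.\CreateROf{x}$, so $x'' = \Lens.\PutLOf{(\Lens.\CreateROf{x})}{x} = x$ by \CreatePutRL.
\item If $c = \SomeOf{(x_0,y_0)}$, then $y = \Lens.\PutROf{x}{y_0}$, so $x'' = \Lens.\PutLOf{(\Lens.\PutROf{x}{y_0})}{x} = x$ by \PutRL.
\end{itemize}
In both cases $x'' = x$, and hence the resulting complement $\SomeOf{(x'',y)} = \SomeOf{(x,y)} = c'$, establishing Law 1.

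\textbf{Law 2 (PutL then PutR).} Symmetric: if $F(\Lens).\PutLSymOf{(y,c)} = (x,c')$, then $c' = \SomeOf{(x,y)}$, so $F(\Lens).\PutRSymOf{(x,c')} = (y'', \SomeOf{(x,y'')})$ with $y'' = \Lens.\PutROf{x}{y}$. When $c = \None$ we apply \CreatePutLR to get $y'' = y$; when $c = \SomeOf{(x_0,y_0)}$ we apply \PutLR to get $y'' = y$.

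No step here is genuinely difficult; the main thing to be careful about is matching the right argument order in the definitions of \PutRSym and \PutLSym (in particular, that $c' = \SomeOf{(x,y)}$ packages the \emph{new} $x$ and $y$, so when we apply the second operation we hit the ``$\Some$'' branch with exactly the right arguments for the hypothesis of the relevant simple symmetric law). Once the case split is laid out, each subcase is a one-line invocation of the corresponding round-tripping law, so the proof is essentially a bookkeeping exercise.
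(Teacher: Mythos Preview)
Your proof is correct and follows essentially the same approach as the paper: case split on the complement $c$, then apply \CreatePutRL{} (when $c = \None$) or \PutRL{} (when $c = \SomeOf{(x_0,y_0)}$), with the second law handled symmetrically. The only cosmetic difference is that you first observe $c' = \SomeOf{(x,y)}$ uniformly before splitting on $c$, whereas the paper carries the case split through from the start.
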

\begin{proof}
  \PutRL: There are two cases, $c = \None$, and $c = \SomeOf{(x,y)}$.

  \begin{case}[c = \None]
    Let $(y',c') = F(\Lens).\PutRSymOf{(x',\None)}$. This means that $y' =
    \Lens.\CreateROf{x'}$, and $c' = \SomeOf{(x',y')}$.
    
    Now, consider $(x'',c'') = F(\Lens).\PutLSymOf{(y',\SomeOf{(x',y')})}$. By
    unfolding definitions, $x'' = \Lens.\PutLOf{y'}{x'}$. By \CreatePutRL, $x''
    = x'$, meaning $c'' = \SomeOf{(x',y')}$. This means $(x',c') =
    F(\Lens).\PutLSymOf{(y',\SomeOf{(x',y')})}$, as desired.
  \end{case}

  \begin{case}[c = \SomeOf{(x,y)}]
    Let $(y',c') = F(\Lens).\PutRSymOf{(x',\SomeOf{(x,y)})}$. This means that $y' =
    \Lens.\PutROf{x'}{y}$, and $c' = \SomeOf{(x',y')}$.
    
    Now, consider $(x'',c'') = F(\Lens).\PutLSymOf{(y',\SomeOf{(x',y')})}$. By
    unfolding definitions, $x'' = \Lens.\PutLOf{y'}{x'}$. By \PutRL, $x''
    = x'$, meaning $c'' = \SomeOf{(x',y')}$. This means $(x',c') =
    F(\Lens).\PutLSymOf{(y',\SomeOf{(x',y')})}$, as desired.
  \end{case}

  The second requirement, \PutLR, is symmetric.
\end{proof}

\begin{mylemma}
  If $\Lens$ is a simple symmetric lens, then $F(\Lens)$ is a forgetful symmetric lens.
\end{mylemma}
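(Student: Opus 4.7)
The plan is to prove the two forgetfulness laws (\ForgetfulRL and \ForgetfulLR) by direct unfolding of the definition of $F(\Lens)$. By Lemma~\ref{lem:f-sym} we already know $F(\Lens)$ is a symmetric lens, so only the two additional forgetfulness equations remain to check. The proof is essentially mechanical because the complement of $F(\Lens)$ is $\OptionOf{(X \times Y)}$ and every \PutRSym/\PutLSym call overwrites this complement with an option built solely from its inputs.

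For \ForgetfulRL, I would fix $x \in X$, $y \in Y$, and arbitrary complements $c_1, c_2 \in \OptionOf{(X \times Y)}$, then compute $F(\Lens).\PutRSymOf{(x, c_i)}$ for $i \in \{1,2\}$. Unfolding the definition, the resulting complement in both cases has the form $\SomeOf{(x, y_i)}$ for some $y_i$ obtained either via $\Lens.\CreateROf{x}$ or $\Lens.\PutROf{x}{y'}$ depending on the shape of $c_i$ — but critically the first component of the option is always $x$, independent of $c_i$. Next I compute $F(\Lens).\PutLSymOf{(y, \SomeOf{(x, y_i)})}$; by definition this falls into the \SomeOf{} branch and yields $(x'', \SomeOf{(x'', y)})$ where $x'' = \Lens.\PutLOf{y}{x}$. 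Since $x$ and $y$ do not depend on $i$, the value $x''$ is the same for $i = 1$ and $i = 2$, so $c_1'' = c_2'' = \SomeOf{(\Lens.\PutLOf{y}{x}, y)}$, giving the conclusion. The \ForgetfulLR case is entirely symmetric.

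The key observation driving the proof is that in $F(\Lens)$ each put operation \emph{overwrites} the complement to a canonical form $\SomeOf{(x_\text{new}, y_\text{new})}$ determined only by the current input and the freshly computed output on the other side. Hence two consecutive puts erase any trace of the starting complement. There is no real obstacle here — the construction of $F(\Lens)$ was designed so that exactly this canonicalization happens; I would simply need to be careful with the case split on whether the starting complements $c_1, c_2$ are \None{} or \SomeOf{\cdot}, since in all four combinations the intermediate $y_i$ values may differ, yet the final complement $c_i''$ depends only on $x$ and $y$.
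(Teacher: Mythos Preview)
Your proposal is correct and follows essentially the same approach as the paper's proof: invoke Lemma~\ref{lem:f-sym} for the symmetric-lens part, then for \ForgetfulRL{} observe that after $\PutRSym$ the complement is always $\SomeOf{(x,\cdot)}$ with first component $x$ independent of the starting complement, so the subsequent $\PutLSym$ produces $\SomeOf{(\Lens.\PutLOf{y}{x},y)}$ in both cases, and \ForgetfulLR{} is symmetric. Your remark about the case split on the shape of $c_i$ (which only affects the irrelevant intermediate $y_i$) is exactly the paper's observation that ``we don't actually care about the values of $y_1$ and $y_2$.''
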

\begin{proof}
  By Lemma~\ref{lem:f-sym}, we know $F(\Lens)$ is symmetric, so we merely need
  to show it is forgetful.  We will tackle merely \ForgetfulRL, as the proof for
  \ForgetfulLR is symmetric.

  Let $c_1$ and $c_2$ be two arbitrary complements, and $x$ and $y$ two
  arbitrary values of $X$ and $Y$, respectively.
  
  We know that $c_1' = \Snd \App (F(\Lens).\PutRSymOf{(x,c_1)})$ and $c_2' =
  \Snd \App (F(\Lens).\PutRSymOf{(x,c_2)})$.  Now, by inversion on
  $F(\Lens).\PutRSym$, we know that both $c_1' = Some(x,y_1')$ and $c_2' =
  Some(x,y_2')$ for some values of $y_1$ and $y_2$ (though we don't actually
  care about the values of $y_1$ and $y_2$).

  Now by unfolding definitions we know, $\Snd \App
  (F(\Lens).\PutLSymOf{(y,\SomeOf{(x,y_1')})}) =
  \SomeOf(\Lens.\PutLOf{x}{y},y) = c_1''$. Similarly, we know $\Snd \App
  (F(\Lens).\PutLSymOf{(y,\SomeOf{(x,y_2')})}) =
  \SomeOf(\Lens.\PutLOf{x}{y},y) = c_2''$, so $c_1'' = c_2''$, as intended.
\end{proof}

\begin{mylemma}
\label{lem:f-equiv}
  If $\Lens$ be a simple symmetric lens, then $apply(\Lens,xyo,es) =
  apply(F(\Lens),xyo,es)$.
\end{mylemma}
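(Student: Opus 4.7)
The plan is to proceed by induction on the length of the edit sequence $es$, with $xyo$ universally quantified so that the induction hypothesis can be applied to whatever new state arises after processing a single edit. The base case $es = []$ is immediate: both $apply(\Lens, xyo, [])$ and $apply(F(\Lens), xyo, [])$ return $[]$ by the empty-list rules of their respective $apply$ definitions. The key observation that makes this induction work is that $F(\Lens).C$ is literally $\OptionOf{(X \times Y)}$, and $F(\Lens).init = \None$, so the state spaces of simple-symmetric $apply$ and of $F(\Lens)$'s (classical) $apply$ are the same set, and we can sensibly compare runs that share the same $xyo$.

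For the inductive step, I would do a four-way case analysis on (i) whether $xyo$ is $\None$ or $\SomeOf{(x,y)}$, and (ii) whether the head edit is $\InLOf{x'}$ or $\InROf{y'}$. Consider the representative case $xyo = \SomeOf{(x,y)}$ with head $\InLOf{x'}$. The simple-symmetric $apply$ rule fires with $y' = \Lens.\PutROf{x'}{y}$ and recurses with new state $\SomeOf{(x',y')}$, emitting $\InROf{y'}$. On the $F(\Lens)$ side, unfolding the definition of $F$ gives $F(\Lens).\PutRSymOf{(x',\SomeOf{(x,y)})} = (y', \SomeOf{(x',y')})$ with exactly the same $y'$; the classical $apply$ rule for $\InLOf{x'}$ then emits $\InROf{y'}$ and recurses with complement $\SomeOf{(x',y')}$. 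The heads match, and the tails are equal by the induction hypothesis applied to the common state $\SomeOf{(x',y')}$ and the shorter list.

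The case $xyo = \None$ with head $\InLOf{x'}$ is analogous: $F$'s definition routes the $\None$ branch of $\PutRSym$ through $\Lens.\CreateR$, producing $(y', \SomeOf{(x', y')})$ with $y' = \Lens.\CreateROf{x'}$, which is exactly what the simple-symmetric $apply$ rule for a $\None$ state does. The two cases with head $\InROf{y'}$ are symmetric, using the $\PutLSym$/$\PutL$/$\CreateL$ clauses of $F$ and of the simple-symmetric $apply$ respectively.

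I do not anticipate a real obstacle here; once one fixes the right induction statement (with $xyo$ quantified, and with the observation that $F(\Lens).C = \OptionOf{(X \times Y)}$), each case is a direct unfolding of the definition of $F$ against the corresponding clause of simple-symmetric $apply$, and the induction hypothesis takes care of the tails. The only bookkeeping to watch is to ensure that the new state produced by $F(\Lens)$'s $\PutRSym$/$\PutLSym$ is literally the same $\SomeOf{(x_{\text{new}}, y_{\text{new}})}$ that simple-symmetric $apply$ threads through, which is guaranteed by construction of $F$.
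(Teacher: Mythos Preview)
Your proposal is correct and follows essentially the same approach as the paper: induction on the structure of $es$ (the paper phrases it as induction on the derivation of $apply$ on $\Lens$, which amounts to the same thing), with the same four-way case split on $xyo$ and the head edit, and the same key observation that $F(\Lens).C = \OptionOf{(X \times Y)}$ so the states line up literally. The paper's case analysis and your representative case unfold $F$'s definitions against the simple-symmetric $apply$ clauses in exactly the way you describe.
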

\begin{proof}
  By induction on the derivation of $apply$ on $\Lens$!

  \begin{case}[empty list]
    So by the case, $apply(S(\Lens),xyo,[]) = []$. Furthermore,
    $apply(\Lens,xyo,[]) = []$, as desired.
  \end{case}

  \begin{case}[first edit is a create right]
    So by the case, $apply(\Lens,\None,\InLOf{x}::es) = \InROf{y}::es'$, where
    $\Lens.\CreateROf{x} = y$ and $apply(\Lens,\SomeOf{(x,y)},es) = es'$.

    Performing $apply$ on
    $\None$, we get $apply(F(\Lens),\None,(\InLOf{x})::es) =
    (\InROf{y'})::es''$ where $F(\Lens).putr(x,\None) = (y',c)$ and $apply(\Lens,c,es)
    = es'$.

    Unfolding definitions, $F(\Lens).putr(x,\None) = \Lens.\CreateROf{x} =
    (y,\SomeOf{(x,y)})$, so $c = \SomeOf{(x,y)}$ and $y = y'$

    So, by induction assumption, $apply(\Lens,\SomeOf{(x,y)},es) = apply(F(\Lens),\SomeOf{(x,y)},es)$, so
    $es' = es''$.  This means, $apply(\Lens,\None,\InLOf{x}::es) =
    \InROf{y}::es'$ and $apply(F(\Lens),\None,(\InLOf{x})::es) =
    (\InROf{y})::es'$, so they are equal, as desired.
  \end{case}

  \begin{case}[first edit is a create left]
    Symmetric to previous case
  \end{case}

  \begin{case}[first edit is a put right]
    So by the case, $apply(\Lens,\SomeOf{(x,y)},\InLOf{x'}::es) = \InROf{y'}::es'$, where
    $\Lens.\PutROf{x}{y} = y'$ and $apply(\Lens,\SomeOf{(x',y')},es) = es'$.

    Performing $apply$ on $F(\Lens)$, we get
    $apply(F(\Lens),\SomeOf{(x,y)},(\InLOf{x'})::es) = (\InROf{y'})::es''$ where
    $F(\Lens).\PutRSymOf{(x',\SomeOf{(x,y)})} = (y'',\SomeOf{(x',y'')})$ and
    $apply(\Lens,c',es) = es''$.
    
    So, by definition, $\Fst \App (F(\Lens).\PutRSymOf{(x',\SomeOf{(x,y)})})
    = \Lens.PutROf{x'}{y'}$, so also by definition, $c' = \SomeOf{(x',y')}$.

    So, by induction assumption, $apply(\Lens,\SomeOf{(x',y')},es) =
    apply(F(\Lens),\SomeOf{(x',y')},es)$, so $es' = es''$. This means,
    $apply(\Lens,\SomeOf{(x,y)},\InLOf{x'}::es) = \InROf{y'}::es'$ and
    $apply(F(\Lens),\SomeOf{(x,y)},(\InLOf{x'})::es) = (\InROf{y'})::es'$, so
    they are equal, as desired.
  \end{case}

  \begin{case}[first edit is a put left]
    Symmetric to previous case
  \end{case}
\end{proof}

\begin{theorem}
  Let $\Lens$ be a symmetric lens. The lens $\Lens$ is equivalent to a forgetful
  lens if, and only if, there exists a simple symmetric lens $\Lens'$ where
  $apply(\Lens,\Lens.init,es) = apply(\Lens',\None,es)$, for all put sequences
  $es$.
\end{theorem}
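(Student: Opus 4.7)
The plan is to prove the two directions of the biconditional by combining the constructions $S(\cdot)$ and $F(\cdot)$ already set up in the excerpt, together with the two ``equivalence'' lemmas (Lemma~\ref{lem:s-equiv} and Lemma~\ref{lem:f-equiv}) that relate the $apply$ semantics of a lens to its transform.

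For the forward direction, suppose $\Lens$ is equivalent to some forgetful symmetric lens $\Lens_f$, meaning $apply(\Lens, \Lens.init, es) = apply(\Lens_f, \Lens_f.init, es)$ for all edit sequences $es$. I would take $\Lens' \coloneqq S(\Lens_f)$; by the earlier lemma, $\Lens'$ is a simple symmetric lens. To conclude $apply(\Lens, \Lens.init, es) = apply(\Lens', \None, es)$, the key step is to invoke Lemma~\ref{lem:s-equiv} with $c = \Lens_f.init$ and $xyo = \None$. This requires the side condition $R_{\Lens_f}(\Lens_f.init, \None)$, which holds immediately from clause~(1) of the definition of $R_{\Lens_f}$.

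For the backward direction, suppose there is a simple symmetric lens $\Lens'$ with $apply(\Lens, \Lens.init, es) = apply(\Lens', \None, es)$ for all $es$. I would exhibit $F(\Lens')$ as a forgetful symmetric lens equivalent to $\Lens$. By the preceding two lemmas, $F(\Lens')$ is both a symmetric lens and forgetful, and by construction $F(\Lens').init = \None$. Then Lemma~\ref{lem:f-equiv} (instantiated at $xyo = \None$) gives $apply(\Lens', \None, es) = apply(F(\Lens'), \None, es) = apply(F(\Lens'), F(\Lens').init, es)$, and chaining with the hypothesis yields the required equivalence of $\Lens$ and $F(\Lens')$.

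The whole argument is really just a matter of assembly: the real content lives in the four preceding lemmas (that $S$ and $F$ produce lenses of the right kind, and that each preserves $apply$-semantics in the appropriate sense). I do not anticipate any obstacle here—the only subtle point is to notice that the two ``starting'' conditions ($R_{\Lens_f}(\Lens_f.init, \None)$ for $S$, and $F(\Lens').init = \None$ for $F$) are exactly what the equivalence lemmas need to relate the canonical initial complement on one side to $\None$ on the other, so that the equivalences compose cleanly in each direction.
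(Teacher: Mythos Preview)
Your proposal is correct and follows essentially the same approach as the paper: use $S(\Lens_f)$ for the forward direction via Lemma~\ref{lem:s-equiv}, and $F(\Lens')$ for the backward direction via Lemma~\ref{lem:f-equiv}, chaining through the equivalence hypothesis in each case. If anything, you are slightly more explicit than the paper in checking the side conditions (that $R_{\Lens_f}(\Lens_f.init,\None)$ holds and that $F(\Lens').init=\None$), which the paper's proof uses but does not spell out.
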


\begin{proof}
  \begin{case}[$\Rightarrow$]
    Let $\Lens$ be equivalent to a forgetful lens $\Lens'$.  Consider the
    simple symmetric lens, $S(\Lens')$.  By Lemma~\ref{lem:s-equiv},
    $apply(\Lens',\Lens'.init,es) = apply(S(\Lens'),\None,es)$.  As $\Lens$ is
    equivalent to $\Lens'$, $apply(\Lens,\Lens.init,es) =
    apply(\Lens',\Lens'.init,es)$.  So, by transitivity,
    $apply(\Lens,\Lens.init,es) = apply(S(\Lens),\None,es)$.
  \end{case}

  \begin{case}[$\Leftarrow$]
    Let $\Lens'$ be a simple symmetric lens where $apply(\Lens,\Lens.init,es) =
    apply(\Lens',\None,es)$, for all put sequences $es$.

    Consider $F(\Lens')$, a forgetful symmetric lens where $apply(\Lens',\None,es) =
    apply(F(\Lens'),\Lens'.init,es)$, for all put sequences $es$, as
    $\Lens'.init = \None$, by Lemma~\ref{lem:f-equiv}.  By transitivity, $apply(\Lens,\Lens.init,es) =
    apply(\Lens',\Lens'.init,es)$, so $\Lens$ and $\Lens'$ are equivalent.
  \end{case}
    
\end{proof}

\section{SREs and Information Theory}
\begin{theorem}
  \label{thm:semantics-correctness}
  If $\Regex \SSREquiv \RegexAlt$ then $\ProbabilityOf{\Regex}{\String} =
  \ProbabilityOf{\RegexAlt}{\String}$, for all strings $\String \in \LanguageOf{\Regex}$.
\end{theorem}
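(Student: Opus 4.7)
The plan is to prove the theorem by induction on the derivation of $\Regex \SSREquiv \RegexAlt$. Presumably $\SSREquiv$ is the smallest congruence closed under reflexivity, symmetry, transitivity, and the axioms listed in Figure~\ref{fig:ssr-sre}. The reflexivity, symmetry, and transitivity cases are immediate from equality of probability distributions, and the congruence cases reduce to showing that each of the probability-forming operations (concatenation, $p$-indexed union, and $p$-indexed star) depends only on the distributions of its immediate subexpressions. This is visible directly from the defining equations of $P_{\Regex}$ in Section~\ref{sec:sre}, so the only real content lies in verifying each of the eleven axioms individually.

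The straightforward axioms I would dispatch first are the ones requiring only a direct unfolding of the definition of $P$: \EmptyProjectionLeftRule{}, \EmptyProjectionRightRule{}, \ConcatIdentityLeftRule{}, \ConcatIdentityRightRule{}, \OrIdentityRule{}, \OrCommutativityRule{}, and \ConcatAssocRule{}. In each of these, both sides collapse to the same summation once one uses $\ProbabilityOf{\emptyset}{\String} = 0$, $\ProbabilityOf{\EmptyString}{\String} = \mathbbm{1}[\String = \EmptyString]$, or the symmetry of summing over the two disjuncts with $p$ and $1-p$. The distributivity rules \DistributivityLeftRule{} and \DistributivityRightRule{} are also direct: expand
\[
  \ProbabilityOf{\RegexConcat{\Regex}{(\PRegexOr{\Regex'}{\Regex''}{\Probability})}}{\String} = \sum_{\String = \String_1 \String_2} \ProbabilityOf{\Regex}{\String_1}\bigl(\Probability\ProbabilityOf{\Regex'}{\String_2} + (1-\Probability)\ProbabilityOf{\Regex''}{\String_2}\bigr)
\]
and distribute the sum to match the right-hand side.

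The two nontrivial cases are the unroll-star rules and \OrAssociativityRule{}. For \UnrollstarLeftRule{} (and symmetrically \UnrollstarRightRule{}), I would expand the right-hand side as
\[
  \ProbabilityOf{\PRegexOr{\EmptyString}{(\RegexConcat{\Regex}{\PRegexStar{\Regex}{\Probability}})}{1-\Probability}}{\String} = (1-\Probability)\mathbbm{1}[\String=\EmptyString] + \Probability \sum_{\String = \String_0 \String'} \ProbabilityOf{\Regex}{\String_0}\ProbabilityOf{\PRegexStar{\Regex}{\Probability}}{\String'},
\]
then plug in the series definition of $\ProbabilityOf{\PRegexStar{\Regex}{\Probability}}{\String'}$ and re-index the sum over $n$ on the left: the $n=0$ term yields $(1-\Probability)\mathbbm{1}[\String=\EmptyString]$, and the terms for $n \geq 1$ factor as $\Probability \cdot \ProbabilityOf{\Regex}{\String_0}\ProbabilityOf{\PRegexStar{\Regex}{\Probability}}{\String_1 \cdots \String_n}$ by pulling one factor of $\Probability$ out and collapsing the remaining $(n-1)$-fold sum back into the star series. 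The main obstacle is \OrAssociativityRule{}: one must check that for any $\String$,
\[
  \Probability_1 \ProbabilityOf{\Regex}{\String} + (1-\Probability_1)\Bigl(\Probability_2 \ProbabilityOf{\Regex'}{\String} + (1-\Probability_2)\ProbabilityOf{\Regex''}{\String}\Bigr) = \Probability_1 \Probability_2 \ProbabilityOf{\Regex}{\String} + (1 - \Probability_1\Probability_2)\Bigl(\tfrac{(1-\Probability_1)\Probability_2}{1-\Probability_1\Probability_2}\ProbabilityOf{\Regex'}{\String} + \bigl(1 - \tfrac{(1-\Probability_1)\Probability_2}{1-\Probability_1\Probability_2}\bigr)\ProbabilityOf{\Regex''}{\String}\Bigr).
\]
This reduces to a routine algebraic identity on the coefficients of $\ProbabilityOf{\Regex}{\String}$, $\ProbabilityOf{\Regex'}{\String}$, and $\ProbabilityOf{\Regex''}{\String}$, and I would just work each coefficient out in turn; the only subtlety is that the case $\Probability_1 \Probability_2 = 1$ (impossible since $\Probability_i \in (0,1)$) does not arise, so the division is well-defined. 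With all axioms verified, the congruence cases glue them together and close the induction.
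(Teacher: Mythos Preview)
Your approach is exactly the paper's: verify each axiom in Figure~\ref{fig:ssr-sre} by unfolding the definition of $P_{\Regex}$, with the congruence and equivalence-closure cases handled exactly as you describe. The paper's proof proceeds axiom by axiom in the same way, with the same observation that the only case requiring real care is the star-unrolling (handled by splitting off the $n=0$ summand and re-indexing, just as you outline).

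One concrete slip to fix: in your displayed identity for \OrAssociativityRule{}, the left-hand side you wrote is the expansion of $\PRegexOr{\Regex}{(\PRegexOr{\Regex'}{\Regex''}{\Probability_2})}{\Probability_1}$, not of $\PRegexOr{(\PRegexOr{\Regex}{\Regex'}{\Probability_1})}{\Regex''}{\Probability_2}$ as the axiom requires. The correct left-hand side is
\[
  \Probability_2\bigl(\Probability_1 \ProbabilityOf{\Regex}{\String} + (1-\Probability_1)\ProbabilityOf{\Regex'}{\String}\bigr) + (1-\Probability_2)\ProbabilityOf{\Regex''}{\String},
\]
whose coefficients $(\Probability_1\Probability_2,\ (1-\Probability_1)\Probability_2,\ 1-\Probability_2)$ do match the right-hand side after simplification. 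As written, your equation is false (compare the coefficient of $\ProbabilityOf{\Regex}{\String}$: $\Probability_1$ versus $\Probability_1\Probability_2$), so if you carried it out literally you would get stuck. With that correction, the argument goes through.
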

The following lemma is
\begin{proof}
\begin{enumerate}

\item
($S \; |_1 \; \varnothing \equiv^S S$)

For all $s \in \mathcal{L}(S \; | \; \varnothing)$, then $P_{S \; |_1 \; \varnothing} = 1 * P_S(s) = P_S(s)$.
\item
($S \cdot \varnothing \equiv^S \varnothing$)

For all $s \in \mathcal{L}(S \cdot \varnothing)$, the $P_{S \cdot \varnothing}(s) = \sum_{s_1 \cdot s_2}P_S(s_1) * P_{\varnothing}(s_2)$, but this sum is empty, hence $P_{S \cdot \varnothing}(s) = 0 = P_{\varnothing}(s)$.
\item
($\varnothing \cdot S \equiv^S \varnothing$)

Similar to the previous case.
\item
($(S \cdot S') \cdot S'' \equiv^S S \cdot (S' \cdot S'')$). Let $s \in \mathcal{L}(S \cdot S' \cdot S'')$

Then
\begin{align*}
P_{(S \cdot S') \cdot S''}(s) &= \sum_{s=s_4 \cdot s_3}P_{S \cdot S'}(s_4) * P_{S''}(s_3)\\
&= \sum_{s=s_4 \cdot s_3}\left(\sum_{s_4=s_1 \cdot s_2} P_{S}(s_1) * P_{S'}(s_2)\right)* P_{S''}(s_3)\\
&= \sum_{s=s_1\cdot s_2 \cdot s_3} P_{S}(s_1) * P_{S'}(s_2) * P_{S''}(s_3)\\
P_{S \cdot (S' \cdot S'')}(s) &= \sum_{s=s_1\cdot s_4} P_{S}(s) * \left(\sum_{s_4=s_2\cdot s_3}P_{S'}(s_2) * P_{S''}(s_3)\right)
\end{align*}
\item
($(S \; |_{p_1} \; S') \; |_{p_2} \; S'' \equiv^S S \; |_{p_1 * p_2} \; (S' \; |_{\frac{(1-p_1)*p_2}{1-p_1*p_2}} \; S'')$)

Let $s \in \mathcal{L}(S \; |\; S' \; | \; S'')$. Then
\begin{align*}
P_{(S \; |_{p_1} \; S') \; |_{p_2} \; S''}(s) &= p_2 * P_{S \; |_{p_1} \; S'}(s) + (1-p_2)P_{S''}(s)\\
&= (p_2 * (p_1 * P_S(s) + (1-p_1) * P_{S'}(s))) + (1-p_2)P_{S''}(s)\\
&= (p_2 * (p_1 * P_S(s) + P_{S'}(s)-p_1* P_{S'}(s))) + (1-p_2)P_{S''}(s)\\
&= p_2*p_1 * P_S(s) + p_2*P_{S'}(s)-p_2*p_1* P_{S'}(s) + (1-p_2)P_{S''}(s)\\
&= p_2*p_1 * P_S(s) + (1-p_1) * p_2 * P_{S'}(s)+ (1-p_2)P_{S''}(s)\\
S \; |_{p_1 * p_2} \; (S' \; |_{\frac{(1-p_1)*p_2}{1-p_1*p_2}} \; S'') &= (p_1p_2)  P_S(s) + (1-p_1p_2)  \left(\frac{(1-p_1)p_2}{1-p_1p_2}  P_{S'}(s) + \left(1-\frac{(1-p_1)p_2}{1-p_1p_2}\right)P_{S''}(s)\right)
\end{align*}
\item
($S \; |_p \; T \equiv^S T \; |_{1-p} \; S$)

For all $s \in \mathcal{L}(S \; | \; T)$, we have
$$P_{S \; |_p \; T}(s) = p*P_S(s) + (1-p)*P_T(s) = (1-p) * P_T(s) + (1-(1-p))*P_s(s) = P_{T \; |_{1-p} \; S}(s)$$
\item
($S \cdot (S' \; |_p \; S'' ) \equiv^S (S \cdot S') \; |_p \; (S \cdot S'')$)

For all $s \in \mathcal{L}(S \cdot (S' \sep S''))$ we have
\begin{align*}
P_{S \cdot (S' \; |_p \; S'' )} &= \sum_{s_1 \cdot s_2 = s}P_S(s_1) * P_{S' \; |_p \; S''}(s)\\
&= \sum_{s_1 \cdot s_2 = s}P_S(s_1) * (p * P_{S'}(s_2) + (1-p) * P_{S''}(s_2))\\
&= \sum_{s_1 \cdot s_2 = s}p * P_S(s_1) * P_{S'}(s_2) + \sum_{s_1 \cdot s_2 = s}(1-p) * P_S(s_1) * P_{S''}(s_2)\\
P_{(S \cdot S') \; |_p \; (S \cdot S'')}(s) &= p*P_{S \cdot S'}(s) + (1-p)*P_{S \cdot S''}(s)
\end{align*}
\item
$((S' \; |_p \; S'') \cdot S \equiv^S (S' \cdot S) \; |_p \; (S'' \cdot S))$

For all $s \in \mathcal{L}((S' \; | \; S'') \cdot S)$,  we have
\begin{align*}
P_{(S' \; |_p \; S'') \cdot S}(s) &= \sum_{s_1 \cdot s_2 = s}P_{S' \; |_p \; S''}(s_1) * P_S(s_2)\\
&= \sum_{s_1 \cdot s_2 = s}(p * P_{S'}(s_1) + (1-p) * P_{S''}(s_1))* P_S(s_2)\\
&= \sum_{s_1 \cdot s_2=s}p*P_{S'}(s_1) * P_S(s_2) + \sum_{s_1 \cdot s_2 = s}(1-p) * P_{S''}(s_1) * P_S(s_2)\\
P_{(S' \cdot S) \; |_p \; (S'' \cdot S)}&= p * P_{S' \cdot S}(s) + (1-p) * P_{S'' \cdot S}(s)
\end{align*}
\item
($\varepsilon \cdot S \equiv^S S$)

For all $s \in \mathcal{L}(S)$, we have
$$P_{\epsilon \cdot S}(s) = \sum_{s_1 \cdot s_2 = s}P_{\epsilon}(s_1) * P_S(s_2) = P_{\epsilon}(\epsilon) * P_S(s_2) = P_S(s)$$
since $s_2 = s$ in the computation above.
\item
($S \cdot \epsilon \equiv^S S$)

Similar to the previous case.
\item
($S^{*p} \equiv^S \epsilon \; |_{(1-p)} \; (S^{*p} \cdot S)$)

Let $s \in \mathcal{L}(S^*)$, in which case we have
\begin{align*}
P_{\epsilon \; |_{1-p} \; (S^{*p} \cdot S)}(s) &= (1-p) * P_{\epsilon}(s) + p * P_{S^{*p} \cdot S}(s)\\
&= (1-p) * P_{\epsilon}(s) + p * \sum_{a \cdot b = s}\left(\sum_n \sum_{s_1 \ldots \cdot s_n=a} p^n (1-p) \prod_{i=1}^n P_S(s_i)\right) * P_S(b)\\
&= (1-p) * P_{\epsilon}(s) + \sum_{a \cdot b = s}\left(\sum_n \sum_{s_1 \cdot \ldots \cdot s_n=s} p^{n+1} (1-p) \prod_{i=1}^n P_S(s_i) * P_S(b)\right)\\
&= (1-p) * P_{\epsilon}(s) + \sum_n \sum_{s_1 \cdot \ldots \cdot s_{n+1}=s} p^{n+1} (1-p) \prod_{i=1}^{n+1} P_S(s_i)\\
&= (1-p) * P_{\epsilon}(s) + \left(\sum_n \sum_{s_1 \cdot \ldots \cdot s_n = s} p^n * (1-p) * \prod_{i=1}^n P_S(s_i)\right) - (1-p) * (\mathbbm{1}_{s = \varepsilon})
\end{align*}
Since
$$
P_{S^{*p}}(s) = \sum_n \sum_{s_1 \cdot \ldots \cdot s_n = s} p^n * (1-p) * \prod_{i=1}^n P_S(s_i)
$$
then if $s = \epsilon$, the terms $(1-p) * P_{\epsilon}(s)$ and $(1-p) * \mathbbm{1}_{s=\varepsilon}$ cancel each other out. Otherwise, if $s \neq \epsilon$, then $(1-p) * P_{\epsilon}(s) = 0 = (1-p) * (\mathbbm{1}_{s = \varepsilon})$ from which the result follows.
\item
($S^{*p} \equiv^S \epsilon \; |_{(1-p)} \; (S \cdot S^{*p})$)

Similar to the previous case.
\end{enumerate}
\end{proof}
\begin{mylemma}[Equivalence of \ConcatSequence{} and \Concat{}]
  If $\LanguageOf{\Regex}=\LanguageOf{\Sequence}$,
  and $\LanguageOf{\RegexAlt}=\LanguageOf{\SequenceAlt}$,
  then $\LanguageOf{\RegexConcat{\Regex}{\RegexAlt}}=\LanguageOf{\ConcatSequenceOf{\Sequence}{\SequenceAlt}}$.
\end{mylemma}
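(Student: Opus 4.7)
The plan is to prove this equality of languages by set equality, arguing both $\subseteq$ and $\supseteq$ by unfolding the language semantics of sequences and regular concatenation given earlier in the paper. Concretely, I would begin by fixing notation: write $\Sequence = \SequenceOf{\String_0\SeqSep\Atom_1\SeqSep\ldots\SeqSep\Atom_n\SeqSep\String_n}$ and $\SequenceAlt = \SequenceOf{\StringAlt_0\SeqSep\AtomAlt_1\SeqSep\ldots\SeqSep\AtomAlt_m\SeqSep\StringAlt_m}$, so that by the definition of \ConcatSequence the right-hand side $\ConcatSequenceOf{\Sequence}{\SequenceAlt}$ equals $\SequenceOf{\String_0\SeqSep\Atom_1\SeqSep\ldots\SeqSep\Atom_n\SeqSep(\String_n\Concat\StringAlt_0)\SeqSep\AtomAlt_1\SeqSep\ldots\SeqSep\AtomAlt_m\SeqSep\StringAlt_m}$. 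Its language, by the sequence semantics clause, is exactly the set of strings of the form $\String_0 \Concat u_1 \Concat \cdots \Concat u_n \Concat \String_n \Concat \StringAlt_0 \Concat v_1 \Concat \cdots \Concat v_m \Concat \StringAlt_m$ where each $u_i \in \LanguageOf{\Atom_i}$ and each $v_j \in \LanguageOf{\AtomAlt_j}$.

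Next I would unfold the left-hand side: by the standard language of regular concatenation, $\LanguageOf{\RegexConcat{\Regex}{\RegexAlt}} = \{\,s \Concat t \mid s \in \LanguageOf{\Regex},\ t \in \LanguageOf{\RegexAlt}\,\}$. Applying the two hypotheses $\LanguageOf{\Regex} = \LanguageOf{\Sequence}$ and $\LanguageOf{\RegexAlt} = \LanguageOf{\SequenceAlt}$, and then applying the sequence-language clause to each, yields the same set of concatenated strings. For $\supseteq$, given a string of the explicit form above, pick $s = \String_0 \Concat u_1 \Concat \cdots \Concat u_n \Concat \String_n$ and $t = \StringAlt_0 \Concat v_1 \Concat \cdots \Concat v_m \Concat \StringAlt_m$; these belong to $\LanguageOf{\Sequence}$ and $\LanguageOf{\SequenceAlt}$ respectively, so their concatenation is in $\LanguageOf{\RegexConcat{\Regex}{\RegexAlt}}$.

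The only non-trivial direction is $\subseteq$, where we start from an arbitrary $w = s \Concat t$ with $s \in \LanguageOf{\Sequence}$ and $t \in \LanguageOf{\SequenceAlt}$, and must exhibit $w$ in the explicit flattened form so that it lies in $\LanguageOf{\ConcatSequenceOf{\Sequence}{\SequenceAlt}}$. Unfolding the sequence decompositions of $s$ and $t$ gives witnesses $u_i$ and $v_j$, and the concatenation $s \Concat t$ literally places $\String_n$ immediately next to $\StringAlt_0$, which is exactly the ``glued'' middle string in $\ConcatSequenceOf{\Sequence}{\SequenceAlt}$. The main (small) obstacle is just bookkeeping on the string segmentation when $n = 0$ or $m = 0$: the edge cases where one or both sequences have no atoms reduce to $s = \String_0$ or $t = \StringAlt_m$, which still fit the pattern since the inner concatenation $\String_n \Concat \StringAlt_0$ collapses gracefully. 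Thus both inclusions hold and the languages are equal.
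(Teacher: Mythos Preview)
Your proposal is correct and takes essentially the same approach as the paper: fix the explicit forms of $\Sequence$ and $\SequenceAlt$, unfold the definition of $\ConcatSequence$, apply the sequence-language clause, and match against $\LanguageOf{\RegexConcat{\Regex}{\RegexAlt}}$ via the hypotheses. The paper simply presents this as a single chain of set equalities rather than two explicit inclusions, and does not bother to call out the $n=0$ or $m=0$ edge cases you mention.
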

\begin{proof}
  Let $\Sequence=\SequenceOf{\String_0\SeqSep\Atom_1\SeqSep\ldots
    \SeqSep\Atom_n\SeqSep\String_n}$, and
  let $\SequenceAlt=[\StringAlt_0\SeqSep\AtomAlt_1\SeqSep\ldots
  \SeqSep\AtomAlt_m\SeqSep\StringAlt_m]$. Then
  \begin{align*}  
    \LanguageOf{\ConcatSequenceOf{\Sequence}{\SequenceAlt}} & = 
                                                              \LanguageOf{\SequenceOf{\String_0\SeqSep\Atom_1\SeqSep\ldots
                                                              \SeqSep\Atom_n\SeqSep\String_n\Concat\StringAlt_0\SeqSep{}
                                                              \AtomAlt_1\SeqSep\ldots\SeqSep\AtomAlt_m\SeqSep\StringAlt_m}} \\
                                                            & = 
                                                              \{\String_0\Concat\String_1'\Concat\ldots\Concat\String_n'\Concat\String_n
                                                              \Concat\StringAlt_0\Concat\StringAlt_1'\Concat\ldots
                                                              \Concat\StringAlt_m'\Concat\StringAlt_m\SuchThat{} \String_i'\in\LanguageOf{\Atom_i} \BooleanAnd{}
                                                              \StringAlt_i'\in\LanguageOf{\AtomAlt_i}\}\\
                                                            & = 
                                                              \{\String\Concat\StringAlt{} \SuchThat{} \String\in\LanguageOf{\Sequence}
                                                              \BooleanAnd{} \StringAlt\in\LanguageOf{\SequenceAlt}\}\\
                                                            & =
                                                              \{\String\Concat\StringAlt{} \SuchThat{} \String\in\LanguageOf{\Regex}
                                                              \BooleanAnd{} \StringAlt\in\LanguageOf{\RegexAlt}\}\\
                                                            & =
                                                              \LanguageOf{\RegexConcat{\Regex}{\RegexAlt}}
\end{align*}
\end{proof}

\begin{mylemma}[Equivalence of \ConcatDNF{} and \Concat{}]
  \label{lem:cdnfeq}
  If $\LanguageOf{\Regex}=\LanguageOf{\DNFRegex}$,
  and $\LanguageOf{\RegexAlt}=\LanguageOf{\DNFRegexAlt}$,
  then $\LanguageOf{\RegexConcat{\Regex}{\RegexAlt}}=
  \LanguageOf{\ConcatDNFOf{\DNFRegex}{\DNFRegexAlt}}$.
\end{mylemma}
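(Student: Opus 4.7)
The plan is to decompose both sides of the desired equality into unions indexed over pairs of constituent sequences, and then appeal to the previously proved equivalence of $\ConcatSequence$ and $\Concat$ pointwise.

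First I would write $\DNFRegex=\DNFOf{(\Sequence_1,\Probability_1)\DNFSep\ldots\DNFSep(\Sequence_n,\Probability_n)}$ and $\DNFRegexAlt=\DNFOf{(\SequenceAlt_1,\ProbabilityAlt_1)\DNFSep\ldots\DNFSep(\SequenceAlt_m,\ProbabilityAlt_m)}$ and unfold the language definitions from \S\ref{subsec:sdnfre} to obtain $\LanguageOf{\DNFRegex}=\bigcup_{i=1}^n \LanguageOf{\Sequence_i}$ and $\LanguageOf{\DNFRegexAlt}=\bigcup_{j=1}^m \LanguageOf{\SequenceAlt_j}$. Combining with the hypotheses $\LanguageOf{\Regex}=\LanguageOf{\DNFRegex}$ and $\LanguageOf{\RegexAlt}=\LanguageOf{\DNFRegexAlt}$, I compute
\[
  \LanguageOf{\RegexConcat{\Regex}{\RegexAlt}}
  \;=\; \{\String\Concat\StringAlt \SuchThat \String\in\LanguageOf{\Regex},\; \StringAlt\in\LanguageOf{\RegexAlt}\}
  \;=\; \bigcup_{i,j}\{\String\Concat\StringAlt \SuchThat \String\in\LanguageOf{\Sequence_i},\; \StringAlt\in\LanguageOf{\SequenceAlt_j}\}
\]
by distributing the pairwise concatenation over unions.

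Next I would unfold the right-hand side. By the definition of $\ConcatDNF$ in Figure~\ref{fig:dnf-regex-functions}, $\ConcatDNFOf{\DNFRegex}{\DNFRegexAlt}$ is precisely the DNF expression whose sequence components are $\ConcatSequenceOf{\Sequence_i}{\SequenceAlt_j}$ for $i\in[1,n]$ and $j\in[1,m]$. So $\LanguageOf{\ConcatDNFOf{\DNFRegex}{\DNFRegexAlt}}=\bigcup_{i,j}\LanguageOf{\ConcatSequenceOf{\Sequence_i}{\SequenceAlt_j}}$. Applying the preceding lemma (Equivalence of $\ConcatSequence$ and $\Concat$) to each pair $\Sequence_i,\SequenceAlt_j$ with the trivial witnesses $\Regex=\Sequence_i$ and $\RegexAlt=\SequenceAlt_j$ yields $\LanguageOf{\ConcatSequenceOf{\Sequence_i}{\SequenceAlt_j}}=\{\String\Concat\StringAlt \SuchThat \String\in\LanguageOf{\Sequence_i},\; \StringAlt\in\LanguageOf{\SequenceAlt_j}\}$. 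Taking the union over all $i,j$ matches the expression derived for $\LanguageOf{\RegexConcat{\Regex}{\RegexAlt}}$, closing the argument.

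There is no real obstacle here — the proof is essentially pure bookkeeping, distributing set-theoretic concatenation over finite unions and invoking the prior lemma pointwise. The only point requiring a sentence of care is the distributivity step (pairwise concatenation distributes over union in both arguments), but this is a standard set-theoretic identity that needs only to be stated, not proved in detail.
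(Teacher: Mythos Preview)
Your proposal is correct and follows essentially the same approach as the paper: both arguments unfold the definition of $\ConcatDNF$ to obtain the union over all pairs $\ConcatSequenceOf{\Sequence_i}{\SequenceAlt_j}$, invoke the preceding sequence-level lemma pointwise, and use distributivity of concatenation over finite union together with the hypotheses $\LanguageOf{\Regex}=\LanguageOf{\DNFRegex}$ and $\LanguageOf{\RegexAlt}=\LanguageOf{\DNFRegexAlt}$. The only cosmetic difference is that the paper chains equalities starting from $\LanguageOf{\ConcatDNFOf{\DNFRegex}{\DNFRegexAlt}}$ down to $\LanguageOf{\RegexConcat{\Regex}{\RegexAlt}}$, whereas you compute both sides separately and match them at the common union.
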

\begin{proof}
  Let $\DNFRegex=\DNFOf{\Sequence_0\DNFSep\ldots\DNFSep\Sequence_n}$, and
  let $\DNFRegexAlt=\DNFOf{\SequenceAlt_0\DNFSep\ldots\DNFSep\SequenceAlt_m}$
  \begin{align*}
    \LanguageOf{\ConcatDNFOf{\DNFRegex}{\DNFRegexAlt}} & = 
                                                         \LanguageOf{\DNFOf{\ConcatSequenceOf{\Sequence_i}{\SequenceAlt_j}
                                                         \text{ for $i\in\RangeIncInc{1}{n}$, $j\in\RangeIncInc{1}{m}$}}} \\
                                                       & = 
                                                         \{\String\SuchThat \String\in\ConcatSequenceOf{\Sequence_i}{\SequenceAlt_j} \text{ where $i\in\RangeIncInc{1}{n}$, $j\in\RangeIncInc{1}{m}$}\}\\
                                                       & = 
                                                         \{\String\Concat\StringAlt{} \SuchThat{} \String\in\LanguageOf{\Sequence_i}
                                                         \BooleanAnd{} \StringAlt\in\LanguageOf{\SequenceAlt_j}\} \text{ where $i\in\RangeIncInc{1}{n}$, $j\in\RangeIncInc{1}{m}$}\}\\
                                                       & =
                                                         \{\String\Concat\StringAlt{} \SuchThat{} \String\in\LanguageOf{\DNFRegex}
                                                         \BooleanAnd{} \StringAlt\in\LanguageOf{\DNFRegexAlt}\}\\
                                                       & =
                                                         \{\String\Concat\StringAlt{} \SuchThat{} \String\in\LanguageOf{\Regex}
                                                         \BooleanAnd{} \StringAlt\in\LanguageOf{\RegexAlt}\}\\
                                                       & =
                                                         \LanguageOf{\RegexConcat{\Regex}{\RegexAlt}}
\end{align*}
\end{proof}

\begin{mylemma}[Equivalence of $\Atom$ and $\AtomToDNFOf{\Atom}$]
  \label{lem:atomtodnfeq}
  $\LanguageOf{\Atom} = \LanguageOf{\AtomToDNFOf{\Atom}}$
\end{mylemma}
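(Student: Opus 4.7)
The plan is to simply unfold the definition of $\AtomToDNFOf{\Atom}$ and then apply the semantic clauses for stochastic DNF regular expressions, stochastic sequences, and string concatenation in turn. By definition, $\AtomToDNFOf{\Atom} = \DNFOf{(\SequenceOf{\EmptyString \SeqSep \Atom \SeqSep \EmptyString},1)}$, so the equality reduces to showing
\[
\LanguageOf{\Atom} \;=\; \LanguageOf{\DNFOf{(\SequenceOf{\EmptyString \SeqSep \Atom \SeqSep \EmptyString},1)}}.
\]

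First, I would apply the clause for DNF regular expressions, which gives
$\LanguageOf{\DNFOf{(\SequenceOf{\EmptyString \SeqSep \Atom \SeqSep \EmptyString},1)}} = \{\String \SuchThat \String \in \LanguageOf{\SequenceOf{\EmptyString \SeqSep \Atom \SeqSep \EmptyString}}\}$, which is literally $\LanguageOf{\SequenceOf{\EmptyString \SeqSep \Atom \SeqSep \EmptyString}}$ (the single-disjunct case trivially collapses). Next, I would apply the clause for sequences to obtain
$\LanguageOf{\SequenceOf{\EmptyString \SeqSep \Atom \SeqSep \EmptyString}} = \{\EmptyString\Concat\StringAlt\Concat\EmptyString \SuchThat \StringAlt\in\LanguageOf{\Atom}\}$.

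Finally, since $\EmptyString$ is the empty string and concatenation with $\EmptyString$ on either side is the identity, $\EmptyString\Concat\StringAlt\Concat\EmptyString = \StringAlt$, so the right-hand side simplifies to $\{\StringAlt \SuchThat \StringAlt \in \LanguageOf{\Atom}\} = \LanguageOf{\Atom}$. Chaining these equalities together completes the proof. There is no real obstacle here: the lemma is a pure unfolding of definitions, and its purpose in the development is to serve as the base case / helper for the more substantive correctness theorem for $\ToDNFRegex$ (in particular when handling the star case, where $\ToDNFRegexOf{(\PRegexStar{\Regex}{\Probability})} = \AtomToDNFOf{\PRegexStar{(\ToDNFRegexOf{\Regex})}{\Probability}}$).
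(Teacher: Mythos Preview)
Your proof is correct and follows essentially the same approach as the paper: both simply unfold the definition of $\AtomToDNFOf{\Atom}$, apply the semantic clauses for DNF regular expressions and sequences in turn, and then simplify using the fact that concatenation with $\EmptyString$ is the identity. Your additional remark about the lemma's role in the star case of the $\ToDNFRegex$ correctness theorem is accurate as well.
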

\begin{proof}
\begin{align*}
  \LanguageOf{\AtomToDNFOf{\Atom}} &=
  \LanguageOf{\DNFOf{\SequenceOf{\EmptyString \SeqSep \Atom \SeqSep \EmptyString}}}\\
  & =
  \SetOf{\String \SuchThat \String \in
    \LanguageOf{\SequenceOf{\EmptyString \SeqSep \Atom \SeqSep \EmptyString}}} \\
  & =
  \SetOf{\EmptyString\Concat\String\Concat\EmptyString \SuchThat \String \in
    \LanguageOf{\Atom}} \\
   & = \SetOf{\String \SuchThat \String \in
    \LanguageOf{\Atom}} \\
    &= \LanguageOf{\Atom}
    \end{align*}
\end{proof}

\begin{mylemma}[Equivalence of \OrDNF{} and \Or{}]
  \label{lem:odnfeq}
  If $\LanguageOf{\Regex}=\LanguageOf{\DNFRegex}$,
  and $\LanguageOf{\RegexAlt}=\LanguageOf{\DNFRegexAlt}$,
  then $\LanguageOf{\RegexOr{\Regex}{\RegexAlt}}=
  \LanguageOf{\OrDNFOf{\DNFRegex}{\DNFRegexAlt}}$.
\end{mylemma}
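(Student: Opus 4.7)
The plan is to unfold the definition of $\OrDNF$ and directly compute the language of $\OrDNFOf{\DNFRegex}{\DNFRegexAlt}{\Probability}$, then use the hypotheses to match it against $\LanguageOf{\RegexOr{\Regex}{\RegexAlt}}$. This should follow the same template as the proofs of Lemma~\ref{lem:cdnfeq} and Lemma~\ref{lem:atomtodnfeq}, since the statement concerns only the language and ignores the probability annotations that $\OrDNF$ manipulates.

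First I would fix representatives: let $\DNFRegex = \DNFOf{(\Sequence_1,\Probability_1) \DNFSep \ldots \DNFSep (\Sequence_n,\Probability_n)}$ and $\DNFRegexAlt = \DNFOf{(\SequenceAlt_1,\ProbabilityAlt_1) \DNFSep \ldots \DNFSep (\SequenceAlt_m,\ProbabilityAlt_m)}$. By the definition of $\OrDNF$ given in Figure~\ref{fig:dnf-regex-functions}, the expression $\OrDNFOf{\DNFRegex}{\DNFRegexAlt}{\Probability}$ is the DNF regular expression whose underlying list of sequences is the concatenation of the two input lists (only the probability annotations change). Since the language function $\LanguageOf{\cdot}$ on DNF regular expressions ignores the probability annotations and simply unions the languages of the constituent sequences, we get
\begin{align*}
\LanguageOf{\OrDNFOf{\DNFRegex}{\DNFRegexAlt}{\Probability}}
&= \{\String \SuchThat \String \in \LanguageOf{\Sequence_i},\ i \in \RangeIncInc{1}{n}\} \cup \{\String \SuchThat \String \in \LanguageOf{\SequenceAlt_j},\ j \in \RangeIncInc{1}{m}\} \\
&= \LanguageOf{\DNFRegex} \cup \LanguageOf{\DNFRegexAlt}.
\end{align*}

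Then I invoke the two hypotheses $\LanguageOf{\Regex} = \LanguageOf{\DNFRegex}$ and $\LanguageOf{\RegexAlt} = \LanguageOf{\DNFRegexAlt}$ to rewrite the right-hand side as $\LanguageOf{\Regex} \cup \LanguageOf{\RegexAlt}$, which by the standard semantics of the plain regular expression union operator equals $\LanguageOf{\RegexOr{\Regex}{\RegexAlt}}$. This completes the chain of equalities.

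There is no real obstacle here: $\OrDNF$ was defined precisely so that it concatenates sequence lists, and unlike $\ConcatDNF$ there is no cross-product distribution to unpack. The only subtle point worth double-checking is that the probability parameter $\Probability$ plays no role at the language level, which is immediate from the definition of $\LanguageOf{\cdot}$ on stochastic DNF regular expressions given in \S\ref{subsec:sdnfre}.
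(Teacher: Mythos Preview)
Your proposal is correct and follows essentially the same approach as the paper: both fix representatives for $\DNFRegex$ and $\DNFRegexAlt$, unfold $\OrDNF$ to see that it concatenates the sequence lists, observe that the language semantics ignores probability annotations so the result is $\LanguageOf{\DNFRegex}\cup\LanguageOf{\DNFRegexAlt}$, and then apply the hypotheses to reach $\LanguageOf{\RegexOr{\Regex}{\RegexAlt}}$. Your remark that the probability parameter $\Probability$ plays no role at the language level is exactly the one observation the paper leaves implicit.
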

\begin{proof}
  Let $\DNFRegex=\DNFOf{\Sequence_0\DNFSep\ldots\DNFSep\Sequence_n}$, and
  let $\DNFRegexAlt=\DNFOf{\SequenceAlt_0\DNFSep\ldots\DNFSep\SequenceAlt_m}$
  
  \begin{align*}
    \mathcal{L}(DS \oplus DT)& = 
                                                     \LanguageOf{\DNFOf{\Sequence_0\DNFSep\ldots\DNFSep\Sequence_n\DNFSep
                                                     \SequenceAlt_1\DNFSep\ldots\DNFSep\SequenceAlt_m}}\\
                                                   & = 
                                                     \{\String\SuchThat{} \String\in\Sequence_i\vee\String\in\SequenceAlt_j \text{ where $i\in\RangeIncInc{1}{n}$, $j\in\RangeIncInc{1}{m}$}\}\\
                                                   & = 
                                                     \{\String{} \SuchThat{} \String\in\LanguageOf{\DNFRegex}
                                                     \BooleanOr{} \String\in\LanguageOf{\DNFRegexAlt}\}\\
                                                   & =
                                                     \{\String \SuchThat{} \String\in\LanguageOf{\Regex}
                                                     \BooleanOr{} \String\in\LanguageOf{\RegexAlt}\}\\
                                                   & =
                                                     \LanguageOf{\RegexOr{\Regex}{\RegexAlt}}
  \end{align*}
\end{proof}

\begin{theorem}\label{ConversionPreservesSemantics}
  For all regular expressions \Regex{},
  $\LanguageOf{\ToDNFRegexOf{\Regex}}=\LanguageOf{\Regex{}}$.
\end{theorem}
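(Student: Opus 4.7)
The plan is to prove this by straightforward structural induction on the regular expression $\Regex$, leveraging the equivalence lemmas already established for the DNF operators (Lemmas on $\ConcatDNF$, $\OrDNF$, and $\AtomToDNF$). Since $\ToDNFRegex$ is defined compositionally by recursion on the structure of $\Regex$, each inductive case will unfold one level of the $\ToDNFRegex$ definition and then appeal to the corresponding semantic-preservation lemma for that operator.

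First, the base cases. For $\Regex = \String$, we have $\ToDNFRegexOf{\String} = \DNFOf{(\SequenceOf{\String},1)}$, whose language is $\{\String\}$, matching $\LanguageOf{\String}$. For $\Regex = \emptyset$, $\ToDNFRegexOf{\emptyset} = \DNFOf{}$ (the empty DNF), whose language is $\emptyset$, matching $\LanguageOf{\emptyset}$. Both of these follow directly from unfolding the definitions of the language function on stochastic DNF regular expressions given earlier in \S\ref{subsec:sdnfre}.

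Next, the inductive cases. For $\Regex = \RegexConcat{\Regex_1}{\Regex_2}$, the induction hypotheses give $\LanguageOf{\ToDNFRegexOf{\Regex_1}} = \LanguageOf{\Regex_1}$ and $\LanguageOf{\ToDNFRegexOf{\Regex_2}} = \LanguageOf{\Regex_2}$; unfolding the definition yields $\ToDNFRegexOf{\RegexConcat{\Regex_1}{\Regex_2}} = \ToDNFRegexOf{\Regex_1} \ConcatDNF \ToDNFRegexOf{\Regex_2}$, and Lemma~\ref{lem:cdnfeq} closes the case. The disjunction case $\Regex = \PRegexOr{\Regex_1}{\Regex_2}{\Probability}$ is analogous using Lemma~\ref{lem:odnfeq}. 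For the star case $\Regex = \PRegexStar{\Regex_1}{\Probability}$, the definition gives $\ToDNFRegexOf{\PRegexStar{\Regex_1}{\Probability}} = \AtomToDNFOf{\PRegexStar{(\ToDNFRegexOf{\Regex_1})}{\Probability}}$; by Lemma~\ref{lem:atomtodnfeq} this has the same language as the atom $\PRegexStar{(\ToDNFRegexOf{\Regex_1})}{\Probability}$, whose language (by the $\LanguageOf{\PRegexStar{\DNFRegex}{\Probability}}$ clause) consists of all concatenations of strings drawn from $\LanguageOf{\ToDNFRegexOf{\Regex_1}}$. Applying the IH to rewrite $\LanguageOf{\ToDNFRegexOf{\Regex_1}}$ as $\LanguageOf{\Regex_1}$ shows this equals $\LanguageOf{\PRegexStar{\Regex_1}{\Probability}}$.

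There is no real obstacle here; the theorem is essentially a bookkeeping exercise that factors cleanly through the three equivalence lemmas. The only mildly subtle point is the star case, where one must notice that $\AtomToDNF$ wraps an atom whose inner DNF regex is itself the recursive image under $\ToDNFRegex$, so both the outer Lemma~\ref{lem:atomtodnfeq} and the inductive hypothesis are needed together. Everything else is immediate from the definitions of the language of stochastic DNF regular expressions.
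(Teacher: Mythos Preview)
Your proposal is correct and follows essentially the same approach as the paper: structural induction on $\Regex$, with the concatenation and disjunction cases handled via Lemmas~\ref{lem:cdnfeq} and~\ref{lem:odnfeq} respectively, and the base cases by direct unfolding. The only cosmetic difference is in the star case, where you invoke Lemma~\ref{lem:atomtodnfeq} explicitly while the paper instead unfolds the language definitions of $\DNFOf{\cdot}$ and $\SequenceOf{\cdot}$ inline; these amount to the same computation.
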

\begin{proof}
  By structural induction.

  Let $\Regex=\String$.
  $\LanguageOf{\ToDNFRegex(\String)}=\LanguageOf{\DNFOf{\SequenceOf{\String}}}=
  \{\String\}=\LanguageOf{\String}$

  Let $\Regex=\emptyset$.
  $\LanguageOf{\ToDNFRegex(\emptyset)}=\LanguageOf{\DNFOf{}} =
  \{\} = \LanguageOf{\emptyset}$.

  Let $\Regex=\StarOf{\Regex'}$.
  By induction assumption, $\LanguageOf{\ToDNFRegex(\Regex')}=
  \LanguageOf{\Regex'}$.
  \begin{align*}
    \LanguageOf{\ToDNFRegex(\StarOf{\DNFRegex'})} & =
                                                    \LanguageOf{\DNFOf{\SequenceOf{\StarOf{\ToDNFRegex(\Regex')}}}}\\
                                                  & =
                                                    \{\String\SuchThat\String\in
                                                    \LanguageOf{\SequenceOf{\StarOf{\ToDNFRegex(\Regex')}}}\}\\
                                                  & = 
                                                    \{\String\SuchThat{} \String\in\LanguageOf{\StarOf{\ToDNFRegex(\Regex')}}\}\\
                                                  & =
                                                    \{\String_1\Concat\ldots\Concat\String_n\SuchThat{}
                                                    n\in\Nats \BooleanAnd\String_i\in\LanguageOf{\ToDNFRegex(\Regex')}\}\\
                                                  & =
                                                    \{\String_1\Concat\ldots\Concat\String_n\SuchThat{}
                                                    n\in\Nats\BooleanAnd\String_i\in\LanguageOf{\Regex'}\}\\
                                                  & = \LanguageOf{\StarOf{\Regex'}}
  \end{align*}

  Let $\Regex=\RegexConcat{\Regex_1}{\Regex_2}$.
  By induction assumption,
  $\LanguageOf{\ToDNFRegex(\Regex_1)}=\LanguageOf{\Regex_1}$, and
  $\LanguageOf{\ToDNFRegex(\Regex_2)}=\LanguageOf{\Regex_2}$.
  $\ToDNFRegex(\RegexConcat{\Regex_1}{\Regex_2})=
  \ConcatDNFOf{\ToDNFRegex(\Regex_1)}{\ToDNFRegex(\Regex_2)}$.
  By Lemma~\ref{lem:cdnfeq},
  $\RegexConcat{\Regex_1}{\Regex_2}=
  \ConcatDNFOf{\ToDNFRegex(\Regex_1)}{\ToDNFRegex(\Regex_2)}$.

  Let $\Regex=\RegexOr{\Regex_1}{\Regex_2}$.
  By induction assumption,
  $\LanguageOf{\ToDNFRegex(\Regex_1)}=\LanguageOf{\Regex_1}$, and
  $\LanguageOf{\ToDNFRegex(\Regex_2)}=\LanguageOf{\Regex_2}$.

  $\Downarrow(S \; | \; T) = (\Downarrow S) \oplus (\Downarrow T)$. By Lemma~\ref{lem:odnfeq},
  $\mathcal{L}(S \; | \; T) = \mathcal{L}((\Downarrow S) \oplus (\Downarrow T))$.
  
\end{proof}
\begin{theorem*}
  $\ProbabilityOf{\Regex}{\String} = \ProbabilityOf{\ToDNFRegex
    \Regex}{\String}$ and $\LanguageOf{\Regex} =
  \LanguageOf{\ToDNFRegexOf{\Regex}}$
\end{theorem*}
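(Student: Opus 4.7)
The plan is to prove the two conjuncts separately. The language-equivalence part, $\LanguageOf{\Regex} = \LanguageOf{\ToDNFRegexOf{\Regex}}$, has already been established as Theorem \ref{ConversionPreservesSemantics}, so the remaining work is the probability-distribution equality $\ProbabilityOf{\Regex}{\String} = \ProbabilityOf{\ToDNFRegexOf{\Regex}}{\String}$, which I would prove by structural induction on $\Regex$, mirroring the language-equivalence proof. To keep the induction clean, I would first establish three auxiliary lemmas that are probability-distribution analogs of Lemmas \ref{lem:cdnfeq}, \ref{lem:odnfeq}, and \ref{lem:atomtodnfeq}:
\begin{enumerate}
\item If $\ProbabilityDistOf{\Regex_1} = \ProbabilityDistOf{\DNFRegex_1}$ and $\ProbabilityDistOf{\Regex_2} = \ProbabilityDistOf{\DNFRegex_2}$, then $\ProbabilityDistOf{\RegexConcat{\Regex_1}{\Regex_2}} = \ProbabilityDistOf{\ConcatDNFOf{\DNFRegex_1}{\DNFRegex_2}}$.
\item If $\ProbabilityDistOf{\Regex_1} = \ProbabilityDistOf{\DNFRegex_1}$ and $\ProbabilityDistOf{\Regex_2} = \ProbabilityDistOf{\DNFRegex_2}$, then $\ProbabilityDistOf{\PRegexOr{\Regex_1}{\Regex_2}{\Probability}} = \ProbabilityDistOf{\OrDNFOf{\DNFRegex_1}{\DNFRegex_2}{\Probability}}$.
\item $\ProbabilityDistOf{\Atom} = \ProbabilityDistOf{\AtomToDNFOf{\Atom}}$.
\end{enumerate}

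Each of these lemmas is proved by unfolding the semantic definitions of both SRE and SDNF RE probabilities and matching the resulting sums term by term. For (1), the key observation is that $\ConcatDNFOf{\DNFOf{(\Sequence_i,\Probability_i)_i}}{\DNFOf{(\SequenceAlt_j,\ProbabilityAlt_j)_j}}$ produces clauses $(\ConcatSequenceOf{\Sequence_i}{\SequenceAlt_j}, \Probability_i\ProbabilityAlt_j)$, and the product $\Probability_i\ProbabilityAlt_j$ on clause probabilities together with the observation that $\ProbabilityOf{\ConcatSequenceOf{\Sequence_i}{\SequenceAlt_j}}{s}$ sums over splits $s = s_1 s_2$ with $s_1 \in \Sequence_i$ and $s_2 \in \SequenceAlt_j$ exactly reproduces the convolution $\Sigma_{s = s_1 s_2} \ProbabilityOf{\Regex_1}{s_1}\ProbabilityOf{\Regex_2}{s_2}$. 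For (2), the probability rescaling by $\Probability$ and $(1-\Probability)$ on the left and right halves of the clause list in $\OrDNFOf{\cdot}{\cdot}{\Probability}$ matches the $\Probability \cdot P_{\Regex_1}(s) + (1-\Probability) \cdot P_{\Regex_2}(s)$ formula. For (3), applied at $\Atom = \PRegexStar{\DNFRegex}{\Probability}$, we must show $\ProbabilityOf{\DNFOf{(\SequenceOf{\EmptyString \SeqSep \Atom \SeqSep \EmptyString}, 1)}}{s} = \ProbabilityOf{\PRegexStar{\DNFRegex}{\Probability}}{s}$, which collapses immediately since the outer clause has probability $1$ and the sequence wrapping adds only empty strings around the atom.

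The main induction then proceeds case by case on $\Regex$. The base cases ($\String$ and $\emptyset$) are immediate from unfolding the SDNF probability formulas. The concatenation and disjunction cases follow directly from the inductive hypothesis and Lemmas (1) and (2). The star case is where I expect the main obstacle: we must show $\ProbabilityDistOf{\PRegexStar{\Regex}{\Probability}} = \ProbabilityDistOf{\AtomToDNFOf{\PRegexStar{(\ToDNFRegexOf{\Regex})}{\Probability}}}$. Unfolding with Lemma (3) reduces this to $\ProbabilityDistOf{\PRegexStar{\Regex}{\Probability}} = \ProbabilityDistOf{\PRegexStar{(\ToDNFRegexOf{\Regex})}{\Probability}}$, which I then expand using the definitions and substitute the inductive hypothesis $\ProbabilityOf{\Regex}{s_i} = \ProbabilityOf{\ToDNFRegexOf{\Regex}}{s_i}$ termwise inside the double sum $\Sigma_n \Sigma_{s = s_1 \ldots s_n} \Probability^n (1-\Probability) \Pi_i \ProbabilityOf{\Regex}{s_i}$.

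An alternative, higher-level route would be to recognize that the conversion $\ToDNFRegex$ essentially corresponds to iterated applications of the distributivity, associativity, and concat-identity star-semiring rules of Figure \ref{fig:ssr-sre}, which by Theorem \ref{thm:semantics-correctness} preserve probability distributions. The difficulty with this tack is that SDNF REs inhabit a different syntactic category than SREs, so one would still need a bridging argument relating the SDNF probability formulas to the SRE probability formulas---effectively the same three lemmas above---so I would pursue the direct induction instead. Ambiguity plays no role in this proof because the probability semantics sums over all parses rather than selecting a unique one.
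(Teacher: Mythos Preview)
Your proposal is correct and follows essentially the same approach as the paper: both cite Theorem~\ref{ConversionPreservesSemantics} for the language half and prove the probability half by structural induction on $\Regex$, with the same termwise substitution in the star case. The only difference is organizational---you factor out the three auxiliary lemmas for $\ConcatDNF$, $\OrDNF_\Probability$, and $\AtomToDNF$, whereas the paper inlines each of these computations directly into the corresponding inductive case (in particular deriving the key identity $\ProbabilityOf{\ConcatSequenceOf{\Sequence}{\SequenceAlt}}{s} = \sum_{a\cdot b = s} \ProbabilityOf{\Sequence}{a}\ProbabilityOf{\SequenceAlt}{b}$ inside the concatenation case rather than as a standalone lemma).
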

\begin{proof}
By \cref{ConversionPreservesSemantics}, $\LanguageOf{\Regex} =
  \LanguageOf{\ToDNFRegexOf{\Regex}}$, so we will prove the result $\ProbabilityOf{\Regex}{\String} = \ProbabilityOf{\ToDNFRegex
    \Regex}{\String}$.
\begin{enumerate}
\item
$(S = s)$

By definition, $\Downarrow s = \langle ([s],1) \rangle$ and $P_{\langle ([s], 1) \rangle}(s') = 
\begin{cases}
1 & \text{if } s = s'\\
0 & \text{otherwise}
\end{cases}$ hence $P_{\langle ([s],1) \rangle}(s') = P_s(s')$.
\item
$(S = \varnothing)$

By definition, $\Downarrow \varnothing = \langle \rangle$ and $P_{\langle \rangle}(s) = 0 = P_{\varnothing}(s)$
\item
$(S = S^{*p})$

By definition, $\Downarrow (S^{*p}) = \langle [\epsilon \cdot (\Downarrow S)^{*p} \cdot \epsilon] \rangle $. By the induction hypothesis, $P_{(\Downarrow S)} = P_{S}$, hence
\begin{align*}
P_{[\epsilon \cdot (\Downarrow S)^{*p}] \cdot \epsilon}(s) &= P_{(\Downarrow S)^{*p}}(s)\\
&= \sum_n \sum_{s_1 \cdot \ldots \cdot s_n=s} p^n * (1-p) * \prod_{i=1}^n P_{(\Downarrow S)}(s_i)\\
&= \sum_n \sum_{s_1 \cdot \ldots \cdot s_n=s} p^n * (1-p) * \prod_{i=1}^n P_{S}(s_i)\\
&= P_{S^{*p}}(s)
\end{align*}
\item
$(S = S_1 \cdot S_2)$

By definition, $\Downarrow (S_1 \cdot S_2) = (\Downarrow S_1) \odot (\Downarrow S_2)$. By the induction hypothesis, $P_{S_1} = P_{(\Downarrow S_1)}$, and $P_{S_2} = P_{(\Downarrow S_2)}$. Assume that $\Downarrow S_1 = \langle (SQ_1,p_1) \; | \; \ldots \; | \; (SQ_m, p_n)\rangle$ and $\Downarrow S_2 = \langle (TQ_1,q_1) \; | \; \ldots \; | \; (TQ_n, q_m) \rangle$. Then
\begin{align*}
P_{S_1}(s) &= \sum_{i=1}^n p_i P_{SQ_i}(s)\\
P_{S_1}(s) &= \sum_{j=1}^m q_i P_{TQ_j}(s)\\
\end{align*}
hence
\begin{align*}
P_{S_1 \cdot S_2}(s) &= \sum_{s_1 \cdot s_2 = s} P_{S_1}(s_1)* P_{S_2}(s_2)\\
&= \sum_{s_1 \cdot s_2 = s} \left(\sum_{i=1}^n p_i P_{SQ_i}(s_1)\right)* \left(\sum_{j=1}^m q_i P_{TQ_j}(s_2)\right)\\
&= \sum_{s_1 \cdot s_2 = s} \sum_{i=1}^n \sum_{j=1}^m (p_i * q_j) * (P_{SQ_i}(s_1) *  P_{TQ_j}(s_2))\\
&= \sum_{i=1}^n \sum_{j=1}^m (p_i * q_j)\sum_{s_1 \cdot s_2 = s} (P_{SQ_i}(s_1) *  P_{TQ_j}(s_2))
\end{align*}
By definition,
$\ConcatDNFOf{\DNFOf{(\Sequence_1,\Probability_1)\DNFSep\ldots\DNFSep(\Sequence_n,\Probability_n)}}{\DNFOf{(\SequenceAlt_1,\ProbabilityAlt_1)\DNFSep\ldots\DNFSep(\SequenceAlt_m,\ProbabilityAlt_m)}}=$\\
      $\DNFLeft (\ConcatSequenceOf{\Sequence_1}{\SequenceAlt_1},\Probability_1*\ProbabilityAlt_1)\DNFSep \ldots
      \DNFSep
      (\ConcatSequenceOf{\Sequence_1}{\SequenceAlt_m},\Probability_1*\ProbabilityAlt_m)\DNFSep
      \ldots$\\
      $\DNFSep
      (\ConcatSequenceOf{\Sequence_n}{\SequenceAlt_1},\Probability_n*\ProbabilityAlt_1)\DNFSep
      \ldots \DNFSep
      (\ConcatSequenceOf{\Sequence_n}{\SequenceAlt_m},\Probability_n * \ProbabilityAlt_m) \DNFRight$
      
where
$$\ConcatSequenceOf{[\String_0\SeqSep\Atom_1\SeqSep\ldots\SeqSep\Atom_n\SeqSep\String_n]}{[\StringAlt_0\SeqSep\AtomAlt_1\SeqSep\ldots\SeqSep\AtomAlt_m\SeqSep\StringAlt_m]}=
  [\String_0\SeqSep\Atom_1\SeqSep\ldots\SeqSep\Atom_n\SeqSep\String_n\Concat\StringAlt_0\SeqSep\AtomAlt_1\SeqSep\ldots\SeqSep\AtomAlt_m\SeqSep\StringAlt_m]$$
Hence
$$P_{\langle (SQ_1, p_1) \; | \; \ldots \; | \;(SQ_n,p_n)\rangle \odot \langle (TQ_1, q_1) \; | \; \ldots \; | \;(TQ_m,q_m)\rangle}(s) = \sum_{i=1}^n \sum_{j=1}^m(p_i * q_j) * P_{SQ_i \odot_{SQ} TQ_j}(s)$$
Since $P_{[s_0 \cdot A_1 \cdot \ldots \cdot A_n \cdot s_n]}(s) = \sum_{s_0 \cdot s'_1 \cdot \ldots \cdot s'_n \cdot s_n = s} \prod_{i=1}^n P_{A_i}(s'i)$, then
\begin{align*}
P_{[s_0 \cdot A_1 \cdot \ldots \cdot A_n \cdot s_n] \odot_{SQ} [t_0 \cdot B_1 \cdot \ldots \cdot B_m \cdot t_m]}(s) &= \sum_{s_0 \cdot s'_1 \cdot \ldots \cdot s'_n \cdot s_n \cdot t_0 \cdot t'_1 \cdot \ldots t'_m \cdot t_m =s } \prod_{i=1}^n P_{A_i}(s'_i) \prod_{j=1}^m P_{B_j}(t'_j)\\
&= \sum_{a \cdot b = s} \sum_{s_0 \cdot s'_1 \cdot \ldots \cdot s'_n \cdot s_n = a} \sum_{t_0 \cdot t'_1 \cdot \ldots \cdot t'_m \cdot t_m = b} \prod_{i=1}^n P_{A_i}(s'_i) \prod_{j=1}^m P_{B_j}(t'_j)\\
& \sum_{a \cdot b = s} \left(\sum_{s_0 \cdot s'_1 \cdot \ldots \cdot s'_n \cdot s_n = a} \prod_{i=1}^n P_{A_i}(s'_i)\right) \left(\sum_{t_0 \cdot t'_1 \cdot \ldots \cdot t'_m \cdot t_m = b} \prod_{j=1}^m P_{B_j}(t'_j)\right)\\
&= \sum_{a \cdot b = s}P_{[s_0 \cdot A_1 \cdot \ldots \cdot A_n \cdot s_n]}(a) * P_{[t_0 \cdot B_1 \cdot \ldots \cdot B_m \cdot t_m]}(b)
\end{align*}
In other words, $P_{SQ \odot_{SQ} TQ}(s) = \sum_{a \cdot b = s}P_{SQ}(a) * P_{TQ}(b)$. Hence, 
\begin{align*}
P_{S_1 \cdot S_2}(s) &= \sum_{i=1}^n \sum_{j=1}^m (p_i * q_j)\sum_{s_1 \cdot s_2 = s} (P_{SQ_i}(s_1) *  P_{TQ_j}(s_2))\\
&= \sum_{i=1}^n \sum_{j=1}^m (p_i * q_j)\sum_{s_1 \cdot s_2 = s} P_{SQ_i \odot_{SQ} TQ_j}(s)\\
&= P_{\langle (SQ_1, p_1) \; | \; \ldots \; | \;(SQ_n,p_n)\rangle \odot \langle (TQ_1, q_1) \; | \; \ldots \; | \;(TQ_m,q_m)\rangle}(s)\\
&= P_{(\Downarrow S_1) \odot (\Downarrow S_2)}(s)
\end{align*}
which is what we wanted to show.
\item
$(S = S_1 \; |_p \; S_2)$

By definition $\Downarrow (S_1 \; |_p \; S_2) = (\Downarrow S_1) \oplus_p (\Downarrow S_2)$. By the induction hypothesis $P_{S_1} = P_{\Downarrow S_1}$ and $P_{S_2} = P_{\Downarrow S_2}$. Assume that $\Downarrow S_1 = \langle (SQ_1,p_1) \; | \; \ldots \; | \; (SQ_m, p_n)\rangle$ and $\Downarrow S_2 = \langle (TQ_1,q_1) \; | \; \ldots \; | \; (TQ_n, q_m) \rangle$. By definition, 

  $\OrDNFOf{\DNFOf{(\Sequence_1,\Probability_1)\DNFSep\ldots\DNFSep(\Sequence_n,\Probability_n)}}{\DNFOf{(\SequenceAlt_1,\ProbabilityAlt_1)\DNFSep\ldots\DNFSep(\SequenceAlt_m,\ProbabilityAlt_m)}}{\Probability} =$\\
  $\DNFOf{(\Sequence_1,\Probability_1*\Probability)\DNFSep\ldots\DNFSep(\Sequence_n,\Probability_n*\Probability)\DNFSep(\SequenceAlt_1,\ProbabilityAlt_1*(1-\Probability))\DNFSep\ldots\DNFSep(\SequenceAlt_m,\ProbabilityAlt_m*(1-\Probability))}$
  hence
  \begin{align*}
  P_{(\Downarrow S_1) \; |_p \; (\Downarrow S_2)}(s) &= P_{\langle (SQ_1,p_1) \; | \; \ldots \; | \; (SQ_n, p_n)\rangle \oplus_p \langle (TQ1, q_1) \; | \; \ldots \; | \; (TQ_m, q_m)\rangle}(s) \\
  &= P_{\langle (SQ_1,p_1) \; | \; \ldots \; | \; (SQ_n, p_n) \; | \; (TQ1, q_1) \; | \; \ldots \; | \; (TQ_m, q_m)\rangle}(s)\\
  &= \sum_{i=1}^m (p * p_i) P_{SQ_i}(s) + \sum_{j=1}^m ((1-p) * q_j) P_{TQ_j}(s)\\
  &= p * \sum_{i=1}^m p_i P_{SQ_i}(s) + (1-p)* \sum_{j=1}^m q_j * P_{TQ_j}(s)\\
&= p * P_{\langle (SQ_1,p_1) \; | \; \ldots \; | \; (SQ_n, p_n)\rangle}(s) + (1-p) * P_{\langle (TQ1, q_1) \; | \; \ldots \; | \; (TQ_m, q_m)\rangle}(s)\\
&= p * P_{\Downarrow S_1}(s) + (1-p) * P_{\Downarrow S_2}(s) 
  \end{align*}
  
  Since $P_{S_1} = P_{\Downarrow S_1}$ and $P_{S_1} = P_{\Downarrow S_1}$, then 
  \begin{align*}
  P_{S_1}(s) &= \sum_{i=1}^n p_i * P_{SQ_i}(s)\\
  P_{S_2}(s) &= \sum_{j=1}^m q_m * P_{TQ_j}(s)
  \end{align*}
  hence
  \begin{align*}
  P_{S_1 \; |_p \; S_2}(s) &= p * (P_{S_1}(s)) + (1-p) * (P_{S_2}(s))\\
  &= p \left(\sum_{i=1}^n p_i * P_{SQ_i}(s)\right) + (1-p)\left(\sum_{j=1}^m q_m * P_{TQ_j}(s)\right)\\
  &= p * P_{\Downarrow S_1}(s) + (1-p) * P_{\Downarrow S_2}(s)\\
  &= P_{(\Downarrow S_1) \; |_p \; (\Downarrow S_2)}(s)
  \end{align*}
  which is what we wanted to show. This completes the proof.
\end{enumerate}
\end{proof}
\begin{theorem}
  \label{CorrectEntropy}
  If $\Regex$ is unambiguous and contains no empty subcomponents,
  $\EntropyOf{\Regex}$ is the entropy of $\mathcal{L}(\Regex)$.
\end{theorem}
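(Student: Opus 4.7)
The plan is to proceed by structural induction on the unambiguous, empty-free SRE $\Regex$, showing in each case that the syntactic computation $\EntropyOf{\Regex}$ coincides with the information-theoretic entropy $-\sum_{\String} \ProbabilityOf{\Regex}{\String} \Log_2 \ProbabilityOf{\Regex}{\String}$ of the distribution $\ProbabilityDistOf{\Regex}$ defined in \S\ref{sec:sre}. Note that unambiguity is hereditary: if $\RegexConcat{\Regex_1}{\Regex_2}$ is unambiguous then so are $\Regex_1$ and $\Regex_2$, and the decomposition $\String = \String_1 \String_2$ with $\String_i \in \LanguageOf{\Regex_i}$ is unique; if $\PRegexOr{\Regex_1}{\Regex_2}{\Probability}$ is unambiguous then $\LanguageOf{\Regex_1} \cap \LanguageOf{\Regex_2} = \emptyset$; and if $\PRegexStar{\Regex}{\Probability}$ is unambiguous then $\Regex$ is unambiguous and every $\String \in \LanguageOf{\PRegexStar{\Regex}{\Probability}}$ has a unique decomposition into pieces from $\LanguageOf{\Regex}$.

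The base case $\Regex = \String$ is immediate because $\ProbabilityDistOf{\String}$ is a point mass. For $\Regex = \RegexConcat{\Regex_1}{\Regex_2}$, unambiguity gives $\ProbabilityOf{\Regex}{\String_1\String_2} = \ProbabilityOf{\Regex_1}{\String_1}\,\ProbabilityOf{\Regex_2}{\String_2}$ as a product distribution on the disjoint union of decompositions, so the true entropy decomposes as $\EntropyOf{\Regex_1}+\EntropyOf{\Regex_2}$ by the standard additivity of entropy for independent variables, matching the formula. For $\Regex = \PRegexOr{\Regex_1}{\Regex_2}{\Probability}$, the disjointness of supports makes $\ProbabilityDistOf{\Regex}$ the mixture of the two distributions, and a direct expansion gives
\[
-\sum_{\String} \ProbabilityOf{\Regex}{\String}\Log_2 \ProbabilityOf{\Regex}{\String}
= \Probability(\EntropyOf{\Regex_1} - \Log_2 \Probability) + (1-\Probability)(\EntropyOf{\Regex_2} - \Log_2(1-\Probability)),
\]
again matching the formula.

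The main obstacle will be the star case $\Regex = \PRegexStar{\Regex'}{\Probability}$. Here I would introduce the random variable $N$ for the number of iterations, observe that $N$ is geometric with $\Pr[N=n] = \Probability^n(1-\Probability)$, and use unambiguity of the star plus the inductive hypothesis on $\Regex'$ to write the true entropy as $\EntropyOf{N} + \mathbb{E}[N]\cdot\EntropyOf{\Regex'}$. Then I would compute $\mathbb{E}[N] = \Probability/(1-\Probability)$ and
\[
\EntropyOf{N} = -\sum_{n\ge 0}\Probability^n(1-\Probability)(n\Log_2 \Probability + \Log_2(1-\Probability)) = -\tfrac{\Probability}{1-\Probability}\Log_2 \Probability - \Log_2(1-\Probability),
\]
which combine to yield exactly the syntactic formula $\tfrac{\Probability}{1-\Probability}(\EntropyOf{\Regex'} - \Log_2 \Probability) - \Log_2(1-\Probability)$.

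The empty-free hypothesis will be used exactly once, to guarantee that in the star and concatenation cases no sub-distribution is ill-defined (e.g. $\PRegexStar{\emptyset}{\Probability}$ would put mass only on $\EmptyString$ but the formula returns a nonzero value), and to ensure that when computing the entropy of a concatenation or star we do not double-count representations of $\EmptyString$. The routine calculations involving $\sum_n n \Probability^n$ and rearranging the geometric series I will leave implicit; the substantive content of the proof lies in establishing, from the hereditary unambiguity conditions, that the distribution truly factors as claimed at each node of the SRE.
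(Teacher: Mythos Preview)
Your proposal is correct and follows essentially the same structural-induction argument as the paper, with the same use of hereditary unambiguity in each case. The only presentational difference is in the star case: you package the calculation via the chain-rule decomposition $H = H(N) + \mathbb{E}[N]\cdot\EntropyOf{\Regex'}$ for the geometric iteration count $N$, whereas the paper expands the double sum directly and splits the logarithm into three summands---but the underlying arithmetic (the identities $\sum_n n\,\Probability^n(1-\Probability)=\Probability/(1-\Probability)$ and $H(\LanguageOf{\Regex'^n})=n\,H(\LanguageOf{\Regex'})$) is identical.
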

\begin{proof}
Given a discrete set $X$ and a probability distribution $P$ on $X$, we define the entropy $H(X)$ of $X$ by
$$H(X) = -\sum_{x \in X}P(x)\log_2{P(x)}$$
We prove the theorem by induction on $S$.
\begin{enumerate}
\item
($S = s$)

Then $\mathbb{H}(s) = 0 = H(\mathcal{L}(s)) = H(\{s\})$
\item
$(S = S^{*p})$

Since $S^{*p}$ is unambiguous, then $S$ is also anambiguous, and for each $s \in \mathcal{L}(S^{*p})$, there exist unique $t_1, \ldots, t_n \in \mathcal{L}(S)$ such that $s = t^s_1 \cdot \ldots \cdot t^s_{n_s}$. By the induction hypothesis, $H(\mathcal{L}(S)) = \mathbb{H}(S)$, thus
\begin{align*}
H(\mathcal{L}(S^{*p})) &= - \sum_{s \in \mathcal{L}(S^{*p})} P_{S^{*p}}(s) \log_2 P_{S^{*p}}(s)\\
&= -(1-p)\sum_n p^n \sum_{\substack{(s_1, \ldots, s_n) \\ s_i \in \mathcal{L}(S)}} \prod_{i=1}^n P_S(s_i)\log_2 \left(p^n * (1 - p) * \prod_{i=1}^n P_S(s_i)\right)\\
&= -(1-p)\sum_n p^n \sum_{\substack{(s_1, \ldots, s_n) \\ s_i \in \mathcal{L}(S)}} \prod_{i=1}^n P_S(s_i)\left(n \log_2 p  + \log_2 (1 - p) + \log_2 \left(\prod_{i=1}^n P_S(s_i)\right)\right)\\
&= -(1-p)\sum_n p^n \sum_{\substack{(s_1, \ldots, s_n) \\ s_i \in \mathcal{L}(S)}} \prod_{i=1}^n P_S(s_i)\left(n \log_2 p  + \log_2 (1 - p) + \log_2 \left(\prod_{i=1}^n P_S(s_i)\right)\right)\\
&= -(1-p) * \log_2 p \sum_n n * p^n \sum_{\substack{(s_1, \ldots, s_n) \\ s_i \in \mathcal{L}(S)}} \prod_{i=1}^n P_S(s_i)
-(1-p) * \log_2 (1 - p) \sum_n p^n \sum_{\substack{(s_1, \ldots, s_n) \\ s_i \in \mathcal{L}(S)}} \prod_{i=1}^n P_S(s_i) \\
&-(1-p)\sum_n p^n \sum_{\substack{(s_1, \ldots, s_n) \\ s_i \in \mathcal{L}(S)}} \prod_{i=1}^n P_S(s_i)\log_2 \left(\prod_{i=1}^n P_S(s_i)\right)\\
&= -\log_2 p * \frac{p}{(1-p)} -\log_2 (1 - p) 
-(1-p)\sum_n p^n H(\mathcal{L}(S^n))\\
&= -\log_2 p * \frac{p}{(1-p)} -\log_2 (1 - p) 
-(1-p)\sum_n p^n n H(\mathcal{L}(S))\\
&= -\log_2 p * \frac{p}{(1-p)} -\log_2 (1 - p) 
-\frac{p}{(1-p)} * H(\mathcal{L}(S))\\
&= \frac{p}{(1-p)}(H(\mathcal{L}(S)) - \log_2 p) -\log_2 (1 - p)\\
&= \frac{p}{(1-p)}(\mathbb{H}(S) - \log_2 p) -\log_2 (1 - p)
\end{align*}
\item
($S = S_1 \cdot S_2$)

Since $S$ is unambigous, then $S_1$ and $S_2$ are also unambiguous, and for each $s \in \mathcal{L}(S_1 \cdot S_2)$ there exists a unique $s_1 \in \mathcal{L}(S_1)$ and a unique $s_2 \in \mathcal{L}(S_2)$ such that $s = s_1 \cdot s_2$. By the induction hypothesis, $H(\mathcal{L}(S_1)) = \mathbb{H}(S_1)$ and $H(\mathcal{L}(S_2)) = \mathbb{H}(S_2)$, hence
\begin{align*}
H(\mathcal{L}(S_1 \cdot S_2))(s) &= -\sum_{s_1 \in S_1} \sum_{s_2 \in S_2}P_{S_1}(s_1)*P_{s_2}(s_2) \log_2(P_{S_1}(s_1)*P_{S_2}(s_2))\\
&= -\sum_{s_1 \in S_1} P_{S_1}(s_1)\sum_{s_2 \in S_2}P_{s_2}(s_2) (\log_2(P_{S_1}(s_1) + \log_2 P_{S_2}(s_2)))\\
&= -\sum_{s_1 \in S_1} P_{S_1}(s_1) \sum_{s_2 \in S_2}P_{S_2}(s_2)\log_2(P_{S_1}(s_1)) + \sum_{s_1 \in S_1} P_{S_1}(s_1)\sum_{s_2 \in S_2} P_{S_2}(s_2)\log_2 P_{S_2}(s_2)\\
&= -\sum_{s_1 \in S_1} P_{S_1}(s_1) \log_2(P_{S_1}(s_1)) + \sum_{s_2 \in S_2} P_{S_2}(s_2)\log_2 P_{S_2}(s_2)\\
&= H(\mathcal{L}(S_1)) + H(\mathcal{L}(S_2))\\
&= \mathbb{H}(S_1) + \mathbb{H}(S_2)
\end{align*}
\item
($S = S_1 \; |_p \; S_2$)

Since $S$ is unambiguous, it must be that $S_1$ and $S_2$ are also unambiguous, and
every $s \in \mathcal{L}(S_1 \; |_p \; S_2)$ is either in $\mathcal{L}(S_1)$
or in $\mathcal{L}(S_2)$. By the induction hypothesis, $H(\mathcal{L}(S_1))
= \mathbb{H}(S_1)$ and $H(\mathcal{L}(S_2)) =
\mathbb{H}(S_2)$. Then
\begin{align*}
H(\mathcal{L}(S_1 \; |_p \; S_2)) &= -\sum_{s \in \mathcal{L}(S_1)}(p * P_{S_1}(s))\log_2 (p * P_{S_1}(s)) - \sum_{s \in \mathcal{L}(S_2)}((1-p) * P_{S_2}(s))\log_2 ((1-p) * P_{S_2}(s))\\
&=- p\sum_{s \in \mathcal{L}(S_1)}P_{S_1}(s)(\log_2 p + \log_2 P_{S_1}(s)) - (1-p)\sum_{s \in \mathcal{L}(S_2)}P_{S_2}(s)(\log_2 (1-p) + \log_2 P_{S_2}(s))\\
&= p * (\log_2 p +  H(\mathcal{L}(S_1))) + (1-p) * ( \log_2 (1-p) + H(\mathcal{L}(S_2)))\\
&= p * (\log_2 p +  \mathbb{H}(S_1)) + (1-p) * ( \log_2 (1-p) + \mathbb{H}(S_2))\\
\end{align*}
\end{enumerate}
\end{proof}
\begin{theorem*}
  Let $\Lens$ be an asymmetric lens. $\Lens$ is also a simple symmetric lens,
  where
  \begin{center}
    \begin{tabular}{rcl}
      $\Lens.\CreateROf{x}$ & $=$ & $\Lens.get \App x$\\
      $\Lens.\CreateLOf{y}$ & $=$ & $\Lens.create \App y$\\
      $\Lens.\PutROf{x}{y}$ & $=$ & $\Lens.get \App x$\\
      $\Lens.\PutLOf{y}{x}$ & $=$ & $\Lens.put \App y \App x$
    \end{tabular}
  \end{center}
\end{theorem*}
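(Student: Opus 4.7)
The plan is to verify that the four simple symmetric lens laws (\CreatePutRL, \CreatePutLR, \PutRL, \PutLR) follow directly from the three asymmetric lens laws (PUTGET, GETPUT, CREATEGET) under the given translation. Since the translation is purely definitional, with $\Lens.\CreateR$, $\Lens.\CreateL$, $\Lens.\PutR$, $\Lens.\PutL$ each unfolding to a single application of $\Lens.get$, $\Lens.create$, or $\Lens.put$, this reduces to four short equational calculations, one per law.

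First I would expand each symmetric law by substituting the definitions, then apply the corresponding asymmetric law. Concretely: for \CreatePutRL, $\Lens.\PutL\,(\Lens.\CreateR\,x)\,x = \Lens.put\,(\Lens.get\,x)\,x = x$ by GETPUT. For \CreatePutLR, $\Lens.\PutR\,(\Lens.\CreateL\,y)\,y = \Lens.get\,(\Lens.create\,y) = y$ by CREATEGET. For \PutRL, $\Lens.\PutL\,(\Lens.\PutR\,x\,y)\,x = \Lens.put\,(\Lens.get\,x)\,x = x$ by GETPUT (noting that $\Lens.\PutR\,x\,y$ discards $y$). For \PutLR, $\Lens.\PutR\,(\Lens.\PutL\,y\,x)\,y = \Lens.get\,(\Lens.put\,y\,x) = y$ by PUTGET.

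There is essentially no obstacle here, since the laws match up one-for-one. The only subtlety worth remarking on is that the definitional choice $\Lens.\PutR\,x\,y = \Lens.get\,x$ throws away the second argument entirely. This asymmetry in the translation is what forces the symmetric put laws on the $\PutR$-side to collapse to the asymmetric $get$ laws rather than to $put$ laws: intuitively, an asymmetric lens treats the source as authoritative, so any ``new'' $Y$-value provided in a right-ward put is irrelevant. Once this is noted, each of the four calculations is a one-step unfolding followed by a single appeal to the corresponding asymmetric lens law, and the proof is complete.
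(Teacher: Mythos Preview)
Your proposal is correct and matches the paper's proof essentially line for line: both verify each of the four simple symmetric lens laws by unfolding the definitional translation and then applying the corresponding asymmetric law (GETPUT for \CreatePutRL{} and \PutRL{}, CREATEGET for \CreatePutLR{}, PUTGET for \PutLR{}). Your added remark about why $\PutR$ discards its second argument is a nice bit of intuition the paper does not spell out.
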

\begin{proof}
Let $\ell$ be an asymmetric lens. Then $\ell$ satisfies the following laws:
\begin{align*}
\ell.\get \; (\ell.\pput \; v \; s) &= v \tag{PUTGET}\\
\ell.\pput \; (\ell.\get \; s) \; s&= s \tag{GETPUT}\\
\ell.\get \; (\ell.\pput \; v \; s) &= v \tag{CREATGET}
\end{align*}
Define the simple symmetric lens $\lceil \ell \rceil$ by 
\begin{align*}
\lceil \ell \rceil.\CreateROf{s} &= \ell.\get \; s\\
\lceil \ell \rceil.\CreateLOf{v} &= \ell.\create \; v\\
\lceil \ell \rceil.\PutROf{s}{v} &= \ell.\get \; s\\
\lceil \ell \rceil.\PutLOf{v}{s} &= \ell.\pput \; v \; s\\
\end{align*}
We now show that $\lceil \ell \rceil$ satisfies the simple symmetric lens laws:
\begin{align*}
\lceil \ell \rceil.\PutLOf{(\lceil \ell \rceil.\CreateROf{s})}{s} &= \ell.\pput \; (\ell.\get \; s) \; s = s\\
\lceil \ell \rceil.\PutROf{(\lceil \ell \rceil.\CreateLOf{v})}{v} &= \ell.\get \; (\ell.\create \; v) = v\\
\lceil \ell \rceil.\PutLOf{(\lceil \ell \rceil.\PutROf{s}{v})}{s} &= \ell.\pput \; (\ell.\get) \; s = s\\
\lceil \ell \rceil.\PutROf{(\lceil \ell \rceil.\PutLOf{v}{s})}{v} &= \ell.\get \; (\ell.\pput \; v \; s) = v
\end{align*}
\end{proof}

\begin{theorem*}
  Let $\Lens \OfType \Regex \Leftrightarrow \RegexAlt$, where $\Lens$ does not
  include composition, $\Regex$ and $\RegexAlt$ and unambiguous, and neither
  $\Regex$ nor $\RegexAlt$ contain any empty subcomponents.
  \begin{enumerate}
  \item $\REntropyOf{\RegexAlt \Given \Lens, \Regex}$ is an upper bound for the expected
    information content of $\SetOf{t \SuchThat t \in \LanguageOf{\RegexAlt}}$,
    given $\SetOf{s \SuchThat s \in \LanguageOf{\Regex} \BooleanAnd
      \Lens.\PutROf{s}{t} = t}$
  \item $\LEntropyOf{\Regex \Given \Lens, \RegexAlt}$ is an upper bound for the expected
    information content of $\SetOf{s \SuchThat s \in \LanguageOf{\Regex}}$,
    given $\SetOf{t \SuchThat t \in \LanguageOf{\RegexAlt} \BooleanAnd
      \Lens.\PutLOf{t}{s} = s}$
  \end{enumerate}
\end{theorem*}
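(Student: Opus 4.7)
The plan is to prove both parts simultaneously by structural induction on $\Lens$, since each lens form couples the left-to-right and right-to-left behaviors (e.g.\ $\InvertOf{\Lens}$ swaps them, and $\OrLens$/\MergeL/\MergeR mix them through the case analyses in the semantics). Throughout, I will use Theorem~\ref{thm:correct_entropy} (which gives that $\EntropyOf{\Regex}$ equals the true entropy of the underlying probability distribution whenever $\Regex$ is unambiguous and empty-free) together with the standard identity $H(Y \mid X) = \sum_x P(x) \cdot H(Y \mid X = x)$ from information theory. The hypothesis that $\Lens$ contains no composition is exactly what makes such a structural induction feasible, since composition is precisely the case where the syntactic accounting can undercount cancellations across the intermediate type.

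The base cases are immediate. For $\IdentityLensOf{\BRegex}$, any $s$ uniquely determines $t = s$ via $\PutR$, so the true conditional entropy is $0$, matching $\REntropyOf{\RegexAlt \mid \IdentityLens, \RegexAlt} = [0,0]$. For $\DisconnectOf{\Regex}{\RegexAlt}{\String}{\StringAlt}$, the set $\SetOf{t \mid \Lens.\PutROf{s}{t} = t}$ is all of $\LanguageOf{\RegexAlt}$ (the put simply returns $t$), so its entropy is $\EntropyOf{\RegexAlt}$, again matching the defined quantity. For the inductive cases \ConcatLens{} and \SwapLens{}, unambiguity of $\Regex_1 \Concat \Regex_2$ (resp.\ the swapped form) means the two sides of the concatenation decompose independently, so the joint conditional entropy factors as a sum, which matches the additive clause in the definition of $\REntropyOf{\cdot}$; the \OrLens{} case is handled by the standard law-of-total-entropy decomposition $\Probability \cdot H(\cdot) + (1-\Probability) \cdot H(\cdot)$, using that the two disjuncts have disjoint languages (by unambiguity) and that \OrLens{} acts per-disjunct. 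The \IterateLens{} case uses the formula for the expected length of a geometric distribution (factor $\frac{\Probability}{1-\Probability}$) together with independence of the iterated copies, in direct parallel with the derivation of $\EntropyOf{\PRegexStar{\Regex}{\Probability}}$ in Theorem~\ref{thm:correct_entropy}. Finally, $\InvertOf{\Lens}$ is handled by the definitional swap of $\LEntropy$ and $\REntropy$, matching the swap of $\PutR$/$\PutL$ in the semantics.

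The main obstacle is the \MergeR{} and \MergeL{} cases, which is exactly why the theorem states an upper bound rather than an equality. For $\MergeLOf{\Lens_1}{\Lens_2} \OfType \Regex \Leftrightarrow \PRegexOr{\RegexAlt_1}{\RegexAlt_2}{\ProbabilityAlt}$, knowing the synchronized $s \in \Regex$ does not by itself pin down whether the corresponding $t$ lies in $\RegexAlt_1$ or in $\RegexAlt_2$: that branch choice carries extra entropy. The defined bound $[0,\, \REntropyOf{\RegexAlt_1 \mid \Lens_1, \Regex} + \REntropyOf{\RegexAlt_2 \mid \Lens_2, \Regex} + 1]$ records this by adding a single bit of branch uncertainty plus, pessimistically, the entropies of \emph{both} possible targets; I will argue that under any conditional distribution on branches the true entropy lies in this interval, using concavity of entropy (or simply the crude bound $H(\text{branch}, \text{payload}) \le 1 + H(\text{payload}_1) + H(\text{payload}_2)$). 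The dual \MergeR{} case is symmetric. Pushing the bound inequalities back through the inductive hypothesis for the other cases preserves the interval arithmetic specified by the $[\cdot,\cdot]+[\cdot,\cdot]$ and scalar multiplication conventions, completing the induction.
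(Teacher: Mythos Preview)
Your proposal is correct and follows essentially the same route as the paper: structural induction on $\Lens$, with the same treatment of each combinator (identity and disconnect as base cases, additivity for \ConcatLens{}/\SwapLens{}, weighted sum for \OrLens{}, geometric expectation for \IterateLens{}, directional swap for \Invert{}, and the crude $1 + H_1 + H_2$ bound for the merge that introduces branch uncertainty).

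One small clarification worth making explicit: in the $\REntropy$ direction, \MergeR{} is \emph{not} the hard case---since $s$ itself determines whether it lies in $\BRegex_1$ or $\BRegex_2$, the argument there is identical to \OrLens{} and yields an exact expression, not a bound. It is only \MergeL{} (for $\REntropy$) and, symmetrically, \MergeR{} (for $\LEntropy$) that require the interval bound. Your phrase ``the dual \MergeR{} case is symmetric'' is correct if read as ``\MergeR{} for $\LEntropy$ mirrors \MergeL{} for $\REntropy$,'' but be sure not to conflate the two merge combinators within a single direction.
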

\begin{proof}
Let $H(T \; | \; \ell, S)$ be the expected information content of $\{t \; | \; t \in \mathcal{L}(T)\}$ given $\{s \; | \; s \in \mathcal{L}(S) \wedge \ell.\PutROf{s}{t} = t\}$. Given $s \in S$, let $V^{\rightarrow}_{\ell}(s) = \{t \in \mathcal{L}(T) \; | \; \ell.\PutROf{s}{t} = t\}$. We want to show that
$$H(T \; | \; \ell, S) = \sum_{s \in \mathcal{L}(S)}P_{\mathcal{L}(S)}(s)H(V^{\rightarrow}_{\ell}(s)) \leq \mathbb{H}^{\rightarrow}(T \; | \; \ell, S)$$
(The proofs for $\mathbb{H}^{\leftarrow}(S \; | \; \ell, T)$ will follow the same pattern). We proceed by induction over $\ell$. 
\begin{enumerate}
\item
($\ell = \IdentityLensOf{S}$)

Observe that $V^{\rightarrow}_{\IdentityLensOf{S}}(s) = \{t \in \mathcal{L}(S) \; | \; \IdentityLensOf{S}.\ell.\PutROf{s}{t} = t\} = \{t \in \mathcal{L}(S) \; | \; t = s\} = \{s\}$.

Consequently, $H(V^{\rightarrow}_{\IdentityLensOf{S}}(s)) = H(\{s\}) = 0$, from which it follows that 
$$\sum_{s \in \mathcal{L}(S)} P_{\mathcal{L}(S)}(s) * H(V^{\rightarrow}_{\IdentityLensOf{S}}(s)) = 0 = \mathbb{H}^{\rightarrow}(S \; | \; \IdentityLensOf{S}, T)$$
\item
($\ell = \DisconnectOf{\Regex}{\RegexAlt}{\String}{\StringAlt}$)

Observe that 
$$V^{\rightarrow}_{\DisconnectOf{\Regex}{\RegexAlt}{\String}{\StringAlt}}(s) = \{t \in \mathcal{L}(T) \; | \; \DisconnectOf{\Regex}{\RegexAlt}{\String}{\StringAlt}.\PutROf{s}{t} = t\} = \{t \in \mathcal{L}(T) \; | \; t = t\} = \mathcal{L}(T)$$
Consequently
\begin{align*}
\sum_{s \in \mathcal{L}(S)}P_{S}(s)H(V^{\rightarrow}_{\DisconnectOf{\Regex}{\RegexAlt}{\String}{\StringAlt}}(s)) &= \sum_{s \in \mathcal{L}(S)}P_{S}(s)H(\mathcal{L}(T))\\
&= H(\mathcal{L}(T))\\
&= \mathbb{H}(T)\\
&= \mathbb{H}^{\rightarrow}(T \; | \; \DisconnectOf{\Regex}{\RegexAlt}{\String}{\StringAlt}, S)
\end{align*}
\item
($\ell = \IterateLensOf{\Lens} : S^{*p} \Leftrightarrow T^{*q}$)

Observe that 
\begin{align*}
V^{\rightarrow}_{\IterateLensOf{\Lens}}(s_1 \cdot \ldots \cdot s_n) &= \{t_1 \cdot \ldots \cdot t_m \; | \; \IterateLensOf{\Lens}.\PutROf{(s_1 \cdot \ldots \cdot s_n) }{(t_1 \cdot \ldots \cdot t_m)} = t_1 \cdot \ldots \cdot t_m\}\\
&= \{t_1 \cdot \ldots \cdot t_n \; | \; \ell.\PutROf{s_i}{t_i} = t_i \text{ for }1 \leq i \leq n\}\\
&= V^{\rightarrow}_{\ell}(s_1) \cdot \ldots \cdot V^{\rightarrow}_{\ell}(s_n)
\end{align*}
Consequently
\begin{align*}
\sum_{s \in \mathcal{L}(S^{*p})}P_{S^{*p}}(s) H(V^{\rightarrow}_{\IterateLensOf{\Lens}}(s))
&= (1-p)\sum_n p^n \sum_{\substack{(s_1, \ldots, s_n)\\s_i \in \mathcal{L}(S)}} \prod_{i=1}^n P_S(s_i) H(V^{\rightarrow}_{\ell}(s_1) \cdot \ldots \cdot V^{\rightarrow}_{\ell}(s_n))\\
&= (1-p)\sum_n p^n \sum_{\substack{(s_1, \ldots, s_n)\\s_i \in \mathcal{L}(S)}} \prod_{i=1}^n P_S(s_i) \left(\sum_{j=1}^n H(V^{\rightarrow}_{\ell}(s_j)) \right)\\
&= (1-p)\sum_n p^n \sum_{\substack{(s_1, \ldots s_{n-1}) \\ s_i \in \mathcal{L}(S)}} \prod_{i=1}^{n-1}P_S(s_i) \sum_{s_n} P_S(s_n) \left(\sum_{j=1}^n H(V^{\rightarrow}_{\ell}(s_j)) \right)
\end{align*}
We focus on the term $\sum_{s_n} P_S(s_n) \sum_{j=1}^n H(V^{\rightarrow}_{\ell}(s_j))$:
\begin{align*}
\sum_{s_n} P_S(s_n) \sum_{j=1}^n H(V^{\rightarrow}_{\ell}(s_j)) &= \sum_{s_n} P_S(s_n)H(V^{\rightarrow}_{\ell}(s_n)) + P_S(s_n) \sum_{j=1}^{n-1} H(V^{\rightarrow}_{\ell}(s_j))\\
&= H(T \; | \; \ell, S) + \sum_{j=1}^{n-1}H(V^{\rightarrow}_{\ell}(s_j))
\end{align*}
Consequently,
\begin{align*}
\sum_{s \in \mathcal{L}(S^{*p})}P_{S^{*p}}(s) H(V^{\rightarrow}_{\IterateLensOf{\Lens}}(s))
&= (1-p)\sum_n p^n \sum_{\substack{(s_1, \ldots s_{n-1}) \\ s_i \in \mathcal{L}(S)}} \prod_{i=1}^{n-1}P_S(s_i) \left( H(S \; | \; \ell, T) + \sum_{j=1}^{n-1}H(V^{\rightarrow}_{\ell}(s_j))\right)\\
&= (1-p)\sum_n p^n \sum_{\substack{(s_1, \ldots, s_{n-1})\\s_i \in \mathcal{L}(S)}} \prod_{i=1}^{n-1}P_S(s_i) H(S \; | \; \ell, T)\\
&+(1-p)\sum_n p^n \sum_{\substack{(s_1, \ldots, s_{n-1}) \\ s_i \in \mathcal{L}(S)}} \prod_{i=1}^{n-1}P_S(s_i) \left(\sum_{j=1}^{n-1}H(V^{\rightarrow}_{\ell}(s_j))\right)\\
&= H(S \; | \; \ell, T) * (1-p)\sum_n p^n \\
&+(1-p)\sum_n p^n \sum_{\substack{(s_1, \ldots, s_{n-1}) \\ s_i \in \mathcal{L}(S)}} \prod_{i=1}^{n-1}P_S(s_i) \left(\sum_{j=1}^{n-1}H(V^{\rightarrow}_{\ell}(s_j))\right)
\end{align*}
Unrolling the term 
$$\sum_{\substack{(s_1, \ldots, s_{n-1}) \\ s_i \in \mathcal{L}(S)}} \prod_{i=1}^{n-1}P_S(s_i) \left(\sum_{j=1}^{n-1}H(V^{\rightarrow}_{\ell}(s_j))\right)$$ 
$(n-1)$ more times gives
\begin{align*}
H(T^{*q} \; | \; V^{\rightarrow}_{\IterateLensOf{\Lens}}(s), S^{*p}) &=
\sum_{s \in \mathcal{L}(S^{*p})}P_{S^{*p}}(s) H(V^{\rightarrow}_{\IterateLensOf{\Lens}}(s))\\
&= H(T \; | \; \ell, S) * (1-p)\sum_n n * p^n \\
&= \frac{p}{1-p} * H(T \; | \; \ell, S)\\
&\leq \frac{p}{1-p} * \mathbb{H}^{\rightarrow}(T \; | \; \ell, S)\\
&= \mathbb{H}(T^{*q} \; | \; V^{\rightarrow}_{\IterateLensOf{\Lens}}(s), S^{*p})
\end{align*}

which is what we wanted to show.
\item
($S = \ConcatLensOf{\Lens_1}{\Lens_2}$)

Observe that 
\begin{align*}
V^{\rightarrow}_{ \ConcatLensOf{\Lens_1}{\Lens_2}}(s_1 \cdot s_2) &= \{t_1 \cdot t_2 \in \mathcal{L}(T_1 \cdot T_2) \; | \;  \ConcatLensOf{\Lens_1}{\Lens_2}.\PutROf{(s_1 \cdot s_2)}{(t_1 \cdot t_2)} = t_1 \cdot t_2\}\\
&= \{t_1 \cdot t_2 \in \mathcal{L}(T_1 \cdot T_2) \; | \;  \ell_1.\PutROf{s_1}{t_1 } = t_1  \text{ and } \ell_2.\PutROf{t_2}{t_2} = t_2\}\\
&= V^{\rightarrow}_{\ell_1}(s_1) \cdot V^{\rightarrow}_{\ell_2}(s_2)
\end{align*}
Consequently,
\begin{align*}
&H(T_1 \cdot T_2 \; | \; \ConcatLensOf{\Lens_1}{\Lens_2}, S_1 \cdot S_2)\\
& =\sum_{s \in \mathcal{L}(S_1 \cdot S_2)}P_{S_1 \cdot S_2}(s) H(V^{\rightarrow}_{\ConcatLensOf{\Lens_1}{\Lens_2}}(s))\\
&= \sum_{s_1 \in \mathcal{L}(s_1)} P_{S_1}(s_1)\sum_{s_2 \in \mathcal{L}(s_2)}P_{S_2}(s_2)(H(V^{\rightarrow}_{\ell_1}(s_1)) + H(V^{\rightarrow}_{\ell_2}(s_2)))\\
&= \sum_{s_1 \in \mathcal{L}(s_1)} P_{S_1}(s_1)\sum_{s_2 \in \mathcal{L}(s_2)}P_{S_2}(s_2)H(V^{\rightarrow}_{\ell_1}(s_1)) + \sum_{s_1 \in \mathcal{L}(s_1)} P_{S_1}(s_1)\sum_{s_2 \in \mathcal{L}(s_2)}P_{S_2}(s_2)H(V^{\rightarrow}_{\ell_2}(s_2))\\
&= H(T_1 \; | \; \ell_1, S_1) + H(T_2 \; | \; \ell_2, S_2)\\
&\leq \mathbb{H}^{\rightarrow}(T_1 \; | \; \ell_1, S_1) + \mathbb{H}^{\rightarrow}(T_2 \; | \; \ell_2, S_2)\\
&= \mathbb{H}(T_1 \cdot T_2 \; | \; \ConcatLensOf{\Lens_1}{\Lens_2}, S_1 \cdot S_2)
\end{align*}
\item
($S = \SwapLensOf{\Lens_1}{\Lens_2}$)

Similar to the previous case.
\item
($S = \OrLensOf{\Lens_1}{\Lens_2}$)

Observe that 
$$
V^{\rightarrow}_{ \OrLensOf{\Lens_1}{\Lens_2}}(s) = \{t \in \mathcal{L}(T_1 \; |_q \; T_2) \; | \;  \OrLensOf{\Lens_1}{\Lens_2}.\ell.\PutROf{s}{t} = t\}
$$
If $s \in \mathcal{L}(S_1)$, then
$$
V^{\rightarrow}_{ \OrLensOf{\Lens_1}{\Lens_2}}(s) = \{t \in \mathcal{L}(T_1 \; |_q \; T_2) \; | \;  \ell_1.\PutROf{s}{t} = t\} = V^{\rightarrow}_{\ell_1}(s)
$$
while if $s \in \mathcal{L}(S_2)$, then
$$
V^{\rightarrow}_{ \OrLensOf{\Lens_1}{\Lens_2}}(s) = \{t \in \mathcal{L}(T_1 \; |_q \; T_2) \; | \;  \ell_2.\PutROf{s}{t} = t\} = V^{\rightarrow}_{\ell_2}(s)
$$
Consequently,
\begin{align*}
H(T_1 \; |_q \; T_2 \; | \; \OrLensOf{\Lens_1}{\Lens_2}, S_1 \; |_p \; S_2) &= \sum_{s \in \mathcal{L}(S_1 \; |_p \; S_2)}P_{S_1 \; |_p \; S_2}(s) H(V^{\rightarrow}_{\OrLensOf{\Lens_1}{\Lens_2}}(s))\\
&= p * \sum_{s \in \mathcal{L}(S_1)}P_{S_1}(s) H(V^{\rightarrow}_{\ell_1}(s)) + (1-p) * \sum_{s \in \mathcal{L}(S_2)}P_{S_2}(s) H(V^{\rightarrow}_{\ell_2}(s))\\
&= p * (H(T_1 \; | \; \ell_1,S_1)) + (1-p) * (H(T_2 \; | \; \ell_2,S_2))\\
&\leq p * (\mathbb{H}^{\rightarrow}(T_1 \; | \; \ell_1,S_1)) + (1-p) * (\mathbb{H}^{\rightarrow}(T_2 \; | \; \ell_2,S_2))\\
&= \mathbb{H}(T_1 \; |_q \; T_2 \; | \; \OrLensOf{\Lens_1}{\Lens_2}, S_1 \; |_p \; S_2) 
\end{align*}
\item
($S = \MergeROf{\Lens_1}{\Lens_2}$)

Observe that 
$$
V^{\rightarrow}_{ \MergeROf{\Lens_1}{\Lens_2}}(s) = \{t \in \mathcal{L}(T) \; | \;  \MergeROf{\Lens_1}{\Lens_2}.\PutROf{s}{t} = t\}
$$
If $s \in \mathcal{L}(S_1)$, then
$$
V^{\rightarrow}_{ \MergeROf{\Lens_1}{\Lens_2}}(s) = \{t \in \mathcal{L}(T) \; | \;  \ell_1.\PutROf{s}{t} = t\} = V^{\rightarrow}_{\ell_1}(s)
$$
while if $s \in \mathcal{L}(S_2)$, then
$$
V^{\rightarrow}_{ \MergeROf{\Lens_1}{\Lens_2}}(s) = \{t \in \mathcal{L}(T) \; | \;  \ell_2.\PutROf{s}{t} = t\} = V^{\rightarrow}_{\ell_2}(s)
$$
Consequently,
\begin{align*}
H(T \; | \; \MergeROf{\Lens_1}{\Lens_2}, S_1 \; |_p \; S_2) &= \sum_{s \in \mathcal{L}(S_1 \; |_p \; S_2)}P_{S_1 \; |_p \; S_2}(s) H(V^{\rightarrow}_{\MergeROf{\Lens_1}{\Lens_2}}(s))\\
&= p * \sum_{s \in \mathcal{L}(S_1)}P_{S_1}(s) H(V^{\rightarrow}_{\ell_1}(s)) + (1-p) * \sum_{s \in \mathcal{L}(S_2)}P_{S_2}(s) H(V^{\rightarrow}_{\ell_2}(s))\\
&= p * (H(T_1 \; | \; \ell_1, S_1) ) + (1-p) * (H(T_2 \; | \; \ell_2, S_2) )\\
&\leq p * (\mathbb{H}^{\rightarrow}(T_1 \; | \; \ell_1, S_1) ) + (1-p) * (\mathbb{H}^{\rightarrow}(T_2 \; | \; \ell_2, S_2) )\\
&= \mathbb{H}(T \; | \; \MergeROf{\Lens_1}{\Lens_2}, S_1 \; |_p \; S_2) 
\end{align*}
\item
($S = \MergeLOf{\Lens_1}{\Lens_2}$)

Observe that 
\begin{align*}
V^{\rightarrow}_{ \MergeLOf{\Lens_1}{\Lens_2}}(s) &= \{t \in \mathcal{L}(T_1 \; |_q \; T_2) \; | \;  \MergeLOf{\Lens_1}{\Lens_2}.\PutROf{s}{t} = t\}\\
&= \{t \in \mathcal{L}(T_1 \; |_q \; T_2) \; | \;  \ell_1.\PutROf{s}{t} = t \text{ or } \ell_2.\PutROf{s}{t} = t\}\\
&= V^{\rightarrow}_{\ell_1}(s) \cup V^{\rightarrow}_{\ell_2}(s)
\end{align*}
To compute $H(V^{\rightarrow}_{ \MergeLOf{\Lens_1}{\Lens_2}}(s))$, we use the formula 
$$H(P_1, P_2) = H(m(P_1), m(P_2)) + m(P_1)H\left(\frac{P_1}{m(P_1)}\right) + m(P_2)H\left(\frac{P_2}{m(P_2)}\right)$$
if the probability space $P$ is the sum of measure spaces $P_1$ and $P_2$, and $m(P_i)$ is the measure of $P_i$:
\begin{align*}
&H(V^{\rightarrow}_{ \MergeLOf{\Lens_1}{\Lens_2}}(s)) =\\
&H \left(\frac{q * P_{T_1}V^{\rightarrow}_{\ell_1}(s)}{q * P_{T_1}V^{\rightarrow}_{\ell_1}(s) + (1-q) * P_{T_2}V^{\rightarrow}_{\ell_2}(s)}, \frac{(1-q) * P_{T_2}V^{\rightarrow}_{\ell_2}(s)}{q * P_{T_1}V^{\rightarrow}_{\ell_1}(s) + (1-q) * P_{T_2}V^{\rightarrow}_{\ell_2}(s)}\right)\\
&+ \frac{q * P_{T_1}V^{\rightarrow}_{\ell_1}(s)}{q * P_{T_1}V^{\rightarrow}_{\ell_1}(s) + (1-q) * P_{T_2}V^{\rightarrow}_{\ell_2}(s)} H(V^{\rightarrow}_{\ell_1}(s)) + \frac{(1-q) * P_{T_2}V^{\rightarrow}_{\ell_2}(s)}{q * P_{T_1}V^{\rightarrow}_{\ell_1}(s) + (1-q) * P_{T_2}V^{\rightarrow}_{\ell_2}(s)} H(V^{\rightarrow}_{\ell_2}(s))
\end{align*}
Consequently,
\begin{align*}
&H(T_1 \; |_q \; T_2 \; | \; \MergeLOf{\Lens_1}{\Lens_2}, S) = \\
&\sum_{s \in \mathcal{L}(S)}H \left(\frac{q * P_{T_1}V^{\rightarrow}_{\ell_1}(s)}{q * P_{T_1}V^{\rightarrow}_{\ell_1}(s) + (1-q) * P_{T_2}V^{\rightarrow}_{\ell_2}(s)}, \frac{(1-q) * P_{T_2}V^{\rightarrow}_{\ell_2}(s)}{q * P_{T_1}V^{\rightarrow}_{\ell_1}(s) + (1-q) * P_{T_2}V^{\rightarrow}_{\ell_2}(s)}\right)P_S(s)\\
&+ \sum_{s \in \mathcal{L}(S)}\left(\frac{q * P_{T_1}V^{\rightarrow}_{\ell_1}(s)}{q * P_{T_1}V^{\rightarrow}_{\ell_1}(s) + (1-q) * P_{T_2}V^{\rightarrow}_{\ell_2}(s)}\right)P_S(s)H(V^{\rightarrow}_{\ell_1}(s))\\
&+ \sum_{s \in \mathcal{L}(S)}\left(\frac{(1-q) * P_{T_2}V^{\rightarrow}_{\ell_2}(s)}{q * P_{T_1}V^{\rightarrow}_{\ell_1}(s) + (1-q) * P_{T_2}V^{\rightarrow}_{\ell_2}(s)}\right)P_S(s)H(V^{\rightarrow}_{\ell_2}(s))
\end{align*}
Since
\begin{align*}
H \left(\frac{q * P_{T_1}V^{\rightarrow}_{\ell_1}(s)}{q * P_{T_1}V^{\rightarrow}_{\ell_1}(s) + (1-q) * P_{T_2}V^{\rightarrow}_{\ell_2}(s)}, \frac{(1-q) * P_{T_2}V^{\rightarrow}_{\ell_2}(s)}{q * P_{T_1}V^{\rightarrow}_{\ell_1}(s) + (1-q) * P_{T_2}V^{\rightarrow}_{\ell_2}(s)}\right) &\leq 1\\
\left(\frac{q * P_{T_1}V^{\rightarrow}_{\ell_1}(s)}{q * P_{T_1}V^{\rightarrow}_{\ell_1}(s) + (1-q) * P_{T_2}V^{\rightarrow}_{\ell_2}(s)}\right) &\leq 1\\
\left(\frac{(1-q) * P_{T_2}V^{\rightarrow}_{\ell_2}(s)}{q * P_{T_1}V^{\rightarrow}_{\ell_1}(s) + (1-q) * P_{T_2}V^{\rightarrow}_{\ell_2}(s)}\right) &\leq 1
\end{align*}
Then
\begin{align*}
H(T_1 \; |_q \; T_2 \; | \; \MergeLOf{\Lens_1}{\Lens_2}, S) & \leq \sum_{s \in \mathcal{L}(s)}P_S(s) + \sum_{s \in \mathcal{L}(s)}P_S(s)H(V^{\rightarrow}_{\ell_1}(s)) + \sum_{s \in \mathcal{L}(s)}P_S(s)H(V^{\rightarrow}_{\ell_2})(s)\\
&= H(T_1 \; | \; \ell_1, S) + H(T_2 \; | \; \ell_2, S)\\
&\leq 1 + \mathbb{H}^{\rightarrow}(T_1 \; | \; \ell_1, S) + \mathbb{H}^{\rightarrow}(T_2 \; | \; \ell_2, S)
\end{align*}
\item
($S = \REntropyOf{\Regex \Given \InvertOf{\Lens}, \RegexAlt}$)

Observe that $$V^{\rightarrow}_{\InvertOf{\Lens}}(t) = \{s \in \mathcal{L}(S) \; | \; \InvertOf{\Lens}.\PutROf{s}{t} = t\} = \{s \in \mathcal{L}(S) \; | \; \ell.\PutLOf{t}{s} = s\} = V^{\leftarrow}_{\ell}(t)$$
Consequently
$$
H(S \; | \; \InvertOf{\Lens}, T) = \sum_{t \in \mathcal{L}(T)}P_T(t) H(V^{\rightarrow}_{\InvertOf{\Lens}}(t))
= \sum_{t \in \mathcal{L}(T)}P_T(t) H(V^{\leftarrow}_{\ell}(t))
\leq \mathbb{H}^{\leftarrow}(S\; | \; \ell, T)
$$
\end{enumerate}

\end{proof}
\begin{theorem}
If $DS$ is unambiguous, then $\EntropyOf{\DNFRegex}$ is the entropy of $P_{\DNFRegex}$.
\end{theorem}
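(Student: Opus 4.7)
The plan is to prove the result by mutual structural induction on the three grammatical categories of stochastic DNF regular expressions: atoms $\Atom = \PRegexStar{\DNFRegex'}{\Probability}$, sequences $\Sequence$, and DNF regexes $\DNFRegex$. Specifically, I will prove simultaneously that (i) $\EntropyOf{\Atom}$ computes the entropy of $P_{\Atom}$ for any unambiguous $\Atom$, (ii) $\EntropyOf{\Sequence}$ computes the entropy of $P_{\Sequence}$ for any unambiguous $\Sequence$, and (iii) $\EntropyOf{\DNFRegex}$ computes the entropy of $P_{\DNFRegex}$ for any unambiguous $\DNFRegex$. The induction is well-founded because an atom wraps a strictly smaller DNF regex, which in turn is built from strictly smaller sequences and atoms. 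Throughout, I will appeal to the definition $H(X) = -\sum_{x \in X} P(x) \log_2 P(x)$ and the fact that unambiguity gives each string in the language a unique parse, so probability mass never ``collapses'' onto duplicated strings.

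For the DNF case $\DNFRegex = \DNFOf{(\Sequence_1,\Probability_1)\DNFSep\ldots\DNFSep(\Sequence_n,\Probability_n)}$, unambiguity of $\DNFRegex$ implies that the languages $\LanguageOf{\Sequence_i}$ are pairwise disjoint, so for each string $s \in \LanguageOf{\DNFRegex}$ there is a unique $i$ with $s \in \LanguageOf{\Sequence_i}$ and $P_{\DNFRegex}(s) = \Probability_i P_{\Sequence_i}(s)$. Partitioning the sum $-\sum_s P_{\DNFRegex}(s)\log_2 P_{\DNFRegex}(s)$ over these disjoint pieces and expanding $\log_2(\Probability_i P_{\Sequence_i}(s))$ as $\log_2 \Probability_i + \log_2 P_{\Sequence_i}(s)$ yields $\sum_{i} \Probability_i \bigl(\EntropyOf{\Sequence_i} - \log_2 \Probability_i\bigr)$ by the induction hypothesis on each $\Sequence_i$ (using $\sum_{s \in \LanguageOf{\Sequence_i}} P_{\Sequence_i}(s) = 1$). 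This matches the syntactic formula.

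For the sequence case $\Sequence = \SequenceOf{\String_0 \SeqSep \Atom_1 \SeqSep \ldots \SeqSep \Atom_n \SeqSep \String_n}$, unambiguity guarantees each string in $\LanguageOf{\Sequence}$ decomposes uniquely into a concatenation of the fixed constants and $n$ substrings drawn from the $\Atom_i$, so $P_{\Sequence}$ factors as an independent product of the $P_{\Atom_i}$. The entropy of a product of independent distributions equals the sum of the marginal entropies; applying the induction hypothesis on each $\Atom_i$ gives $\sum_{i=1}^n \EntropyOf{\Atom_i}$, as desired. For the atom case $\Atom = \PRegexStar{\DNFRegex}{\Probability}$, the argument mirrors the star case of Theorem~\ref{thm:correct_entropy}: unambiguity of the star gives each string a unique parse as a concatenation of $n \ge 0$ elements of $\LanguageOf{\DNFRegex}$, the geometric weight is $\Probability^n(1-\Probability)$, and the calculation reduces to the identities $\sum_{n \ge 0} \Probability^n(1-\Probability) = 1$ and $\sum_{n \ge 0} n \Probability^n(1-\Probability) = \Probability/(1-\Probability)$, together with the induction hypothesis on $\DNFRegex$.

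The main obstacle is the atom case, where we must interchange an infinite sum over $n$ with the logarithm of a product of probabilities; this requires care to split the $-\log_2(\Probability^n(1-\Probability)\prod_i P_{\DNFRegex}(s_i))$ term into the three pieces $-n\log_2 \Probability$, $-\log_2(1-\Probability)$, and $-\sum_i \log_2 P_{\DNFRegex}(s_i)$ and then re-index the sums. Since this is exactly the calculation already carried out for stochastic regular expressions in the proof of Theorem~\ref{thm:correct_entropy}, I expect it to go through verbatim once the bookkeeping for the DNF structure is in place. A cleaner alternative worth noting is to define an inverse $\ToRegex$ of $\ToDNFRegex$, argue that $\ToRegex(\DNFRegex)$ is an unambiguous empty-subcomponent-free SRE with $P_{\ToRegex(\DNFRegex)} = P_{\DNFRegex}$, and show by a separate syntactic induction that $\EntropyOf{\ToRegex(\DNFRegex)} = \EntropyOf{\DNFRegex}$; the conclusion would then follow immediately from Theorem~\ref{thm:correct_entropy}.
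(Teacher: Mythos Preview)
Your proposal is correct and takes essentially the same approach as the paper: both proceed by mutual structural induction over atoms, sequences, and DNF regular expressions, with each case (geometric star, independent product, disjoint union) handled exactly as in the proof of Theorem~\ref{thm:correct_entropy}, and both explicitly defer the star-case arithmetic to that earlier computation. The only cosmetic difference is that the paper invokes the measure-decomposition identity $H(P)=H(m(P_1),\dots,m(P_n))+\sum_i m(P_i)H(P_i)$ for the DNF case, whereas you unfold the sum directly; these are the same calculation.
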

\begin{proof}
By mutual induction over $DS$, $SQ$ and $A$ (the computations of the entropies follow the same pattern as the respective computations in \cref{CorrectEntropy}).
\begin{enumerate}
\item
($A = DS^{*p}$)

By the induction hypothesis $H(DS) = \mathbb{H}(DS)$, from which it follows that
$$H(DS^{*p}) = \frac{p}{1-p}(H(DS) - \log_2 p) - \log_2 (1-p) = \mathbb{H}(DS^{*p})$$
\item
($SQ = [s_0 \cdot A_1 \cdot \ldots \cdot A_n \cdot s_n]$)

By the induction hypothesis, hence by unambiguity $H(A_i) = \mathbb{H}(A_i)$, thus
$$H([s_0 \cdot A_1 \cdot \ldots \cdot A_n \cdot s_n]) = \sum_{i=1}^n H(A_i) = \sum_{i=1}^n \mathbb{H}(A_i) = \mathbb{H}([s_0 \cdot A_1 \cdot \ldots \cdot A_n \cdot s_n])$$
\item
$(DS = \langle (SQ_1, p_1) \; | \; \ldots \; | \; (SQ_n, p_n)\rangle)$

By the induction hypothesis, $H(SQ_i, p_i) = \mathbb{H}(SQ_i, p_i)$. Using the formula
$$H(P) = H(m(P_1), \ldots m(P_n)) + \sum_{i=1}^n m(P_i)H(P_i)$$
if $P$ is a measure space satisfying $P = P_1 + \ldots + P_n$, then
\begin{align*}
H(\langle (SQ_1, p_1) \; | \; \ldots \; | \; (SQ_n, p_n)\rangle)
&= H(p_1, \ldots, p_n) + \sum_{i=1}^n p_i H(SQ_i, p_i)\\
&= \sum_{i=1}^n p_i \mathbb{H}(SQ_i, p_i) + p_i * \log_2 p_i\\
&= \mathbb{H}(\langle (SQ_1, p_1) \; | \; \ldots \; | \; (SQ_n, p_n)\rangle)
\end{align*}
\end{enumerate}
\end{proof}

\fi
\end{document}

